\documentclass[10pt,reqno]{amsart}
\usepackage{graphicx,multicol}
\usepackage{multirow} 
\usepackage{indentfirst,csquotes}
 \usepackage{array} 

\usepackage[caption=false]{subfig}
\usepackage{amssymb,amsthm,amsmath}
\usepackage{xcolor,paralist,hyperref,fancyhdr,etoolbox}
\newtheorem{theorem}{Theorem}[]
\newtheorem{definition}[theorem]{Definition}

\newtheorem{lemma}[theorem]{Lemma}

\newtheorem{corollary}[theorem]{Corollary}

\usepackage[round]{natbib} 
\newtheorem{rmk}{\bf Remark}
\usepackage[plain,noend]{algorithm2e}

\usepackage[utf8]{inputenc}
\numberwithin{equation}{section}
\usepackage{breakurl}

\hypersetup{ colorlinks=true, linkcolor=blue, filecolor=blue, urlcolor=blue,citecolor = blue }

\usepackage{lipsum}
\usepackage{multicol}

\usepackage{listings}
\begin{document}


\title[]{Distribution-Free test for Changepoint Detection in Angular Mean Direction:  Application in  Finance} 

\maketitle
\begin{center}
\author{Surojit Biswas} \\ 
\email{surojit23$@$iitkgp.ac.in}\\ and\\
\author{Buddhananda Banerjee\footnotetext{corresponding author\/: bbanerjee@maths.iitkgp.ac.in}}\\
\email{bbanerjee@maths.iitkgp.ac.in}\\
\vspace{0.3cm}
\address{Department of Mathematics\\  Indian Institute of Technology Kharagpur, India-$721302$}

\end{center}

\let\thefootnote\relax

\begin{abstract}
In this paper, we propose a distribution-free test for detecting changepoint in the mean direction of angular data. The uncertainty in angular measurements is quantified through the \textit{square of an angle}, derived from the intrinsic geometry of the torus. It is established that, under the null hypothesis, the test statistic distributionally converges to the Kolmogorov distribution, while under the alternative hypothesis, both the consistency of the test and the asymptotic properties of the changepoint estimator are established. Through extensive simulations, we compare the empirical performance of the proposed method with two existing approaches for angular data and further benchmark it against a test based on the circular arc length distance. Finally, we demonstrate the practical utility of our approach by analyzing the timestamps of extreme events in Bitcoin, Ethereum, and Gold price datasets, where the continuous, high-frequency nature of the data is modeled in the circular framework.\\

\keywords{Keywords: Angular Data;  Area element;
Changepoint ; CUSUM ; Finance}
\end{abstract} 

\bigskip
\setlength{\columnsep}{1.5cm} 

\section{Introduction}

Circular or angular data are often encountered in various fields, including finance, where they represent the time of occurrence of cryptocurrency stock maxima or minima, and bioinformatics, where dihedral angles in protein structures are measured. In meteorology, they include wind and sea wave directions, and in climatology, the path of a cyclone. These measurements exhibit circularity, in contrast to linear data, which possess a distinct starting and ending point. Circular data exhibit cyclical behavior, where they repeat their values after completing a full revolution. This cycle often starts and ends at the same point, which is commonly represented as 0 and $2\pi$. The cyclical form of this data presents specific difficulties for statistical analysis and visualization, as conventional methods may not sufficiently consider the periodic characteristics of the data. In order to precisely analyze and understand circular data, specialized approaches such as circular mean, circular variance, and directional statistics are employed. These techniques guarantee that the obtained results are both significant and representative of the underlying patterns. For a more comprehensive understanding of circular data, readers can refer to  \cite{kumar2025directional} and \cite{mardia2000directional}.

Changepoint analysis constitutes a fundamental statistical methodology for identifying unanticipated changes within sequentially ordered observations. Such structural changes may be attributable to shifts in distributional parameters within a given family or, in some cases, a complete change in distributional framework. The presence of changepoints can exert a pronounced influence on standard inferential procedures, potentially rendering conventional analyses invalid or misleading. Accordingly, the principal objective of changepoint analysis is to formally assess, via suitable statistical tests, whether one or more changepoints are present in a data sequence.

A wide range of research has been carried out on the study of changepoint detection for sequences of real-valued random variables. Many methodological advances in this area can be found in works such as \cite{fearnhead2019changepoint, haynes2017computationally, Hovarth_1999, Antoch_1997, Davis_1995}. Beyond the univariate setting, considerable research has also been directed toward changepoint problems involving multivariate or vector-valued observations, with important contributions reported in \cite{gupta2022real, Kirch_2014, Kokoszka_2000, shao2010testing, anastasiou2023generalized, pishchagina2023online, liang2021joint}.
In addition, recent developments have extended changepoint methodologies to functional data, further enriching the theoretical landscape; key references include \cite{horvath2012inference, banerjee2018more, Horman_2010, banerjee2020data}. For a discussion of algorithmic strategies and modern computational approaches to changepoint detection in time series, the survey presented in \cite{gupta2024comprehensive} offers a thorough and insightful overview.

In contrast, the literature addressing changepoint problems for angular (circular) data remains comparatively limited. In the context of angular observations, changepoints may manifest as shifts in the mean direction, changes in concentration parameters, or both, even the distribution. Early contributions include the rank-based methodology proposed by \cite{lombard1986change}, which targets changes in both mean direction and concentration. Subsequently, \cite{grabovsky2001change} investigated modifications in concentration using a refined CUSUM approach. Likelihood-based procedures have also been explored: \cite{ghosh1999change} developed methods specifically for von Mises distributions, while \cite{sengupta2008likelihood} introduced the likelihood-integrated approach. More generally, graph-based changepoint techniques for non-Euclidean data, such as the method proposed by \cite{chen2015graph}, are applicable to circular settings as well.
Collectively, these contributions represent important progress toward the development of rigorous statistical tools for detecting changepoints in angular data; nevertheless, considerable scope remains for comprehensive theoretical and methodological advancement in this domain.\\
\noindent

\paragraph{\textbf{Daily timestamp of extreme values in cryptocurrency and gold:}}

Cryptocurrency is a decentralized digital asset enabling peer-to-peer transactions without central authorities. Using blockchain and cryptographic techniques, it ensures secure, transparent exchanges. Prominent examples like Bitcoin and Ethereum highlight its potential, though high volatility and speculation present challenges for financial modeling, risk management, and regulatory frameworks. Alongside cryptocurrencies, gold serves as a globally recognized safe-haven asset whose price dynamics are influenced by macroeconomic uncertainty, monetary policy, and geopolitical stress. Extreme gold price movements often reflect shifts in global liquidity preferences and risk hedging behavior, complementing cryptocurrency market responses.

Extreme price behaviors in these assets often indicate imbalances in supply and demand, presenting arbitrage opportunities across exchanges or trading venues. The timing of these extremes can influence market sentiment. A low price early in the day may trigger caution, while a late-day low could lead traders to reassess positions before market close. Similarly, an early high may fuel optimism, whereas a late high might prompt profit-taking. Identifying when these fluctuations occur helps pinpoint key support and resistance levels. If extreme price events consistently appear at specific times, traders may adjust their strategies accordingly. Liquidity also plays a critical role in price extremes as low-liquidity periods can lead to sharp movements, while high liquidity conditions tend to yield more stable prices that reflect broader consensus. Understanding the timing of extreme price movements provides valuable insight into market dynamics. Thus, detecting changes in the distribution of timestamps associated with these extremes using changepoint analysis is crucial for enhancing trading strategies and market interpretation.

Changepoint research on the timing of extreme values in cryptocurrency stocks is important for multiple reasons. It aids in the detection of fluctuations in market dynamics, uncovering alterations in trade patterns and external effects. This study strengthens trading methods by aligning them with current market circumstances, improves predictive models for more accurate forecasting, and assists in risk management by emphasizing moments of heightened volatility. Additionally, it offers valuable information on traders' sentiments and enhances market efficiency by identifying and adjusting to patterns.

In summary, changepoint analysis is a powerful tool that enables traders and analysts to enhance decision-making and optimize strategic responses in a constantly evolving market environment. However, the continuous availability of data throughout the year naturally induces a cyclical pattern in the timestamps of extreme values in cryptocurrency and gold markets. Consequently, conventional changepoint methods designed for linear data are not suitable in this context. This limitation motivates the development of specialized approaches for detecting changepoints in circular (angular) domains. In this study, we focus on timestamps of extreme values collected from two major cryptocurrencies, Bitcoin and Ethereum, as well as from historical  (gold) pricing data.

\begin{itemize}
	\item \textbf{Bitcoin:} Bitcoin, launched in 2009 by the pseudonymous Satoshi Nakamoto, is the first and most prominent cryptocurrency. It operates on a decentralized peer-to-peer network using blockchain technology to ensure secure, transparent, and immutable transactions without intermediaries like banks. With a capped supply of 21 million coins, Bitcoin is often viewed as a deflationary asset and dubbed ``digital gold." Its Proof of Work consensus mechanism secures the network and validates transactions. Bitcoin's rise has not only positioned it as a store of value and medium of exchange but also inspired the development of numerous other cryptocurrencies and blockchain-based innovations.

	\item  \textbf{Ethereum:} Introduced by Vitalik Buterin in 2015, Ethereum extends blockchain use beyond digital currency by enabling smart contracts and decentralized applications (dApps). Unlike Bitcoin, Ethereum supports programmable, trustless transactions without intermediaries. Its native token, Ether (ETH), powers the network and incentivizes users. The Ethereum Virtual Machine (EVM) allows developers to build complex dApps across various sectors. Ongoing upgrades, including Ethereum 2.0 and the shift to Proof of Stake, aim to improve scalability, security, and sustainability, solidifying Ethereum’s role in blockchain innovation.

    \item \textbf{Gold}:
Gold is a globally recognized safe-haven asset actively traded in international commodity markets. Unlike cryptocurrencies, gold prices are influenced by macroeconomic uncertainty, inflation expectations, monetary policy decisions, geopolitical tension, and institutional hedging demand. Extreme price events in gold often arise during periods of systemic stress or major economic announcements, making the timing of such events particularly relevant. This complementary behavior relative to cryptocurrency markets provides an additional perspective for changepoint detection in circular timestamp domains.
	
\end{itemize}

\noindent
\textbf{Angular representation of timestamp of extreme values:} We have obtained per-minute historical datasets for Bitcoin, Ethereum, and gold. For the cryptocurrencies, each dataset has columns for the Unix timestamp, date, symbol, opening price, highest price, lowest price, closing price, volume (in crypto), and volume (in the currency used as a base). In contrast, the Gold price dataset contains the date, opening price, highest price, lowest price, closing price, and volume information across multiple timeframes.
We are just concerned with three specific columns: date, highest price (maximum), and lowest price (minimum).
The date column denotes the timestamp, the high column signifies the maximum price observed during the specified time, and the low column indicates the minimum price observed during the specified period. 

Let us consider the data linked to the highest price. Given that our data was collected at per-minute intervals, we obtained 60 data points every hour for each column.
To reduce this data to a single observation each hour, we will select the highest value observed within each hour. As a consequence, there are 24 recorded instances every day. Subsequently, we ascertain the highest value among these 24 hourly measurements to obtain a solitary observation for each day.
To obtain the angle (in radians) corresponding to the timestamp of the daily maximum, we apply the following formula:

\begin{equation}
     \theta= \left[\frac{\displaystyle \arg \max_{1\leq j \leq 24}  \left(\max_{0\leq i \leq 59} v_{i+60(j-1)}\right)}{24\times 60} \times 2\pi\right]\in [0,2\pi),
     \label{data_proc_formula}
\end{equation}
where $v$ is \textit{the highest price} of each minute. If we replace \textit{max }by \textit{min} in the above formula, we will get the angular data corresponding to the \textit{lowest price}. 

The timestamps corresponding to these occurrences are transformed into angular data using the formula in Equation-\ref{data_proc_formula} and visualized using rose plots. The rose plot of the timestamps of occurrence of the lowest price is shown for Figure-\ref{rose_plot_bit_eth} (A) the Bitcoin dataset, (B) the Ethereum dataset, and (C) the Gold dataset, while panel (D) presents the corresponding rose plot for the highest price in the Gold dataset.


\begin{figure}[h!]
	\centering

	\subfloat[ ]{%
		{\includegraphics[trim= 20 20 20 20, clip,width=0.25\textwidth, height=0.25\textwidth]{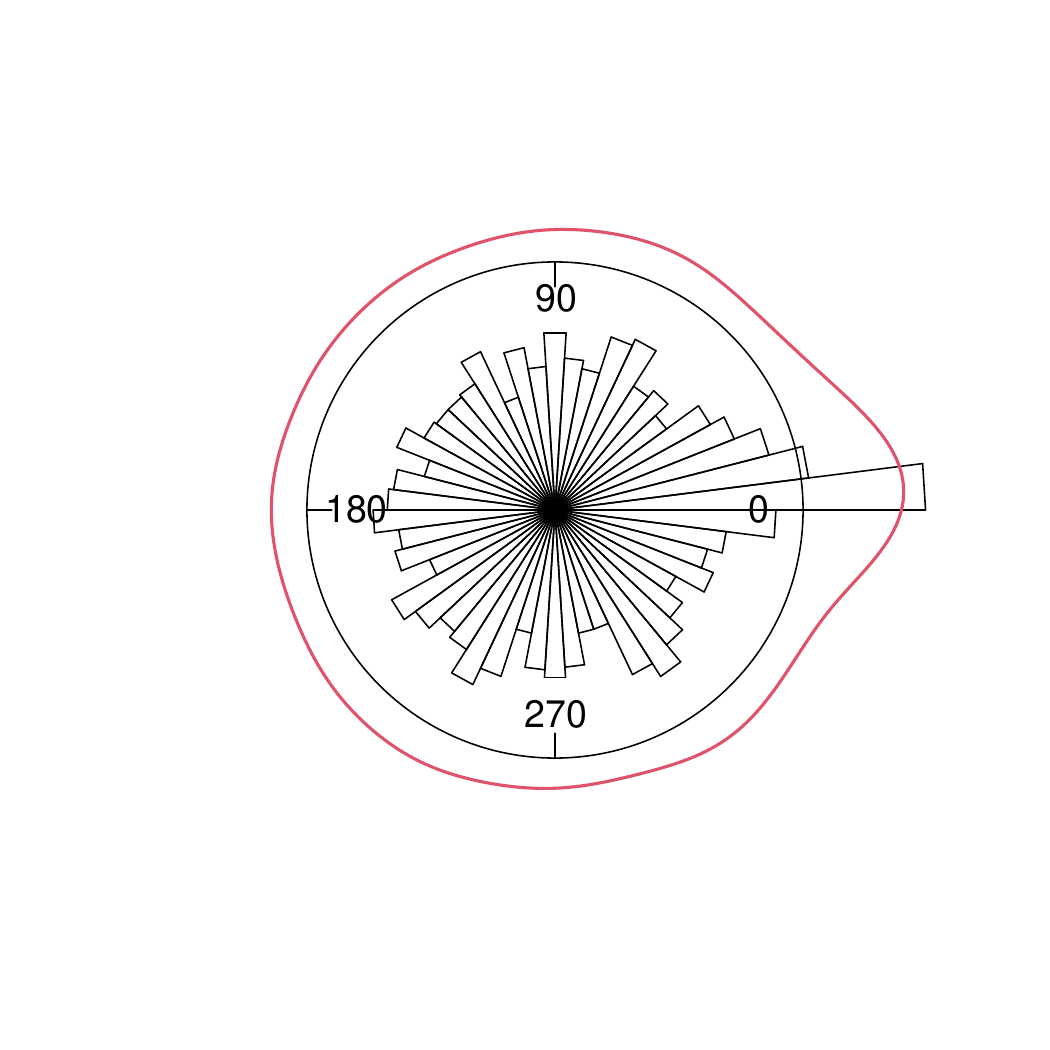}}}
	\subfloat[ ]{%
		{\includegraphics[trim= 20 20 20 20, clip,width=0.25\textwidth, height=0.25\textwidth]{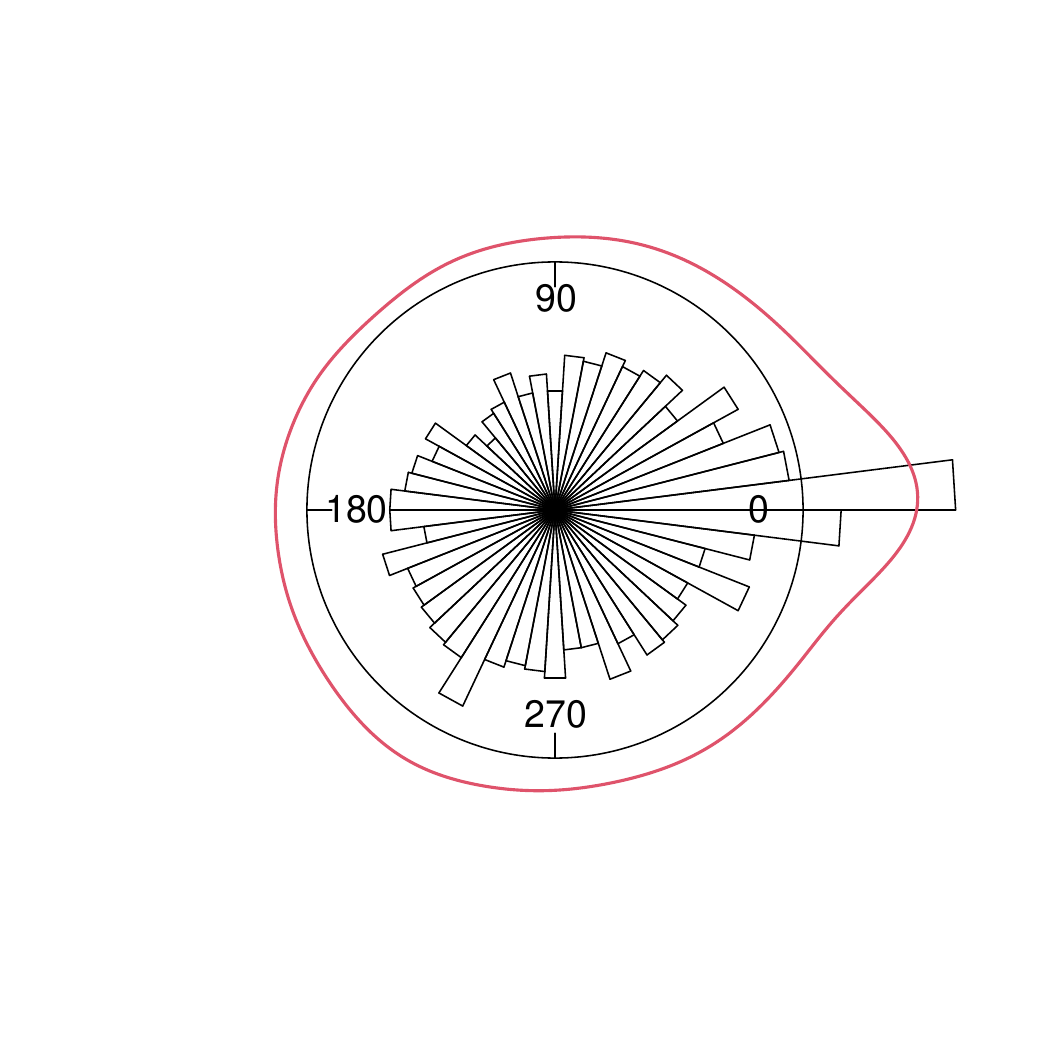}}}
        \subfloat[ ]{%
		{\includegraphics[trim= 20 20 20 20, clip,width=0.25\textwidth, height=0.25\textwidth]{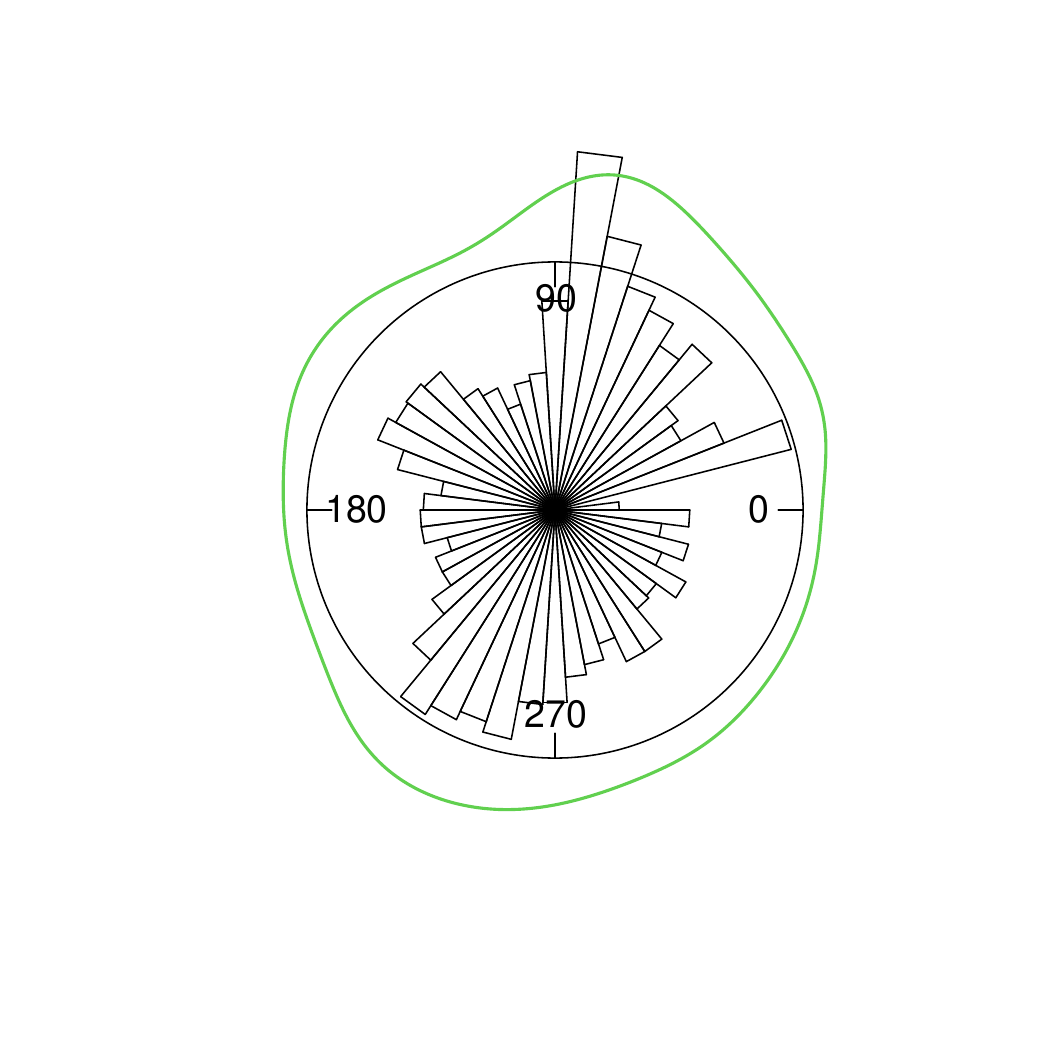}}}
        \subfloat[ ]{%
		{\includegraphics[trim= 20 20 20 20, clip,width=0.25\textwidth, height=0.25\textwidth]{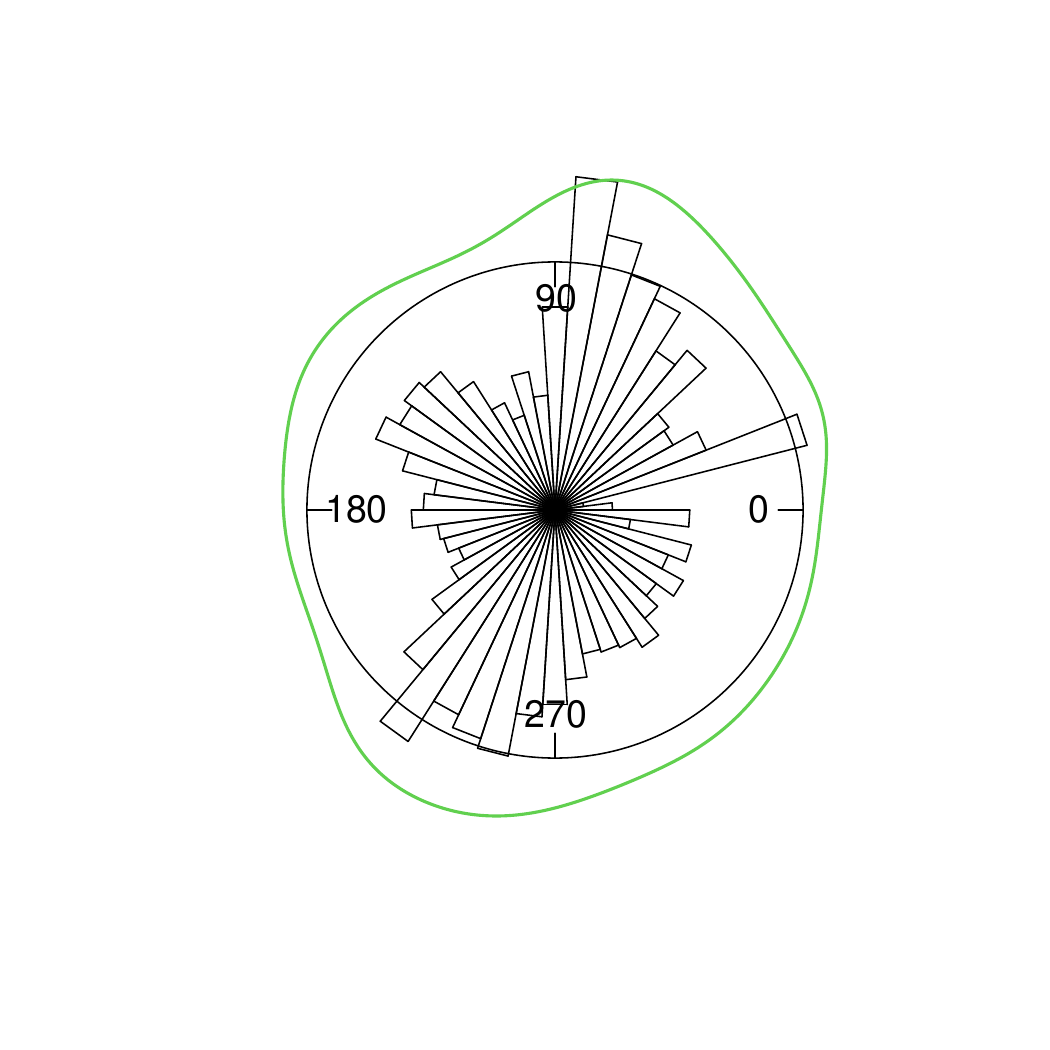}}}
	\caption{Rose plot of the timestamps of occurrence of the \textbf{lowest }price of (A) Bitcoin price dataset; (B) Ethereum price dataset, and (C) Gold price dataset, on the other hand, (D) is the same for the \textbf{highest} price of the Gold price dataset.}
	\label{rose_plot_bit_eth}
\end{figure}


\textit{The main contribution of this paper is the development of a geometry-driven, distribution-free  test for changepoint in  angular mean direction. 
Unlike general-purpose nonparametric approaches 
\cite[see][]{matteson2014nonparametric} or kernel-based methods for linear data
\citep[see][]{harchaoui2007retrospective,arlot2019kernel}, our procedure uses the intrinsic geometry of the torus.  We use a new concept of the ``square of an angle" proposed by \cite{biswas2025semi}, and introduce a method for detecting the changepoint in the mean direction of angular data. In addition, we derive the asymptotic pivotal distribution of the test statistic assuming the null hypothesis with no change, and it is found to be the Kolmogorov distribution. On the other hand, we have shown that the proposed test is consistent under the alternative hypothesis. In the subsequent sections, we used the term \textit{``Square Angle Mean Change (SAMC)"} to refer to this particular test. A comprehensive simulation analysis demonstrates that the SAMC test exhibits significant statistical power, and it shows the superiority over the baseline method constructed using the circular arc length. We also compare the proposed test with two existing tests by \cite{lombard1986change} and \cite{chen2015graph}. The applicability of the proposed SAMC test is demonstrated using the timestamps of two popular cryptocurrency datasets and the Gold price dataset. Although \cite{fernandez2025multivariate} discussed model fitting for cryptocurrency data, no changepoint analysis has been conducted on either cryptocurrency or Gold price datasets. To the best of our knowledge, this is the first study in the literature to address changepoint analysis for these datasets.}

The work is organized as follows.
In Section-\ref{square_angle}, we revisit the definition of the square of an angle and establish its key properties. Section-\ref{changepoint detection} demonstrates how this construction can be utilized to develop a distribution-free test for detecting changes in the mean direction of circular/angular data. The performance of the proposed method is evaluated in Section-\ref{simulation} through an extensive simulation study, using data generated from two widely used circular distributions: the von Mises and the Wrapped Cauchy. Section-\ref{comarison_sec} presents a comparative analysis between the proposed test with two existing tests and a benchmark test based on circular arc length.
In Section-\ref{data_analysis}, we apply the proposed method to real data, specifically the timestamps of extreme values from two major cryptocurrencies, to illustrate its practical relevance. Section-\ref{future work} discusses the limitations of the proposed framework and outlines possible directions for future research. Finally, Section-\ref{conclusion} concludes the paper, followed by the technical proofs of Lemma-\ref{lemma1} (see Appendix-\ref{lemma proof}), Lemma-\ref{bbridge_lemma} (see Appendix-\ref{null_proof}), and Corollary-\ref{consistency_corr} (see Appendix-\ref{alt_consistency_proof}).

\section{Square of an angle}
\label{square_angle} 

The rest of our work will be based on the curved torus defined by the parametric equation 
\begin{equation}
  X(\phi,\theta)=\{  (R+r\cos{\theta})\cos{\phi}, (R+r\cos{\theta})\sin{\phi}, r\sin{\theta} \}\subset \mathbb{R}^3, 
  \label{torus para equn}
\end{equation}
with the parameter space $\{ 
 (\phi,\theta):0<\phi,\theta<2\pi\}= \mathbb{S}_1 \times \mathbb{S}_1,$ known as  the flat torus. 
In Equation-\ref{torus para equn}, \(R\) denotes the major radius of the torus, that is,
the distance from the axis of revolution to the center of the generating
circle, while \(r\) denotes the minor radius, or the radius of the
generating circular cross-section. The coordinate \(\phi\) is the
longitudinal angle around the axis of revolution and \(\theta\) is the
meridional angle around the tube. For a regular embedded ring torus,
\(R>r>0\).

The induced area element of the curved torus (Equation-\ref{torus para equn})  is given by
\begin{equation}
   dA (\phi, \theta)=r(R+r\cos{\theta})~d\phi~d\theta.
    \label{torus area element}
\end{equation}
See \cite{biswas2025semi} for a detailed calculation of the area element using the tools from differential geometry.

Let $ \phi$ and $ \theta$ be the horizontal and vertical angles of a torus, respectively, where $ \phi$ and $ \theta$ are in the range of $ [0,2\pi)$. To start, we establish the area bounded by the coordinates $(0,0)$ and $(\phi,\theta)$ on the surface of a curved torus.  Consider two points on a flat torus, denoted as $(0,0)$ and $(\phi,\theta)$. The flat torus is defined by the interval $[0,2\pi)$ in both the horizontal and vertical directions. The area between these two points, when mapped onto the surface of the curved torus with horizontal radius $R$ and vertical radius $r$, can be calculated using Equation-\ref{torus area element}.  It is important to observe that when considering two locations $(0,0)$ and $(\phi,\theta)$ on a flat torus, the surface of the curved torus is divided into four subsets that are both mutually exclusive and exhaustive. The mapping above decomposition $\mathbb{T}_1:=[0, \phi]\times [0,\theta]$, $\mathbb{T}_2:=[\phi,2\pi]\times [0,\theta] $,  $\mathbb{T}_3:=[0,\phi]\times [\theta,2\pi]$ and $\mathbb{T}_4:=(\phi,2\pi] \times [\theta,2\pi]$ are represented on the surface of a curve torus (Equation-\ref{torus para equn}) in Figure-\ref{area_plot}. The corresponding areas on the surface of the curved torus are denoted as $A_1$, $A_2$, $A_3$, and $A_4$, respectively. \cite{biswas2025semi} have calculated of the areas $A_i=\displaystyle\iint_{\mathbb{T}_i}dA(s,t) ~ \text{for} ~i=1,2,3,4$ using $dA (\phi, \theta)=rR\left(1+\frac{r}{R}\cos{\theta}\right) d\phi ~d\theta$ from Equation-\ref{torus area element} .

Analogous to the notion of circular distance - which is the length of the smaller arc between two angles, and the notion of geodesic distance on a surface - which is the length of the shortest path joining two points on the surface, \cite{biswas2025semi} have defined the \textit{proportionate area} included between these points $(0,0)$, $(\phi,\theta)$ as given   Definition-\ref{torus area of a segemnt} and 
\textit{the square of an angle} $\theta$ as given in Definition-\ref{zero centered circle area of a segemnt} below. 
\begin{definition}
	The \textit{proportionate area included between the   $(0,0)$, and $(\phi,\theta)$ } is defined as 
	
	$$A_T\left[(0,0),(\phi,\theta)\right]=\frac{\min \{A_1,A_2,A_3,A_4\}}{4\pi^2rR}.
	\label{torus area of a segemnt}
	$$
\end{definition}

\begin{definition} \textit{The square of an angle} $\theta$ is defined as
	$$A_C^{(0)}(\theta)=A_T^{(0)}(\theta,\theta)=A_T\left[(0,0),(\theta,\theta)\right], \mbox{~where~} \frac{r}{R}=1.$$
	\label{zero centered circle area of a segemnt}
\end{definition}

\begin{lemma}
	
	 $A_C^{(0)}(\theta)=\begin{cases}
			(2\pi)^{-2}~ \theta\left[\theta +  \sin{\theta} \right] & \text{~if~} 0<\theta\leq \pi\\
			(2\pi)^{-2}~ (2\pi-\theta)\left[2\pi-(\theta+\sin{\theta})\right] & \text{~if~} \pi<\theta\leq 2\pi,
		\end{cases}$
	\label{lemma1}
\end{lemma}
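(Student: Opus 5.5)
The plan is to compute the four areas $A_1,\dots,A_4$ explicitly at the point $(\phi,\theta)=(\theta,\theta)$ with $r/R=1$, and then identify which one is smallest in each of the two ranges of $\theta$. With $r=R$ the area element of Equation-\ref{torus area element} is $dA(s,t)=rR(1+\cos t)\,ds\,dt$. The normalized areas $a_i:=A_i/(4\pi^2 rR)$ are therefore $(2\pi)^{-2}$ times a product of an $s$-length and the integral $\int(1+\cos t)\,dt$ over the relevant $t$-interval. Setting $\phi=\theta$ gives
\begin{align*}
(2\pi)^2a_1&=\theta(\theta+\sin\theta), & (2\pi)^2a_2&=(2\pi-\theta)(\theta+\sin\theta),\\
(2\pi)^2a_3&=\theta(2\pi-\theta-\sin\theta), & (2\pi)^2a_4&=(2\pi-\theta)(2\pi-\theta-\sin\theta).
\end{align*}
The $t$-integrals over $[0,\theta]$ and $[\theta,2\pi]$ are $\theta+\sin\theta$ and $2\pi-\theta-\sin\theta$, respectively.

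The key auxiliary fact is that $g(\theta)=\theta+\sin\theta$ is nondecreasing on $[0,2\pi]$, since $g'(\theta)=1+\cos\theta\ge 0$. Moreover $g(0)=0$, $g(\pi)=\pi$ and $g(2\pi)=2\pi$. Hence both $g(\theta)$ and $2\pi-g(\theta)$ are nonnegative. Also $g(\theta)\le \pi\le 2\pi-g(\theta)$ for $\theta\le\pi$, and the reverse inequalities hold for $\theta>\pi$.

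For $0<\theta\le\pi$ we compare the areas in pairs sharing a factor.
\begin{itemize}
\item $a_1\le a_2$, because they share the nonnegative factor $g(\theta)$ and $\theta\le 2\pi-\theta$.
\item $a_3\le a_4$, because they share the nonnegative factor $2\pi-g(\theta)$ and $\theta\le 2\pi-\theta$.
\item $a_1\le a_3$, because they share the factor $\theta$ and $g(\theta)\le 2\pi-g(\theta)$.
\end{itemize}
So $\min\{a_1,\dots,a_4\}=a_1$, which is the first case of Lemma-\ref{lemma1}.

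For $\pi<\theta\le 2\pi$ the roles are mirrored.
\begin{itemize}
\item $a_4\le a_3$, since $2\pi-\theta<\theta$ with the common factor $2\pi-g(\theta)$.
\item $a_2\le a_1$, since $2\pi-\theta<\theta$ with the common factor $g(\theta)$.
\item $a_4\le a_2$, since $2\pi-g(\theta)\le g(\theta)$ with the common factor $2\pi-\theta$.
\end{itemize}
So the minimum is $a_4=(2\pi)^{-2}(2\pi-\theta)\bigl[2\pi-(\theta+\sin\theta)\bigr]$, which is the second case.

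The only step needing care is the comparison $g(\theta)$ versus $2\pi-g(\theta)$, i.e.\ locating where $\theta+\sin\theta=\pi$. The monotonicity of $g$ together with $g(\pi)=\pi$ settles it, so I do not expect a genuine obstacle. The remaining work is bookkeeping of the four integrals and checking that the boundary $\theta=2\pi$ is consistent, where the formula gives value $0$.
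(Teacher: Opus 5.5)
Your proposal is correct and follows essentially the same route as the paper: compute $A_1,\dots,A_4$ at $(\theta,\theta)$ with $r/R=1$, then show that $A_1$ is minimal for $0<\theta\le\pi$ and $A_4$ is minimal for $\pi<\theta\le 2\pi$. The differences are minor. You chain the comparisons through shared factors using the monotonicity of $\theta+\sin\theta$, which also supplies the sign facts the paper asserts without justification (for instance in its direct $A_1$ versus $A_4$ expansion). You also write out the case $\pi<\theta\le 2\pi$ explicitly, whereas the paper omits it.
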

\begin{proof}
	For proof see  Appendix-\ref{lemma proof}
\end{proof}

\begin{figure}[h!]
	\centering
	\subfloat[ ]{%
		{\includegraphics[trim= 0 0 0 0, clip, width=0.45\textwidth, height=0.35\textwidth]{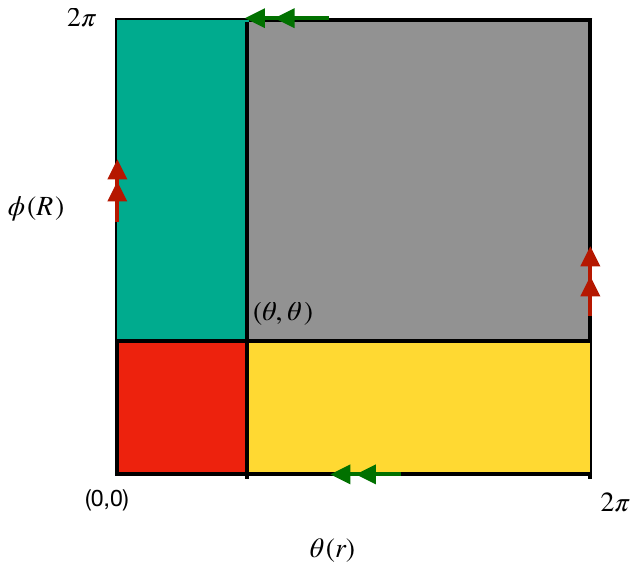}}}
	\subfloat[]{%
		{\includegraphics[trim= 70 70 70 70, clip, width=0.5\textwidth, height=0.4\textwidth]{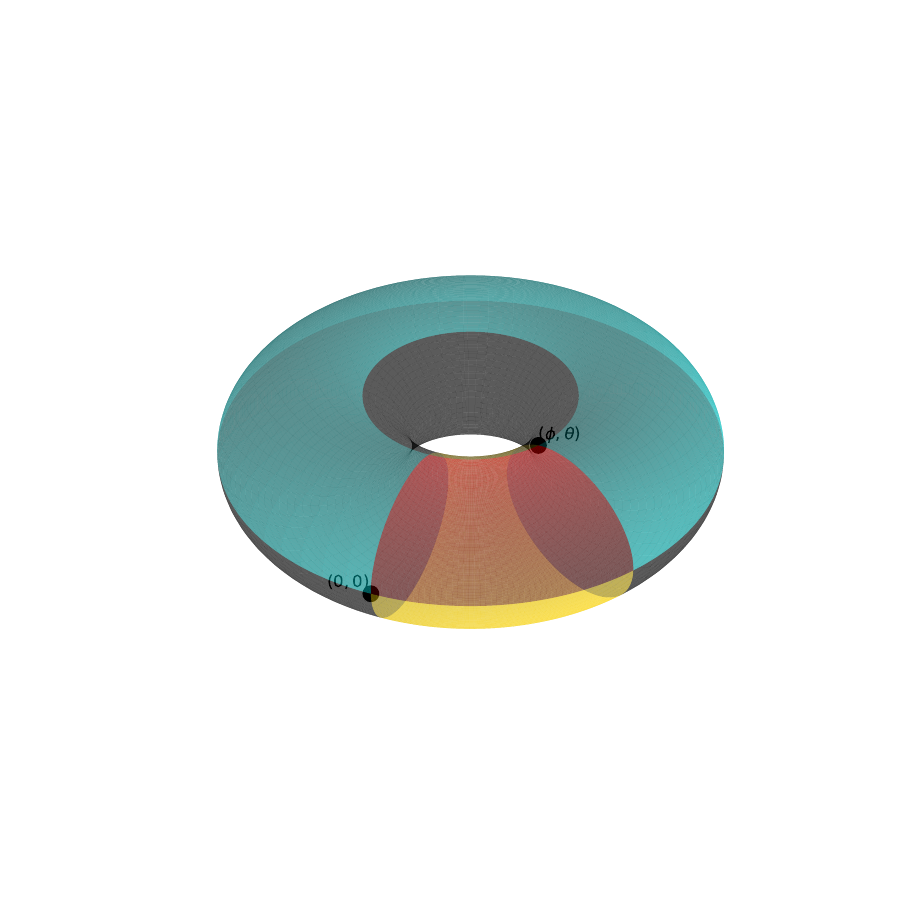}}}\hspace{5pt}
	
	\caption{Decomposition of the surface area between two points $(0, 0)$ and $(\phi, \theta)$ (a) on a flat torus and (b) on a curved torus.}
	\label{area_plot}
\end{figure}

The geometric and statistical motivation behind the ``square of an angle" is
to construct a test statistic for detecting a changepoint in the mean direction of angular data; it is useful to define a measure of separation that parallels the role of squared Euclidean distance in linear settings. While arc length is the most familiar notion of distance on the circle, it is linear in the angular difference and therefore does not provide the quadratic structure needed for variance-type statistics used in CUSUM. To address this, we introduce the concept of the \emph{square of an angle}, derived from the intrinsic geometry of the torus. This measure, independent of any underlying distribution, offers several advantages: it is geometrically consistent in the sense that it arises naturally from the manifold structure, just as squared Euclidean distance arises from the inner product in $\mathbb{R}^d$; it provides a quadratic form that enables the construction of variance-like CUSUM statistics, ensuring distribution-free validity; and it amplifies small shifts in mean direction, thereby improving the power of changepoint detection in the same way that squared deviations amplify signal in linear variance-based tests. In this sense, the square of an angle can be viewed as a principled, geometry-driven analogue of squared Euclidean distance that both respects the circular nature of angular data and enhances statistical inference, often yielding better power than traditional arc-length-based methods.

\textcolor{black}{
\begin{rmk}\label{rmk: info_preserv}
Let $g(\theta)
=
\operatorname{sgn}(\theta)A_C^{(0)}(\theta)$, where $\operatorname{sgn}(\theta)$ is defined in Equation-\ref{sign_function}. Now, the mapping \(g\) is strictly monotone on each of the intervals
\([0,\pi]\) and \((\pi,2\pi)\), and the corresponding image intervals are
disjoint. Hence, \(g\) is one-to-one almost everywhere on
\([0,2\pi)\). Since \(g\) is independent of the mean-direction parameter
\(\mu\), the complete transformed observation \(g(\Theta)\) preserves the
Fisher information contained in \(\Theta\). In particular, for a von
Mises distribution with known concentration parameter \(\kappa\),
\[
I_{g(\Theta)}(\mu)
=
I_\Theta(\mu)
=
\kappa\frac{I_1(\kappa)}{I_0(\kappa)}.
\]
Thus, the Fisher-information loss caused solely by the transformation is
zero. This statement should be distinguished from the efficiency of the SAMC
test statistic. SAMC uses only standardized partial sums of
\(g(\Theta_i)\), rather than the complete transformed likelihood.  The likelihood procedure exploits the assumed von Mises model and its
sufficient statistics, whereas SAMC uses a distribution-free approach based on the transformed real-valued sequence. Its practical
efficiency relative to the likelihood benchmark is examined in the revised
local-power simulation study. These results show the trade-off between
model-specific likelihood efficiency and distribution-free calibration,
without claiming uniform superiority of SAMC. For more detailed calculation see Appendix-\ref{info_preserv}.
\end{rmk}}

\section{Changepoint detection }
\label{changepoint detection}
\subsection{Changepoint detection in mean direction }
\label{section_cp_comcentration}

Let $\theta_1,\ldots, \theta_n \in [0,2\pi)$ be independent angular random variables.
We consider the following testing problem : 

\begin{eqnarray}
	H_{0} &:& \theta_{i}\overset{\mathrm{i.i.d.}}{\scalebox{1.5}{$\sim$}} F(\theta;\mu_1) 
	\mbox{~~for all~~} i =1,2,\ldots n, \nonumber\\
	H_{1}&:&   \begin{cases}
		\theta_{i} \overset{\mathrm{i.i.d.}}{\scalebox{1.5}{$\sim$}} F(\theta;\mu_1) & \text{, } 1 \leq i \leq k^*\\
		\theta_{i}\overset{\mathrm{i.i.d.}}{\scalebox{1.5}{$\sim$}} F(\theta;\mu_2)   & \text{, } (k^*+1) \leq i \leq n,
	\end{cases}
	\label{concentration test}
\end{eqnarray}
where $F$ is a circular distribution with a fixed concentration parameter $\kappa$, common for all $\theta_i$, and a mean direction $\mu_1 \neq \mu_2$ under $H_{1}$. Since the hypothesis focuses on the mean direction the concentration parameter is taken to be constant, which is one of the common practice in classical changepoint problems for linear uni-variate data \cite[see][]{gombay1990asymptotic,fotopoulos2010exact}, multivariate data \cite[see][]{Hovarth_1999}, functional data \cite[see][]{berkes2009detecting}, angular data \cite[see][]{ghosh1999change} as well.
Without loss of generality, the sign of the  circular random variable, $ \theta \in [0, 2\pi)$ can be defined as  
\begin{eqnarray}
	sgn(\theta)&=& \begin{cases}
			1& \text{~if~~} 0\leq \theta \leq \pi\\
			-1 & \text{~if~} \theta>\pi,
		\end{cases}
	\label{sign_function}
\end{eqnarray}
Using Definition-\ref{zero centered circle area of a segemnt} and the immediate above equation we can calculate
\begin{eqnarray}
   a_i=sgn(\theta)~A_C^{(0)}[\theta_i],\mbox{~~~~~for~} i=1,2,\ldots,n.
\label{square_area} 
\end{eqnarray}

Note that, $a_i$'s  are i.i.d. real-value random variables under the null hypothesis, $H_{0}$  with variance $$\sigma_{a}^2=Var(a_1)\widehat{=}\frac{1}{n-1} \sum_{i=1}^{n}\left(a_i-\bar{a}\right)^2=\widehat\sigma_a^2, ~~~ \mbox{where}~~~ n\bar{a}=\displaystyle \sum_{i=1}^{n}a_i.$$   

The CUSUM technique is well established in literature \cite[see,][]{fearnhead2019changepoint, cho2015multiple} to construct the test statistic identifying the changepoint. Using the concept of ``square of an angle'', we now propose a test statistic based on CUSUM technique to test the existence of a changepoint in the \textbf{mean direction}, $\left[\tan^{-1*} \left(\frac{E(\sin \Theta)}{E(\cos \Theta)}\right)\right],$ where $\tan^{-1*}$ is the quadrant-specific inverse  is defined as  $$\tan^{-1*}(y, x) =
	\begin{cases}
		\arctan\left(\frac y x\right) &\text{if } x > 0, y \geq 0 \\
		\arctan\left(\frac y x\right) + \pi &\text{if } x < 0 , \\
		\arctan\left(\frac y x\right) + 2\pi &\text{if } x \geq 0 \text{ and } y < 0, \\
		+\frac{\pi}{2} &\text{if } x = 0 \text{ and } y > 0, \\
		\text{undefined} &\text{if } x = 0 \text{ and } y = 0.
	\end{cases}$$
For more details, see p-13, Equation-1.3.5 \cite{jammalamadaka2001topics}. Here of $E(\sin \Theta)=\int_{0}^{2\pi} \sin\theta f(\theta) d\theta$ for a  suitable probability density function, $f(\theta)$ of a circular random variable $\Theta.$ Similarly, $E(\cos \Theta)$ can be defined.

 Now, we define a CUSUM process as
\begin{equation}
    T(k)=\frac{1}{\sqrt{n}~\widehat\sigma_a }\left[ \sum_{i=1}^{k} a_i-k\bar{a}   \right], \mbox{~~~~~for all~~} k=1,\ldots,n 
	\label{cusum process}
\end{equation}
to construct the test statistic
\begin{equation}
	\mathbb{M}_n=\displaystyle \max_{1 \leq k < n}  |T(k)|.
	\label{concentration_test_statistic}
\end{equation}

Hence, we reject the null hypothesis, $H_{0}$, if $\mathbb{M}_n>k_{\alpha}$, where, $k_{\alpha}$ is the upper $\alpha$ point of the exact (or asymptotic) distribution of $\mathbb{M}_n$ under the null hypothesis. The closed-form distribution of $\mathbb{M}_n$ is not available; hence, we need to take recourse to simulation to obtain the cut-off value $k_{\alpha}$. When $n$ is large, the limiting distribution of $\mathbb{M}_n$ can be derived as follows:
Let us consider $u \in (0,1)$, and denote $k= \lfloor{nu}\rfloor$. Hence, from Equation-\ref{cusum process} we can write
\begin{equation}
	T_{n}(u)=T(\lfloor{nu}\rfloor)=\frac{1}{\sqrt{n~} \hat\sigma_{a}} \left[ \sum_{i=1}^{\lfloor{nu}\rfloor} a_i-u \sum_{i=1}^n a_i   \right].
	\label{concentration_asym}
\end{equation}

\textcolor{black}{It is crucial to address the methodological transition from the geometric construction to the asymptotic framework. 
While the derivation of the asymptotic distribution of $\mathbb{M}_{n}$ ultimately relies on the standard functional central limit theorem (FCLT) under mild conditions, the fundamental statistical innovation lies in the geometric transformation that enables this application. 
By projecting the angular data onto the intrinsic geometry of a curved torus, the proposed metric $A_{C}^{(0)}(\theta)$ (and consequently the transformed sequence $a_{i}$) translates the topological structure of the manifold into a real-valued quadratic form. 
This non-trivial geometric mapping resolves the cyclical nature of the data, guarantees finite second moments ($E[a_{i}^{2}] < \infty$), and provides a stationary variance structure under $H_{0}$. 
Consequently, this transformation rigorously satisfies the necessary regularity conditions for the FCLT, enabling the empirical process to pivot around the Kolmogorov distribution irrespective of the underlying circular distribution.} 
Then, with the proper embedding of Skorohod topology in $D[0,1]$ \cite[see][Ch. 3]{billingsley2013convergence}, under the null hypothesis, $H_{0}$, and as $n \rightarrow \infty$, the process $T_n(u)$  converges weakly to $B_{0}(u),$ where $B_{0}(u)$ is the standard Brownian bridge on $[0,1].$ Hence, we get the following lemma.

\begin{lemma}
	Under the null hypothesis, $H_0$,
	\begin{equation}
		\mathbb{M}_n  \overset{d}{\to} \displaystyle \sup_{0<u<1} |B_{0}(u)|=K_\infty,
		\label{bbridge}
	\end{equation}
	where $K_\infty$ follows the \textit{Kolmogorov distribution}.
	\label{bbridge_lemma}
\end{lemma}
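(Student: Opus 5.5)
The plan is to reduce Lemma-\ref{bbridge_lemma} to Donsker's invariance principle for i.i.d.\ real sequences, and then apply the continuous mapping theorem.

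\textbf{Step 1 (moments of $a_i$).} Under $H_0$ the $\theta_i$ are i.i.d., so the $a_i=sgn(\theta_i)A_C^{(0)}(\theta_i)$ are i.i.d.\ as measurable functions of i.i.d.\ variables. By Lemma-\ref{lemma1}, $A_C^{(0)}$ is bounded on $[0,2\pi)$. Each branch is at most $(2\pi)^{-2}\pi(\pi+1)$, and in particular $|a_i|\le 1$. Hence $E a_1^2<\infty$.

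I also need $\sigma_a^2>0$, i.e.\ $a_1$ is non-degenerate. Suppose $a_1$ were a.s.\ constant. By Remark-\ref{rmk: info_preserv}, $g$ is injective almost everywhere, so $\theta_1$ would then be a.s.\ constant. This is excluded for a circular distribution $F$ with finite concentration, so I will state non-degeneracy as the standing assumption.

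\textbf{Step 2 (FCLT for the centred process).} Write $\mu_a=Ea_1$ and set $S_n(u)=(\sqrt n\,\sigma_a)^{-1}\sum_{i\le\lfloor nu\rfloor}(a_i-\mu_a)$. Donsker's theorem \cite[][Ch.~3]{billingsley2013convergence} gives $S_n\Rightarrow W$ in $D[0,1]$ with the Skorohod topology, where $W$ is standard Brownian motion.

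Observe that
\[
\sum_{i\le \lfloor nu\rfloor}a_i-u\sum_{i\le n}a_i=\sum_{i\le \lfloor nu\rfloor}(a_i-\mu_a)-u\sum_{i\le n}(a_i-\mu_a)+(\lfloor nu\rfloor-nu)\mu_a .
\]
The last term is $O(1)$ uniformly in $u$, so after dividing by $\sqrt n$ it vanishes. Therefore
\[
\frac{\sigma_a}{\hat\sigma_a}\,T_n(u)=S_n(u)-uS_n(1)+o(1)
\]
uniformly in $u$. The map $x\mapsto x(\cdot)-(\cdot)x(1)$ is continuous from $D[0,1]$ into $D[0,1]$ at continuous limits. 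By the continuous mapping theorem, $S_n-uS_n(1)\Rightarrow W(u)-uW(1)=B_0(u)$.

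\textbf{Step 3 (replace $\sigma_a$ by $\hat\sigma_a$).} By the strong law of large numbers applied to $a_i$ and $a_i^2$ (finite moments from Step 1), $\hat\sigma_a^2\to\sigma_a^2>0$ almost surely. Slutsky's lemma in $D[0,1]$ then yields $T_n\Rightarrow B_0$.

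\textbf{Step 4 (supremum).} The functional $x\mapsto\sup_{0<u<1}|x(u)|$ is continuous in the Skorohod topology at every continuous $x$, and $B_0$ has continuous paths a.s. Also $\mathbb M_n=\max_{1\le k<n}|T(k)|=\sup_{0<u<1}|T_n(u)|$, up to the endpoint terms: $T(n)=0$, and $T_n(u)=0$ for $u<1/n$, so neither affects the maximum. Hence $\mathbb M_n\overset{d}{\to}\sup_{0<u<1}|B_0(u)|$.

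The law of this supremum is the Kolmogorov distribution,
\[
P(K_\infty\le x)=1-2\sum_{j\ge1}(-1)^{j-1}e^{-2j^2x^2}.
\]
This is classical (Doob's derivation via the reflection principle), and I will cite it rather than reprove it.

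\textbf{Main obstacle.} There is essentially no analytic difficulty, since boundedness of $A_C^{(0)}$ makes the moment conditions trivial. The only delicate points are (i) justifying $\sigma_a^2>0$, which needs a mild non-degeneracy assumption on $F$, and (ii) handling the Skorohod-topology continuity of the bridge and supremum maps. Point (ii) I will settle by noting that the limit lies in $C[0,1]$, where Skorohod convergence is equivalent to uniform convergence.
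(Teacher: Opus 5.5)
Your proposal is correct and follows the same route as the paper's proof. Both use Donsker's theorem for the centred partial sums, the bridge map $x\mapsto x(\cdot)-(\cdot)\,x(1)$, and Slutsky to replace $\sigma_a$ by $\hat\sigma_a$. The paper centres via $b_i=a_i-E(a_i)$ and the exact identity $T_b(k)=T(k)$. You additionally supply details the paper leaves implicit: boundedness of the $a_i$, non-degeneracy of $\sigma_a^2$, and the continuous-mapping step for the supremum.

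There is one harmless slip in Step 2. The left-hand side should be $(\hat\sigma_a/\sigma_a)\,T_n(u)$, not $(\sigma_a/\hat\sigma_a)\,T_n(u)$. This does not affect the argument, since the ratio tends to $1$ either way.
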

The outline of the proof is provided in Appendix-\ref{null_proof}.


\begin{rmk}
\textcolor{black}{Throughout, we assume that the circular observations
\(\theta_1,\ldots,\theta_n\) are independent and identically distributed
and that the transformed variables
$a_i=\operatorname{sgn}(\theta_i)A_C^{(0)}(\theta_i)$
have finite second moments, i.e.,
\(E(a_i^2)<\infty\). Under these conditions, the standardized partial-sum
process of the \(a_i\)'s converges weakly to a Wiener process, and the
corresponding CUSUM process converges to a standard Brownian bridge.
Consequently,
$\mathbb{M}_n
\xrightarrow{d}
\sup_{0<u<1}|B_0(u)|,$
which follows the Kolmogorov distribution.}

\textcolor{black}{For finite samples, the critical value is calibrated according to the
length of the particular sequence being tested. Specifically, for a
segment containing \(m\) observations, we use the finite-grid Kolmogorov
critical value
$K_{\infty}^{(m)}$
obtained from Algorithm-\ref{alg:algo_cutoff}. Thus, the complete sequence
uses \(K_{\infty}\)=\(K_{\infty}^{(n)}\), whereas a subsegment of length \(m\)
uses \(K_{\infty}^{(m)}\).
In the empirical post-detection analysis, a resulting subsegment is
examined further only when its length exceeds the prescribed minimum
segment length \(n/10\), where \(n\) denotes the length of the original
sequence. If the segment is shorter than or equal to this threshold, no
further test is performed on that segment. This minimum-length rule is used
only for the empirical post-detection stability checks.}
\end{rmk}

Now, we can compute the upper-$\alpha$ value, $k_{\alpha},$ from the above limiting random variable $K_\infty$ of the test statistic, $\mathbb{M}_n$. The corresponding large sample approximation is discussed in Section-\ref{simulation}. Under the alternative hypothesis, $H_1$, the proposed test is consistent, which is indicated by the following corollary.


\begin{theorem}
\textcolor{black}{Let $k^* = \lfloor n u^* \rfloor$ with $u^* \in (0,1)$ be the location of the true single changepoint and $\hat{k}^*$ be the estimated location obtained by 
\[
\hat{k}^* = \arg\max_{1 \leq k < n} \left| T(k) \right|.
\]
Then, $\frac{\hat{k}^*}{n} \xrightarrow{p} u^*$ under $H_1$.}
\label{loc_consistancy}
\end{theorem}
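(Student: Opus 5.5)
Under $H_1$ the transformed variables $a_i=\operatorname{sgn}(\theta_i)A_C^{(0)}(\theta_i)$ are independent, i.i.d.\ with mean $\nu_1=E_{\mu_1}(a_1)$ for $i\le k^*$ and with mean $\nu_2=E_{\mu_2}(a_1)$ for $i>k^*$. Their variances $\sigma_1^2,\sigma_2^2$ are finite, because $|a_i|\le 1$ ($A_C^{(0)}$ is a proportion of area). The argument needs the identifiability condition $\delta:=\nu_1-\nu_2\neq 0$, i.e.\ the shift in mean direction actually moves the mean of the transformed sequence. I will state this as the standing assumption under which $H_1$ is detectable; it is the same condition the consistency result (Corollary~\ref{consistency_corr}) relies on. With it, the proof becomes a standard argmax-consistency argument for a CUSUM with a mean shift.

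\textbf{Step 1 (decomposition).} Write $a_i=\nu_i+\varepsilon_i$, where $\nu_i=\nu_1$ for $i\le k^*$ and $\nu_i=\nu_2$ otherwise, and the $\varepsilon_i$ are independent and centred. Then
\begin{equation*}
\sqrt{n}\,\widehat\sigma_a\,T(k)=\Big(\sum_{i=1}^k\nu_i-\tfrac{k}{n}\sum_{i=1}^n\nu_i\Big)+\Big(\sum_{i=1}^k\varepsilon_i-\tfrac{k}{n}\sum_{i=1}^n\varepsilon_i\Big).
\end{equation*}
A direct computation gives the deterministic part as $n\,\delta\,g_n(k/n)$, where
\begin{equation*}
g(u)=u(1-u^*)\ \text{for } u\le u^*,\qquad g(u)=u^*(1-u)\ \text{for } u>u^*,
\end{equation*}
and $\sup_u|g_n(u)-g(u)|=O(1/n)$.

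\textbf{Step 2 (scaling).} Divide by $n$. The stochastic part satisfies $\max_k\big|\sum_{i\le k}\varepsilon_i\big|=O_p(\sqrt n)$, by Kolmogorov's maximal inequality applied separately on the two independent blocks. Hence
\begin{equation*}
\sup_{u\in(0,1)}\Big|\tfrac{\sqrt n\,\widehat\sigma_a}{n}\,T_n(u)-\delta\, g(u)\Big|=O_p(n^{-1/2}).
\end{equation*}
The factor $\widehat\sigma_a$ does not depend on $k$, so it plays no role in the argmax: $\hat k^*$ is also the maximiser of $|\sqrt n\,\widehat\sigma_a T(k)|/n$. We still need $\widehat\sigma_a$ to be positive. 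Under $H_1$ it converges in probability to a strictly positive constant, namely the mixture variance $u^*\sigma_1^2+(1-u^*)\sigma_2^2+u^*(1-u^*)\delta^2$. This is all that is required.

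\textbf{Step 3 (argmax continuity).} The function $|\delta|\,g(u)$ is continuous and has a unique, well-separated maximum at $u^*$. For every $\epsilon>0$,
\begin{equation*}
\sup_{|u-u^*|\ge\epsilon}|\delta|\,g(u)\le |\delta|\,g(u^*)-|\delta|\,c\,\epsilon,
\end{equation*}
where $c=\min(1-u^*,u^*)>0$, since $g$ is piecewise linear with these slopes. On the event $\{|\hat k^*/n-u^*|\ge\epsilon\}$, the empirical criterion at $\hat k^*/n$ is at least its value at $k^*/n$. Combining this with the uniform approximation of Step 2 gives $|\delta|\,c\,\epsilon\le 2\cdot O_p(n^{-1/2})+O(1/n)$. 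The probability of this event therefore tends to zero, which is the standard argmax theorem; see van der Vaart, Thm 5.7, in the discrete form. This proves $\hat k^*/n\xrightarrow{p}u^*$.

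The main obstacle is not the probabilistic part, which is routine. It is justifying $\delta\neq0$: showing that a change $\mu_1\neq\mu_2$ in mean direction forces $E_{\mu_1}(a_1)\neq E_{\mu_2}(a_1)$. My first attempt would be to note that $a=\operatorname{sgn}(\theta)A_C^{(0)}(\theta)$ is an odd-type function about $0$ in the sense that $a(2\pi-\theta)=-a(\theta)$. For a symmetric unimodal $F$, $E_\mu(a_1)$ is then a continuous odd function of $\mu$ about $0$, which should be strictly monotone on $(-\pi/2,\pi/2)$-type ranges. For general $\mu_1,\mu_2$ this may fail, because for example $\mu$ and $\pi-\mu$ can give equal means. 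So I would state $\delta\neq0$ explicitly as the identifiability assumption, as in Corollary~\ref{consistency_corr}, rather than try to derive it in full generality.
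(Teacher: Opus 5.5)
Your proposal is correct and follows essentially the same route as the paper's proof. You split the CUSUM into a centred part plus the deterministic drift $\min\{u,u^*\}(1-\max\{u,u^*\})\Delta$, bound the centred part uniformly, and conclude by an argmax argument, under the same identifiability condition $\Delta\neq 0$ that the paper uses without justification. That condition is genuinely needed: for symmetric $F$ the oddness $a(2\pi-\theta)=-a(\theta)$ forces $E_0(a_1)=E_\pi(a_1)=0$. (Your side example ``$\mu$ and $\pi-\mu$'' is exact only at $\mu=0$, but that does not affect the argument.)

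Your execution is in fact tidier than the paper's, in several ways:
\begin{itemize}
\item You normalise by $n$, so the criterion converges uniformly to $|\delta|\,g(u)$ and the argmax step is rigorous.
\item You use Kolmogorov's maximal inequality in place of the FCLT.
\item You observe that $\widehat\sigma_a$ drops out of the argmax, whereas the paper silently discards the factor $\sigma_b/\widehat\sigma_a$ on the drift term.
\item You work with $|\delta|g$ rather than $c_*$, so the sign of $\Delta$ is irrelevant.
\item You allow different pre- and post-change variances of $a_i$, which the paper implicitly assumes equal.
\end{itemize}
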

\begin{proof}
The proof is provided in  Appendix-\ref{loc_consistancy_proof}
\end{proof}

\begin{corollary}
     Under $H_0$, the test statistic, $ \mathbb{M}_n$ we have the following:
$$\lim_{n \rightarrow\infty} \mathbb{P}_{H_0}\left(\mathbb{M}_n\geq k_\alpha  \right)\rightarrow \alpha.$$
\label{level_corr_cir}
\end{corollary}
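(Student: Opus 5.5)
The claim follows from Lemma~\ref{bbridge_lemma} once we know that the limit law $K_\infty$ has a continuous distribution function. I will take $k_\alpha$ to be the upper-$\alpha$ point of the limiting variable $K_\infty=\sup_{0<u<1}|B_0(u)|$, so that $\mathbb{P}(K_\infty\ge k_\alpha)=\alpha$. The plan has three steps.

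\textbf{Step 1: the null limit.} Under $H_0$ the $a_i$ are i.i.d. with finite second moment. Indeed, $|a_i|\le A_C^{(0)}(\pi)=\pi(\pi+0)/(4\pi^2)=1/4$ by Lemma~\ref{lemma1}, so the $a_i$ are even bounded. We also need $\sigma_a^2>0$, which holds whenever the law of $\theta_1$ is nondegenerate. The standardized partial-sum process converges to a Wiener process by Donsker's theorem. The sample variance $\widehat\sigma_a^2\to\sigma_a^2$ almost surely by the strong law, so Slutsky's lemma lets us replace $\sigma_a$ by $\widehat\sigma_a$. Then $T_n(\cdot)\Rightarrow B_0(\cdot)$ in $D[0,1]$. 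The map $x\mapsto\sup_{u}|x(u)|$ is continuous at continuous paths in the Skorohod topology, and $\mathbb{M}_n$ coincides with $\sup_u|T_n(u)|$ up to the endpoint terms $k=0$ and $k=n$, which vanish. The continuous mapping theorem then gives $\mathbb{M}_n\overset{d}{\to}K_\infty$. This is exactly Lemma~\ref{bbridge_lemma}, which we assume.

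\textbf{Step 2: continuity of the limit law.} The Kolmogorov distribution function is
\[
\mathbb{P}(K_\infty\le x)=1-2\sum_{j\ge1}(-1)^{j-1}e^{-2j^2x^2}, \qquad x>0.
\]
It is continuous, and strictly increasing on $(0,\infty)$. Hence $k_\alpha$ is uniquely defined, and every point, in particular $k_\alpha$, is a continuity point of the limit law.

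\textbf{Step 3: passing to the limit.} By the portmanteau theorem, convergence in distribution yields $\mathbb{P}(\mathbb{M}_n\ge k_\alpha)\to\mathbb{P}(K_\infty\ge k_\alpha)$. The boundary $\{x=k_\alpha\}$ of the set $\{x\ge k_\alpha\}$ has $K_\infty$-probability zero. Therefore $\mathbb{P}_{H_0}(\mathbb{M}_n\ge k_\alpha)\to\alpha$.

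The only delicate point is in Step 1: justifying that the random normalization $\widehat\sigma_a$, and the discrete maximum over $1\le k<n$, do not alter the limit. The variance estimator is handled by consistency plus Slutsky. The discretization is handled by noting that $T_n$ is a step function that agrees with $T(k)$ on the grid, so its supremum equals $\mathbb{M}_n$ exactly. If instead $k_\alpha$ is taken as the finite-$n$ simulated cutoff $K_\infty^{(n)}$, one adds the observation that $K_\infty^{(n)}\to k_\alpha$. Continuity of the limiting distribution function at $k_\alpha$ then again gives the result.
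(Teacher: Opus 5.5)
Your proposal is correct and follows the same route as the paper, which simply invokes Lemma~\ref{bbridge_lemma} and declares the conclusion straightforward. You add the details the paper leaves implicit. First, the Kolmogorov distribution function is continuous at $k_\alpha$, so the portmanteau theorem applies to the event $\{\mathbb{M}_n\ge k_\alpha\}$. Second, the conclusion persists if $k_\alpha$ is replaced by the finite-grid cutoff $K_\infty^{(n)}$, since $K_\infty^{(n)}\to k_\alpha$.
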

\begin{proof}
    As we obtained in Lemma-\ref{bbridge_lemma}, the asymptotic distribution of the
test statistic is Kolmogorov distribution in general under the null hypothesis. So, the
proof is straightforward.
\end{proof}

\begin{corollary}
	For a fixed value of $\alpha \in (0,1)$, the probability of Type-II error   goes to zero exponentially, as the sample size increases, i.e.
	
	$$\lim_{n \rightarrow\infty} \mathbb{P}_{H_1}\left(\mathbb{M}_n< k_\alpha  \right) \rightarrow 0.$$
	\label{consistency_corr}
\end{corollary}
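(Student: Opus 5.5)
The plan is to show that under $H_1$ the statistic $\mathbb{M}_n$ grows like $\sqrt{n}$, while $k_\alpha$ stays fixed. The main work is to bound $\mathbb{M}_n$ from below by its value at the true changepoint.

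\textbf{Setup.} Write $\delta = E_{\mu_1}(a_1) - E_{\mu_2}(a_1)$ for the difference in means of the transformed variables before and after the change. We assume $\delta\neq 0$. Remark-\ref{rmk: info_preserv} says $g$ is injective a.e. and depends on the data only through $\Theta$, so a shift in $\mu$ changes the law of $g(\Theta)$. We take the identifiability condition $\delta\neq0$ as the working hypothesis under $H_1$. We also assume $E(a_i^2)<\infty$; this holds automatically since $|a_i|\le 1/4$ is bounded.

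\textbf{Lower bound at $k^*$.} Since $\mathbb{M}_n \ge |T(k^*)|$, it suffices to study
\[
\sqrt{n}\,\widehat\sigma_a\, T(k^*) \;=\; \sum_{i=1}^{k^*}(a_i - E a_i) - \frac{k^*}{n}\sum_{i=1}^{n}(a_i - E a_i) \;+\; \frac{k^*(n-k^*)}{n}\,\delta .
\]
\begin{itemize}
\item The two centered sums are $O_p(\sqrt{n})$ by Chebyshev's inequality, since the variances are bounded.
\item The deterministic term equals $n u^*(1-u^*)\delta(1+o(1))$.
\item Hence $\sqrt{n}\,\widehat\sigma_a T(k^*)/n \xrightarrow{p} u^*(1-u^*)\delta$.
\end{itemize}

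\textbf{Controlling $\widehat\sigma_a$.} Under $H_1$ the pooled sample variance does not converge to the common variance. By the weak law applied to $\sum a_i$ and $\sum a_i^2$ on each segment, it converges in probability to
\[
\sigma_*^2=\sigma_a^2+u^*(1-u^*)\delta^2 .
\]
This limit is finite. It is positive unless $a_1$ is degenerate, which is excluded. Therefore
\[
\frac{|T(k^*)|}{\sqrt n}\xrightarrow{p}\frac{u^*(1-u^*)|\delta|}{\sigma_*}=:c>0 .
\]
This variance inflation under $H_1$ is the step I expect to need the most care. The key point is that $\widehat\sigma_a$ stays bounded in probability, so it cannot absorb the $\sqrt n$ growth.

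\textbf{Conclusion.} Since $k_\alpha$ is a fixed quantile of $K_\infty$ (or of its finite-grid version, which is bounded in $n$),
\[
\mathbb{P}_{H_1}(\mathbb{M}_n<k_\alpha)\le \mathbb{P}\bigl(|T(k^*)|/\sqrt n< k_\alpha/\sqrt n\bigr)\to 0,
\]
because $k_\alpha/\sqrt n\to0<c$.

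For the \emph{exponential} rate claimed in the statement, I would replace Chebyshev's inequality by Hoeffding's inequality, which applies because the $a_i$ are bounded in $[-1/4,1/4]$. This gives:
\begin{itemize}
\item The event $\{|\sum_{i\le k^*}(a_i-Ea_i)|>\epsilon n\}$ has probability at most $2e^{-c'n}$, and the same bound holds for the full sum.
\item For the variance, $\widehat\sigma_a^2\le \frac{n}{n-1}\cdot\frac{1}{16}$ deterministically. So a lower bound on $|T(k^*)|\sqrt n\,\widehat\sigma_a/n$ already suffices, using the trivial upper bound on $\widehat\sigma_a$.
\end{itemize}
Choosing $\epsilon<u^*(1-u^*)|\delta|/4$ then gives Type-II error at most $4e^{-c'n}$ for large $n$.
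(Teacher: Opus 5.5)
Your proof of $\mathbb{P}_{H_1}(\mathbb{M}_n<k_\alpha)\to 0$ follows the paper's. You bound $\mathbb{M}_n$ below by $|T(k^*)|$ and split $T(k^*)$ into an $O_p(1)$ bridge-type noise term plus the drift $\sqrt n\,u^*(1-u^*)\delta/\sigma$. You then note that under $H_1$ the pooled variance estimator stays bounded, so it cannot absorb the $\sqrt n$ growth.

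Where you go beyond the paper is the rate. The paper's proof only shows $T_n(u^*)\to\infty$ in probability and never justifies the word ``exponentially'' in the statement. You use $|a_i|\le 1/4$ to get, via Hoeffding, $\mathbb{P}\bigl(|\sum_{i\le m}(a_i-Ea_i)|>\epsilon n\bigr)\le 2e^{-8\epsilon^2 n}$ for $m\le n$. The deterministic bound $\widehat\sigma_a^2\le n/\{16(n-1)\}$ means you never need probabilistic control of the variance estimator. Together these give a Type-II error of order $e^{-8\epsilon^2 n}$, which actually proves the statement as written.

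Two caveats, neither of which breaks your argument.

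First, $\delta\neq0$ is a real extra hypothesis and does not follow from the Remark. Injectivity of $g$ only says the \emph{law} of $g(\Theta)$ changes, not its mean. In fact $g(2\pi-t)=-g(t)$, so for any continuous $F$ symmetric about its mean direction (e.g.\ von Mises or wrapped Cauchy), $E_{0}\,g(\Theta)=E_{\pi}\,g(\Theta)=0$. Against $\mu_1=0$, $\mu_2=\pi$ the test is therefore not consistent. The paper makes the same tacit assumption (``Since $\Delta\neq0$''), so you were right to state it explicitly.

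Second, $g$ is not rotation-equivariant, so the variance of $a_i$ generally differs across the change. For concentrated data it is tiny at $\mu=0$ and close to $1/16$ at $\mu=\pi$. The limit of $\widehat\sigma_a^2$ is therefore $u^*\sigma_1^2+(1-u^*)\sigma_2^2+u^*(1-u^*)\delta^2$, not $\sigma_a^2+u^*(1-u^*)\delta^2$. Only finiteness of this limit matters, and your Hoeffding version does not use it at all.
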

\begin{proof}
The proof is provided in  Appendix-\ref{alt_consistency_proof}
\end{proof}

\section{Simulation Studies}
\label{simulation}

A comprehensive simulation study was conducted  to detect the changepoint in the mean direction, we have considered the von Mises distribution with the probability distribution function 
\begin{equation}
	f_{\text{vm}}(\theta;\mu,\kappa)=\frac{e^{\kappa\cos(\theta-\mu)}}{2\pi I_{0}(\kappa)},
	\label{von mises}
\end{equation}
where $0\leq \theta<2\pi$, $0\leq \mu<2\pi$, $\kappa>0$, and $ I_{0}(\kappa)$ is the modified Bessel function with order zero evaluated at $\kappa.$ \

The Figure-\ref{fig: null_density_plot} displays a density plot of the  test statistic, $\mathbb{M}_n$ under $H_{0}$
with the sample size of $n=1000$ from the von Mises distribution with the different mean directions $\mu=\frac{\pi}{4},\frac{2\pi}{3},\frac{4\pi}{3},\frac{8\pi}{5}$, and the fixed concentration parameter, $\kappa=3$. It is evident from Figure-\ref{fig: null_density_plot} that the densities of the test statistic for different mean directions are nearly identical, and these are close to the density of the limiting distribution of the random variable $K_{\infty}$ (Equation-\ref{bbridge}).  In Figure-\ref{sb_null_kappa_plot}(a)  $\&$ (b), the distribution of the estimated location of the changepoint and the histogram of the SAMC test statistic, $\mathbb{M}_n$ (Equation-\ref{concentration_test_statistic}) are presented under the null hypothesis, $H_{0}$, respectively. The number of iterations for each specification is conducted $5 \times 10^3$ number of times.

In Table-\ref{table: null cut-off table concentration change}, we present the cut-off values of the test statistic $\mathbb{M}_n$ under the null hypothesis, $H_{0}$. These values are provided for different sample sizes, namely 50, 100, 200, 500, and 1000, drawn from a von Mises distribution with a mean direction of $\mu=0$ and various concentration parameters, specifically $\kappa=0.5, 1, 1.5, 2, 4, $ and $10$.  This table shows that the cut-off is pivotal with respect to the concentration parameter, and hence, it is a distribution-free test.  The Table-\ref{table: null cut-off table mean change}  displays the cut-off values of the test statistic $\mathbb{M}_n$ under the null hypothesis, $H_{0}$. The samples used for this table have sizes of $50, 100, 200, 500,$ and $1000$, and are drawn from a von Mises distribution with a fixed concentration parameter of $\kappa=3,$ and different mean direction parameters $\mu=\frac{\pi}{4},\frac{2\pi}{3},\frac{4\pi}{3}$ and $\frac{8\pi}{5}.$ The results from this table are consistent with the Figure-\ref{fig: null_density_plot}. In both tables, we present the cut-off values obtained from the limiting distribution of $K_\infty$ (Equation-\ref{bbridge}) using grid sizes of $n=50, 100, 200, 500, $ or $ 1000$ accordingly. These values are generally referred to as $K_\infty^{(n)}$. 

\begin{figure*}[h!]
	\centering
	\includegraphics[width=0.6\textwidth,height=0.3\textwidth]{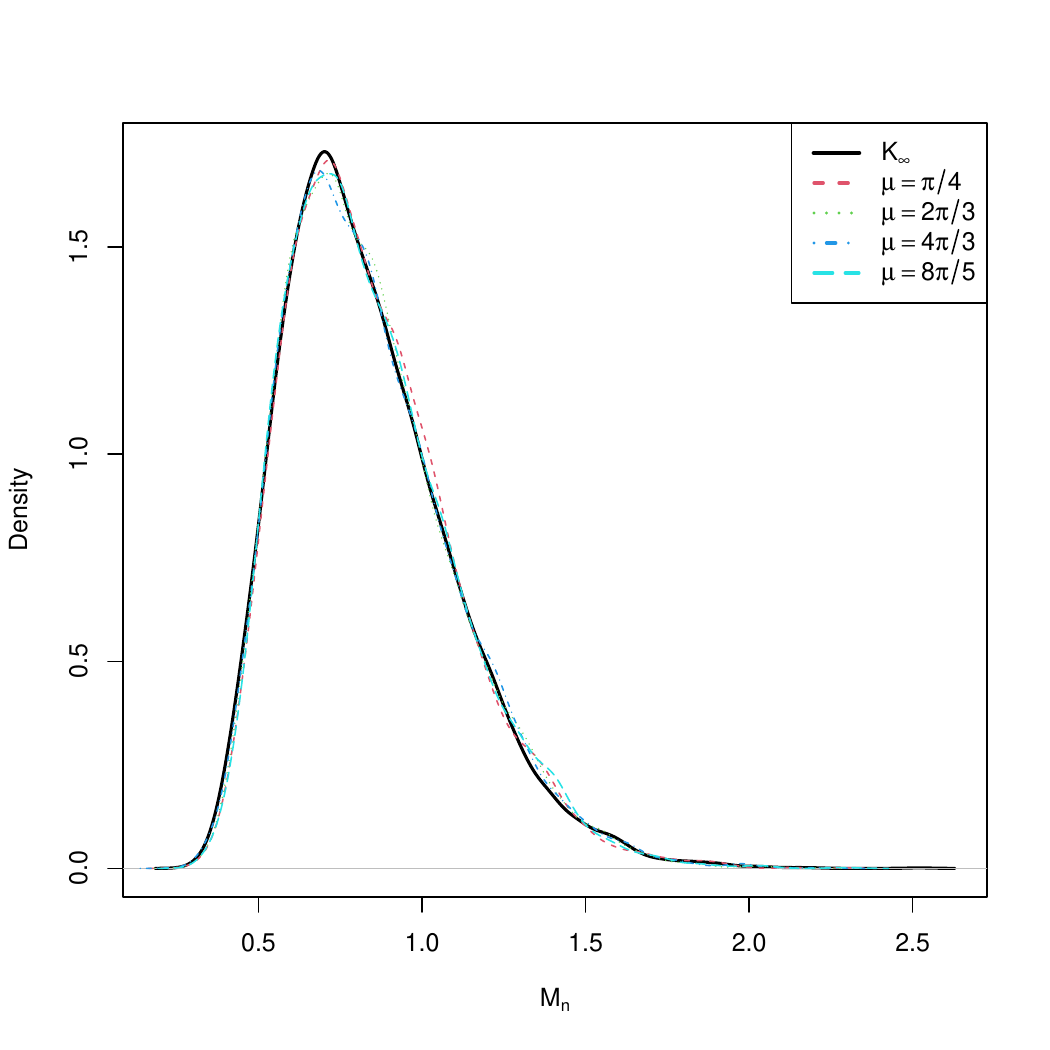}
	\caption{Density plot of the SAMC  test statistic, $\mathbb{M}_n$ under $H_{0}$ with sample of size $n=1000$ from von Mises distribution with the different mean direction $\mu=\frac{\pi}{4},\frac{2\pi}{3},\frac{4\pi}{3},\frac{8\pi}{5}$, and the fixed concentration parameter, $\kappa=3$ along with the density plot of the limiting random variable $K_{\infty}$.}
	\label{fig: null_density_plot}
\end{figure*}

\begin{figure}[h!]
	\centering
	\subfloat[]{%
		{\includegraphics[width=0.32\textwidth, height=0.35\textwidth]{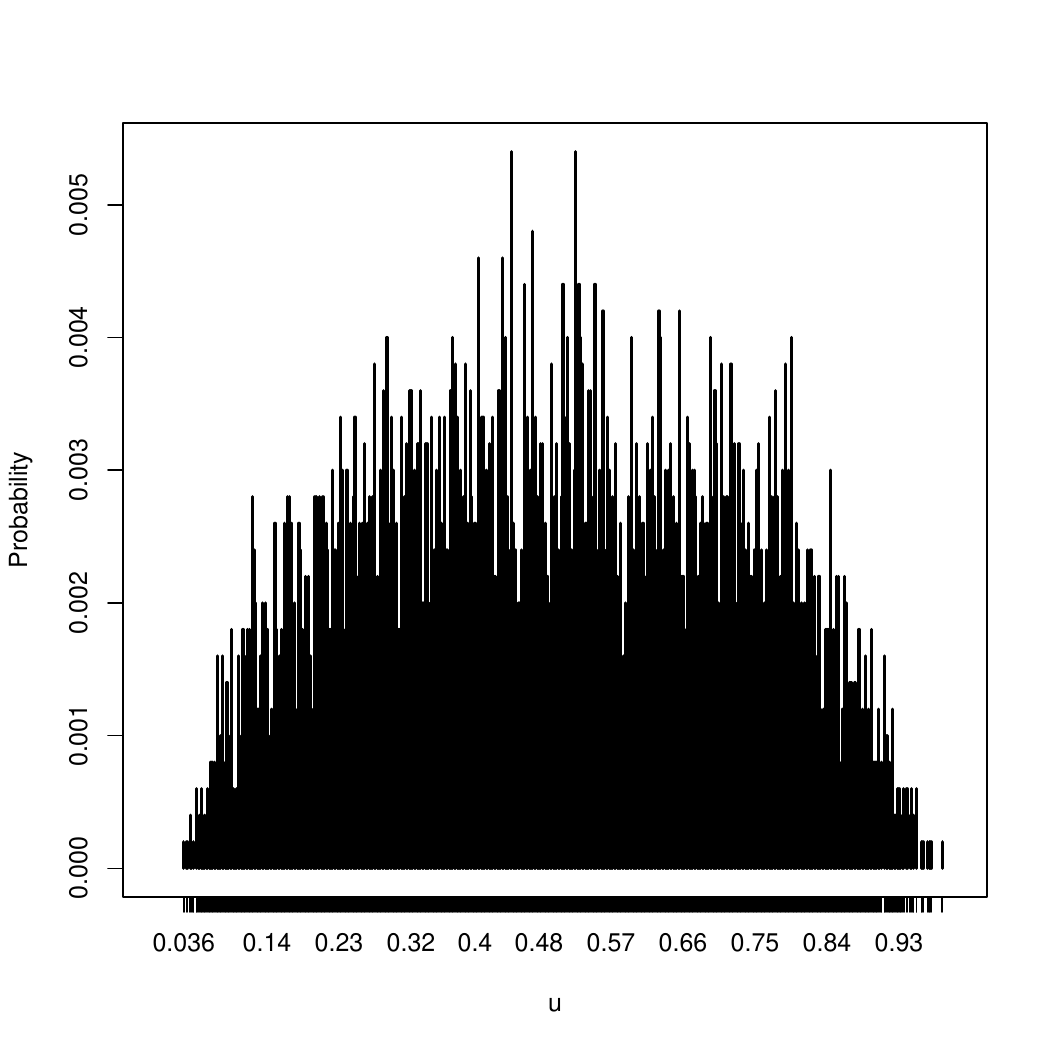}}}
	\subfloat[]{%
		{\includegraphics[width=0.32\textwidth, height=0.35\textwidth]{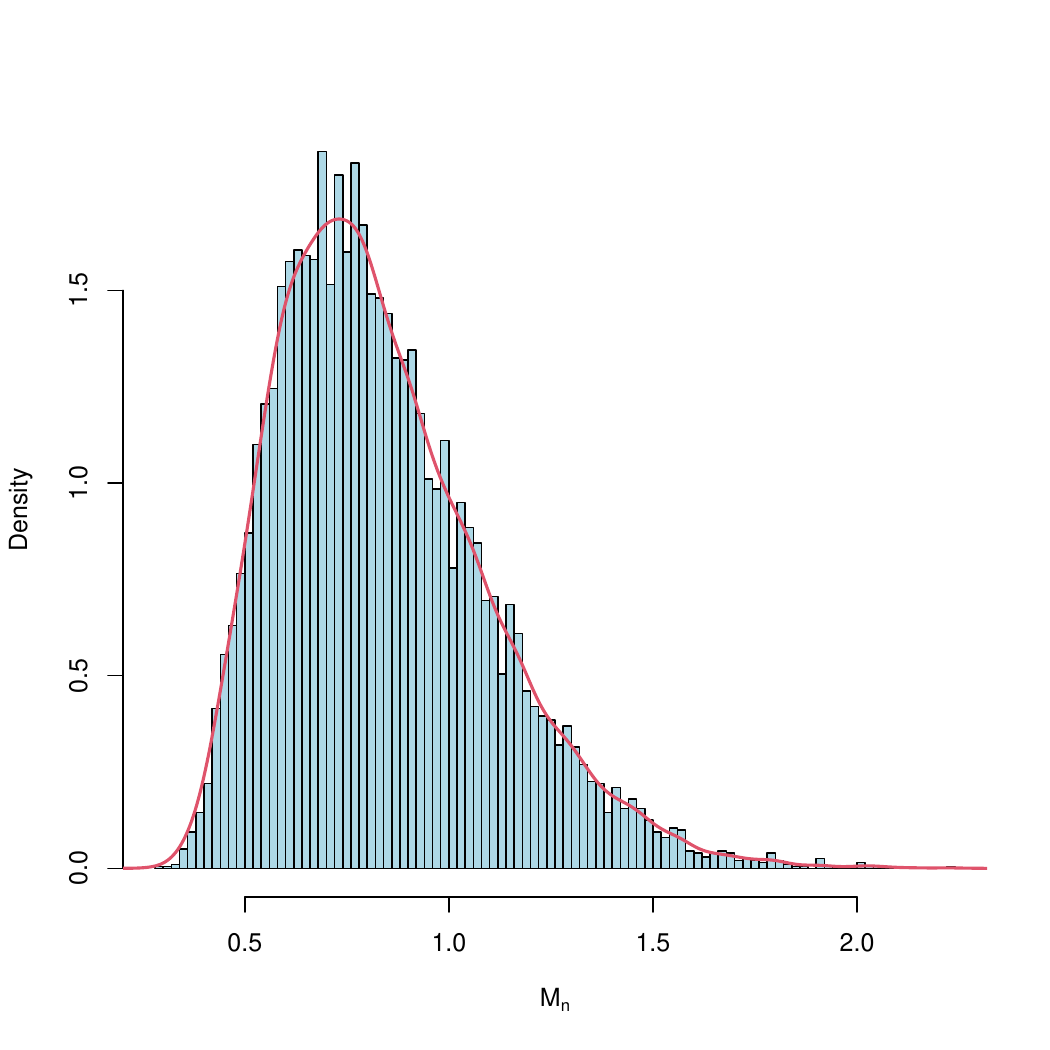}}}\hspace{5pt}
	
	\caption{ (a) The plot of the distribution of the estimated location of changepoint and (b) the histogram of the SAMC test statistic, $\mathbb{M}_n$ under the null hypothesis, $H_{0}$ when the data are from vthe vonMises distribution (sample size $n=500$) with  concentration parameter, $\kappa=2$ and the mean direction being $\mu=\frac{\pi}{6}$.}
	\label{sb_null_kappa_plot}
\end{figure}

The power computation of the SAMC test has been summarised in Algorithm-\ref{alg:algo_power_kappa}. Under the alternative hypothesis, $H_{1}$, the Figure-\ref{sb_alt_kappa_plot}(a) $\&$ (b) show the distribution of the estimated location of the changepoint and the histogram of the SAMC test statistic, $\mathbb{M}_n$ (Equation-\ref{concentration_test_statistic}), respectively. 
Here also, the random samples are drawn from the von Mises distribution with a sample size of $n=500$ and fixed concentration parameter $\kappa=2$. The mean direction is $\mu_1=\frac{\pi}{6}$ before the true changepoint  $k^{*}=\frac{n}{2}$, and $\mu_1=\frac{2\pi}{6}$ after the changepoint.

The Figure-\ref{power_100ss_vm}(a) \& (b) depict the power curves of the SAMC test statistic, $\mathbb{M}_n$, at the level of $1\%$ and $5\%$, respectively, for the sample sizes of 100, when the data are drawn from the von Mises distribution with the fixed concentration parameter $\kappa=3$. Under the alternative hypothesis, before the changepoint,  the mean direction  $\mu=0$  whereas after the true changepoint at $k^*=\dfrac{n}{2}$, the mean direction is $\mu+\delta_{\mu}$, where $\delta_{\mu} \in [-\frac{\pi}{2}, \frac{\pi}{2}].$
For each of the specifications, $5 \times 10^3$ iterations have been performed. The power curves in Figure-\ref{power_500ss_vm}(a) \& (b) are also obtained for a similar specification with the sample size of $500.$

\begin{figure}[h!]
	\centering
	\subfloat[]{%
		{\includegraphics[width=0.32\textwidth, height=0.35\textwidth]{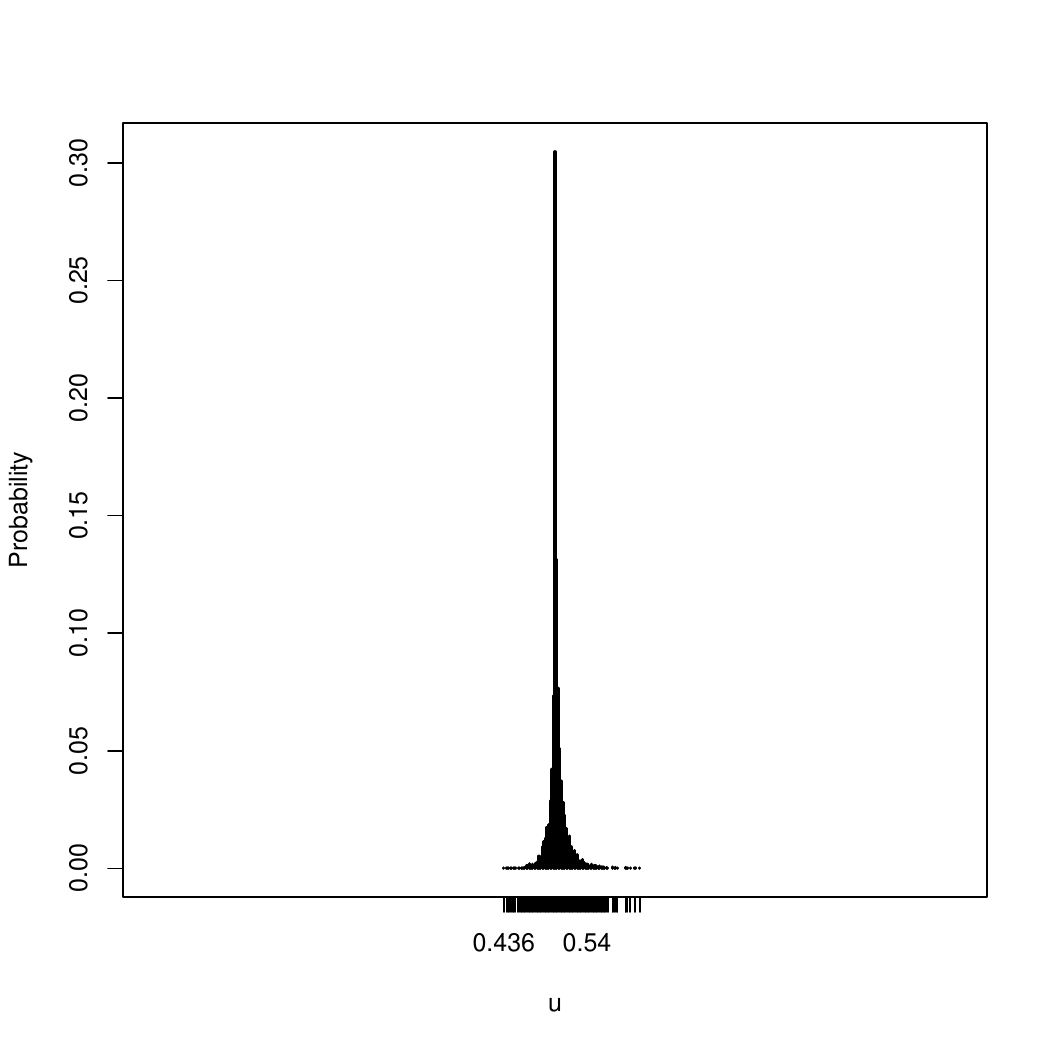}}}
	\subfloat[ ]{%
		{\includegraphics[width=0.32\textwidth, height=0.35\textwidth]{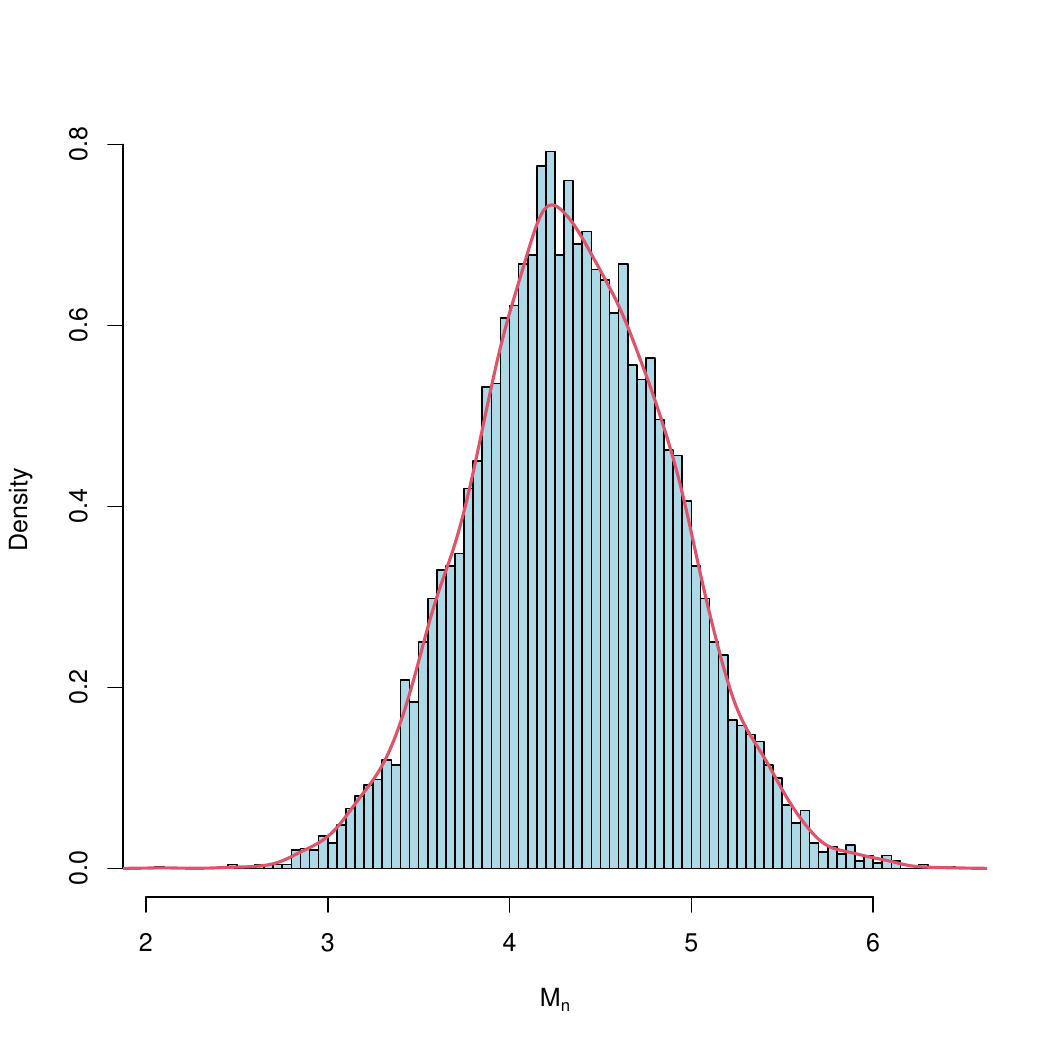}}}\hspace{5pt}
	
	\caption{ (a) The plot of the distribution of the estimated location of changepoint and (b) the histogram of the SAMC test statistic, $\mathbb{M}_n$ when the true changepoint is at $k^*=\dfrac{n}{2}$,  $\mu_1=\frac{\pi}{6}$ (mean direction before changepoint), $ \mu_2=\frac{2\pi}{6} $ (mean direction after changepoint) under the alternative hypothesis, $H_{1}$ when the data are from von Mises distribution (sample size $n=500$,  and fixed concentration parameter $\kappa=2$).}
	\label{sb_alt_kappa_plot}
\end{figure}

\begin{figure}[h!]
	\centering
	\subfloat[ level $1\%$  .]{%
		{\includegraphics[ trim= 0 0 0 40, clip, width=0.32\textwidth, height=0.32\textwidth]{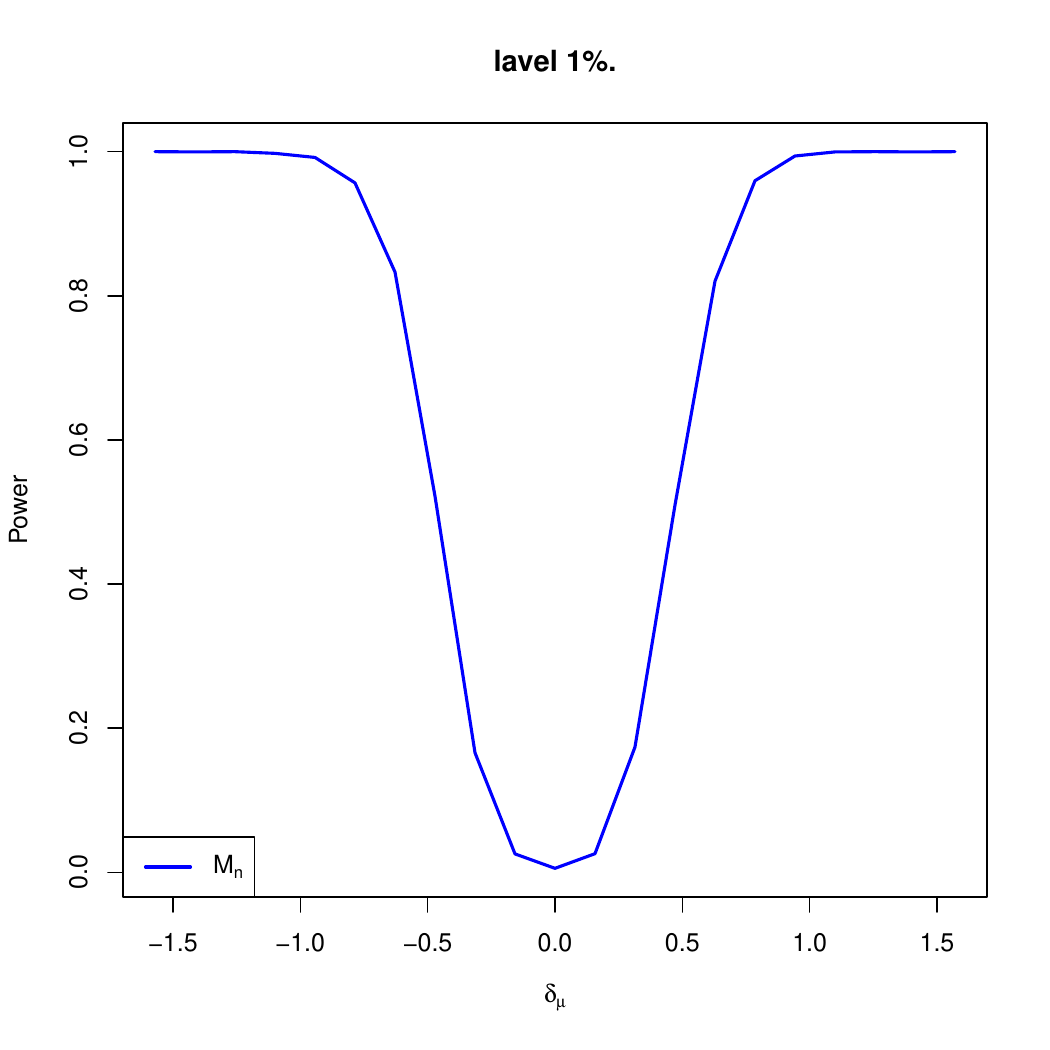}}}\hspace{5pt}
	\subfloat[ level $5\%$.]{%
		{\includegraphics[trim= 0 0 0 40, clip, width=0.32\textwidth, height=0.32\textwidth]{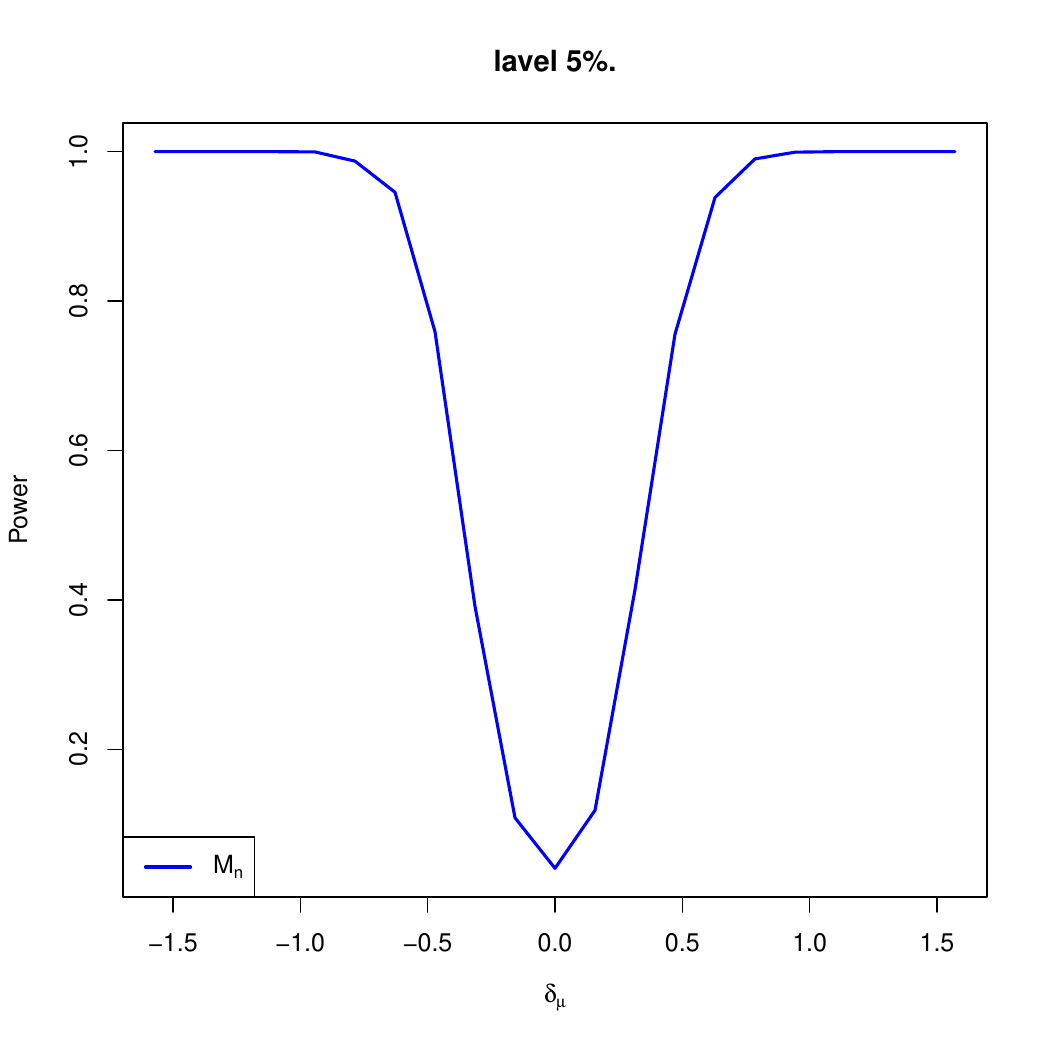}}}
	\vspace{5pt} 
	\caption{Power curves of the SAMC test statistic, $\mathbb{M}_n$  with respect to $\delta_\mu$ for the sample size of $n=100,$ drawn from von Mises distribution  with fixed concentration parameter, $\kappa=3$. The true changepoint at $k^*=\dfrac{n}{2}$ under $H_{1},$ and the pre-changed mean direction is $\mu=0$.}
	\label{power_100ss_vm}
\end{figure}

\begin{figure}[h!]
	\centering
	\subfloat[ level $1\%$.]{%
		{\includegraphics[trim= 0 0 0 40, clip, width=0.32\textwidth, height=0.32\textwidth]{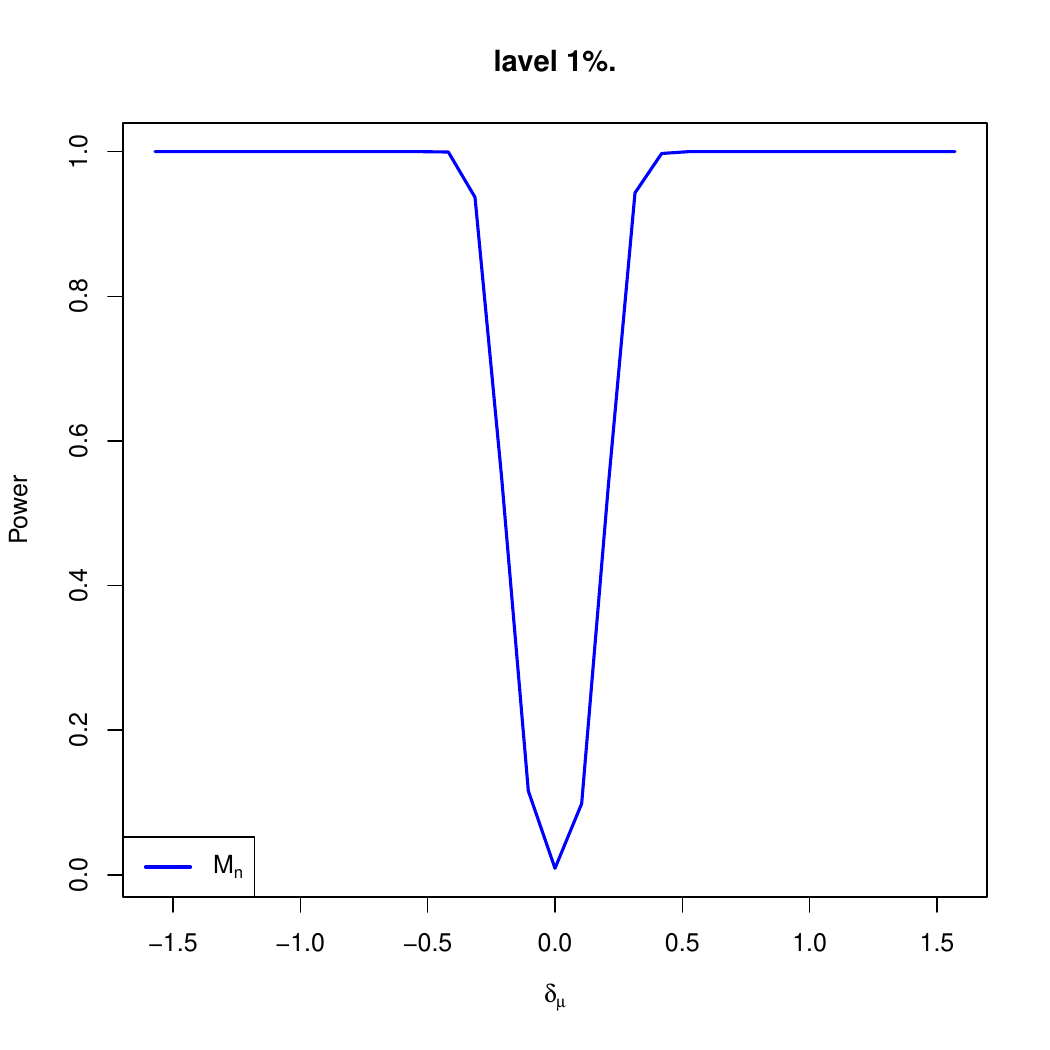}}}\hspace{5pt}
	\subfloat[ level $5\%$.]{%
		{\includegraphics[trim= 0 0 0 40, clip, width=0.32\textwidth, height=0.32\textwidth]{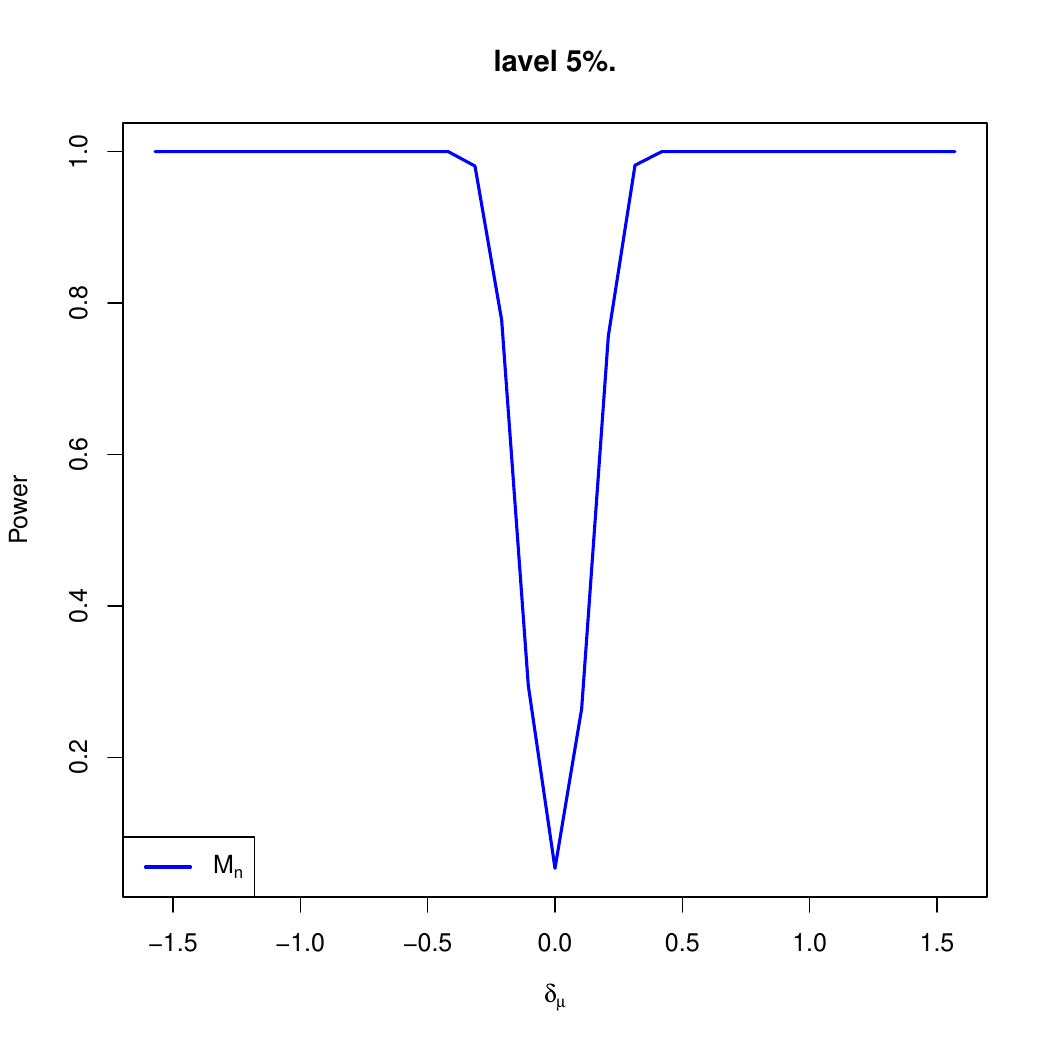}}}
	\vspace{5pt} 
	\caption{Power curves of the SAMC test statistic, $\mathbb{M}_n$  with respect to $\delta_\mu$ for the sample size of $n=500,$ drawn from von Mises distribution  with fixed concentration parameter, $\kappa=3$. The true changepoint at $k^*=\dfrac{n}{2}$ under $H_{1},$ and the pre-changed mean direction is $\mu=0$.}
	\label{power_500ss_vm}
\end{figure}

To show that our approach does not depend on the distribution, we have considered another popular circular distribution, the wrapped-Cauchy distribution, which has the following probability density function: 

\begin{equation}
	f_{\text{wc}}(\theta;\mu,\rho)=\frac{1}{2\pi}\dfrac{1-\rho^2}{1+\rho^2-2\rho \cos({\theta-\mu})},
	\label{wrap cauchy}
\end{equation}

where $0\leq \theta<2\pi$, $0\leq \mu<2\pi$, and $0\leq \rho <1$.
Figure-\ref{power_100ss_wc}(a) \& (b) depict the power curves of the SAMC test statistic, $\mathbb{M}_n$, at the level of $1\%$ and $5\%$, respectively, for the sample sizes of $100$,  when the data are drawn from the wrapped-Cauchy distribution with the fixed concentration parameter $\rho=0.7$  with the mean direction  $\mu=0$. Under the alternative hypothesis, before the changepoint,  the mean direction  $\mu=0$  whereas after the true changepoint at $k^*=\dfrac{n}{2}$, the mean direction is $\mu+\delta_{\mu}$, where $\delta_{\mu} \in [-\frac{\pi}{2}, \frac{\pi}{2}].$
For each of the specifications, $5 \times 10^3$ iterations have been performed. The power curves in Figure-\ref{power_500ss_wc}(a) \& (b) are also obtained for a similar specification with the sample size of $500.$



\begin{figure}[h!]
	\centering
	\subfloat[ level $1\%$  .]{%
		{\includegraphics[trim= 0 0 0 40, clip, width=0.32\textwidth, height=0.32\textwidth]{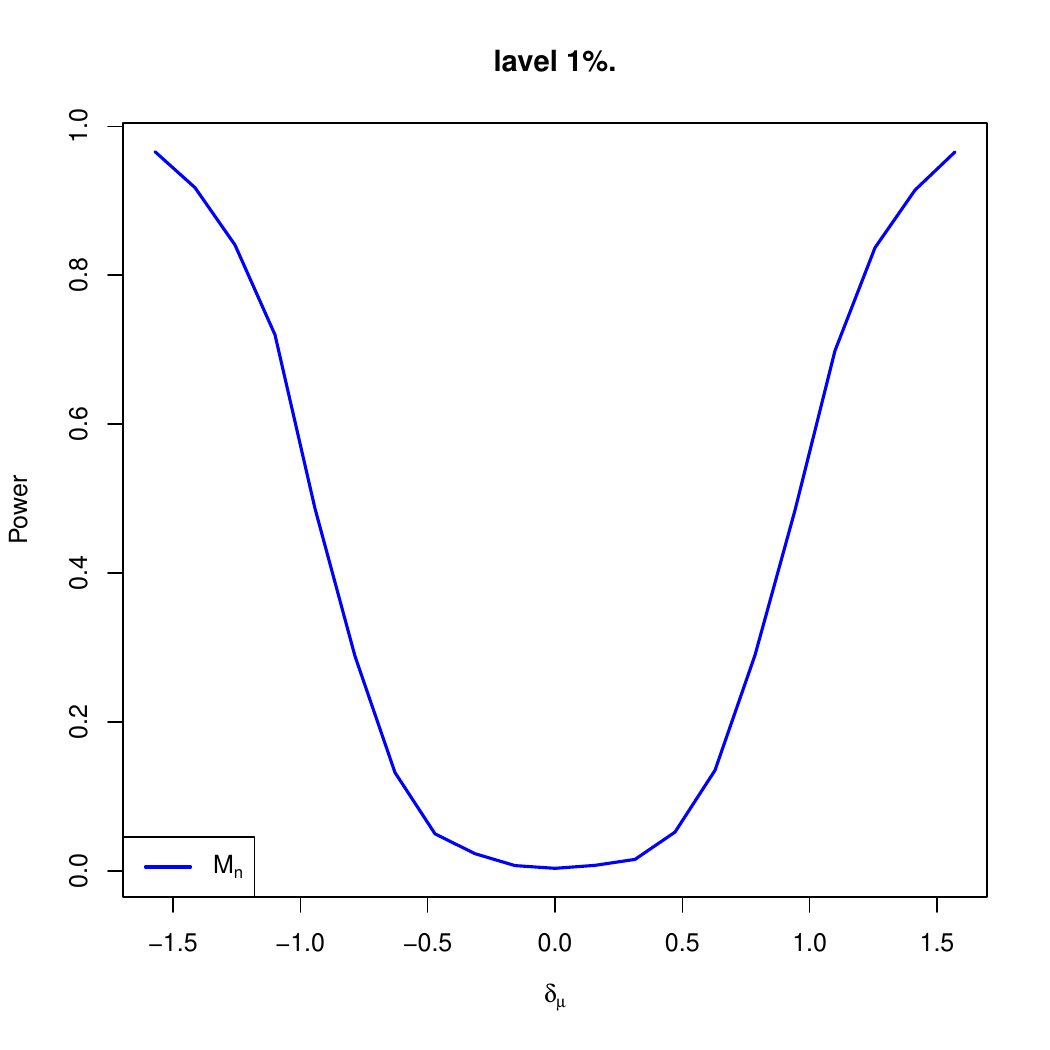}}}\hspace{5pt}
	\subfloat[ level $5\%$.]{%
		{\includegraphics[trim= 0 0 0 40, clip, width=0.32\textwidth, height=0.32\textwidth]{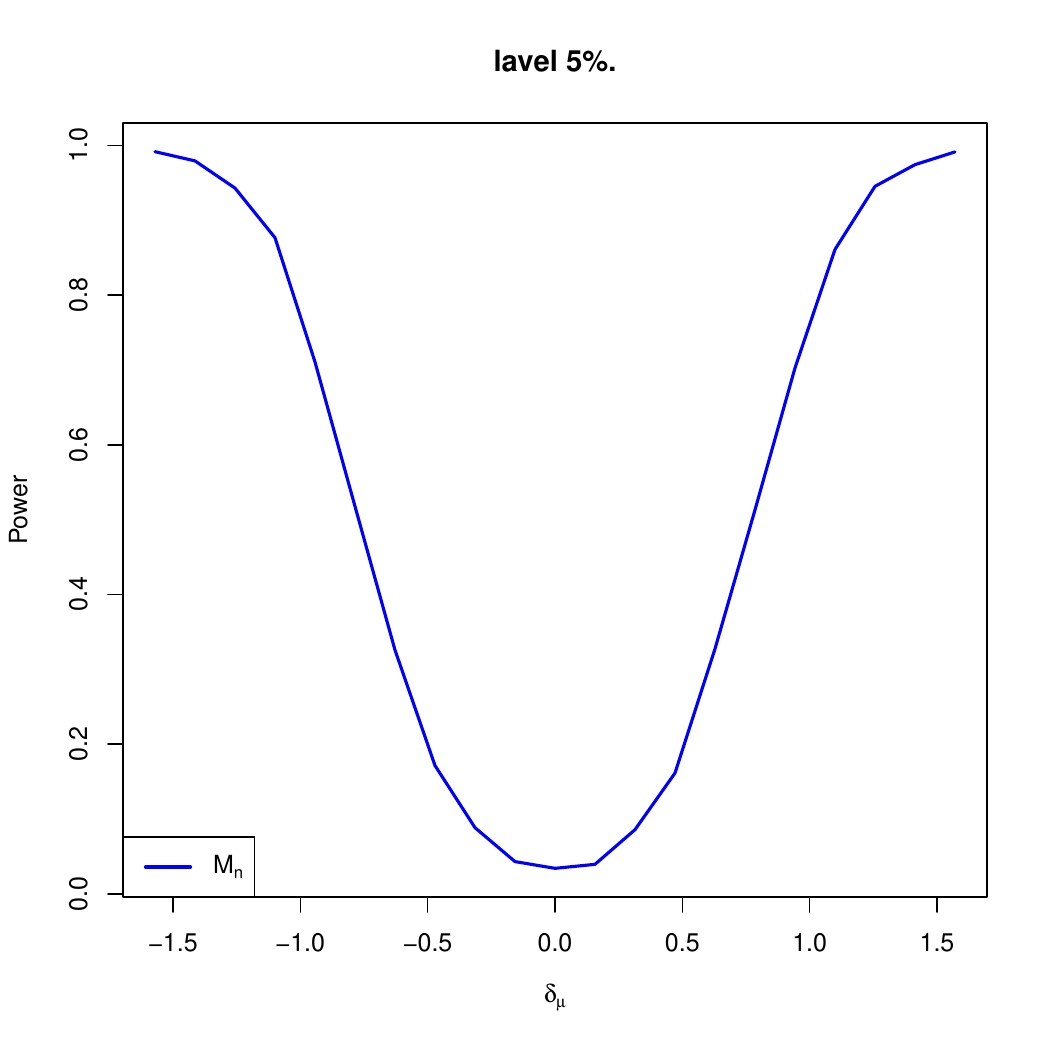}}}
	\vspace{5pt} 
	\caption{Power curves of the SAMC test statistic, $\mathbb{M}_n$  with respect to $\delta_\mu$ for the sample size of $n=100,$ drawn from wrapped Cauchy distribution  with fixed concentration parameter, $\rho=0.7$. The true changepoint at $k^*=\dfrac{n}{2}$ under $H_{1},$ and the pre-changed mean direction is $\mu=0$.}
	\label{power_100ss_wc}
\end{figure}

\begin{figure}[h!]
	\centering
	\subfloat[ level $1\%$  .]{%
		{\includegraphics[trim= 0 0 0 40, clip, width=0.32\textwidth, height=0.32\textwidth]{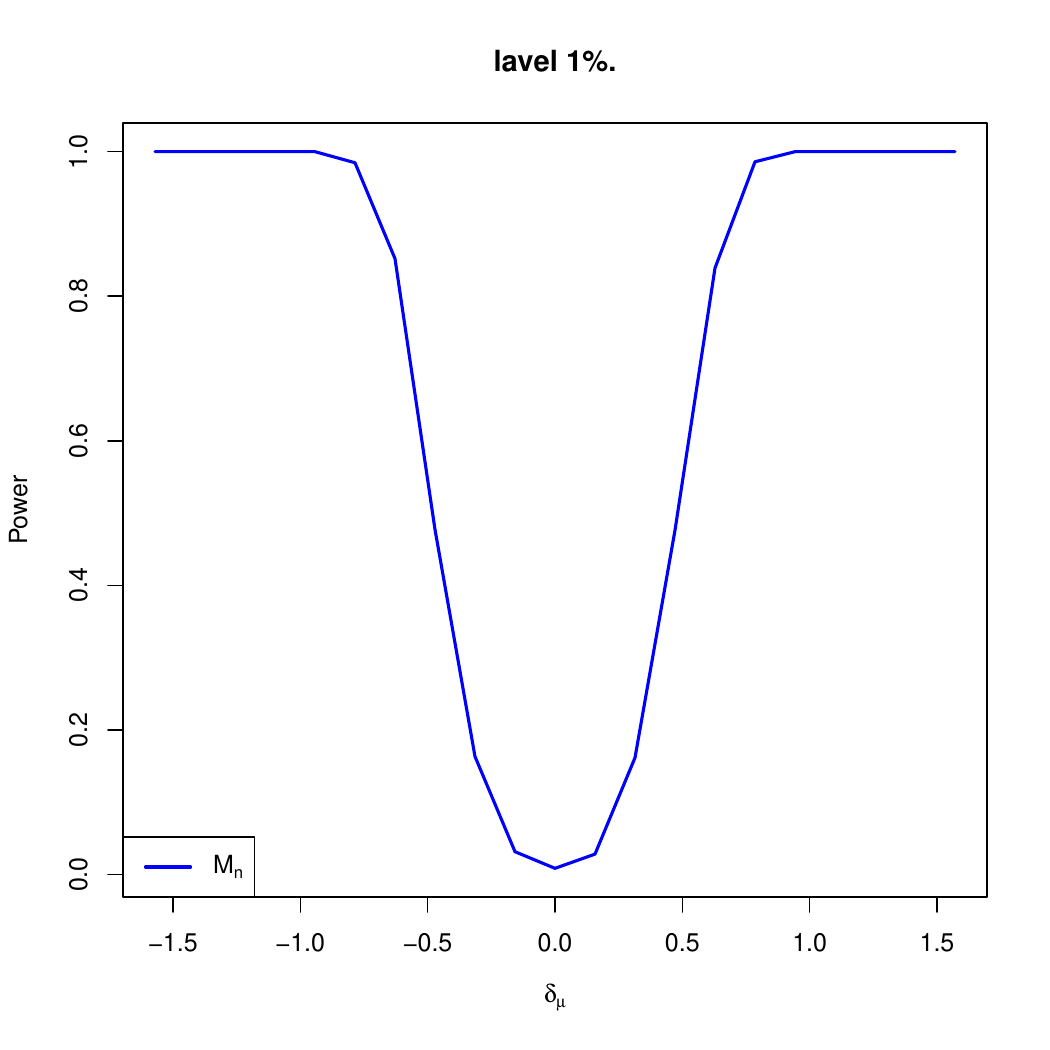}}}\hspace{5pt}
	\subfloat[ level $5\%$.]{%
		{\includegraphics[trim= 0 0 0 40, clip, width=0.32\textwidth, height=0.32\textwidth]{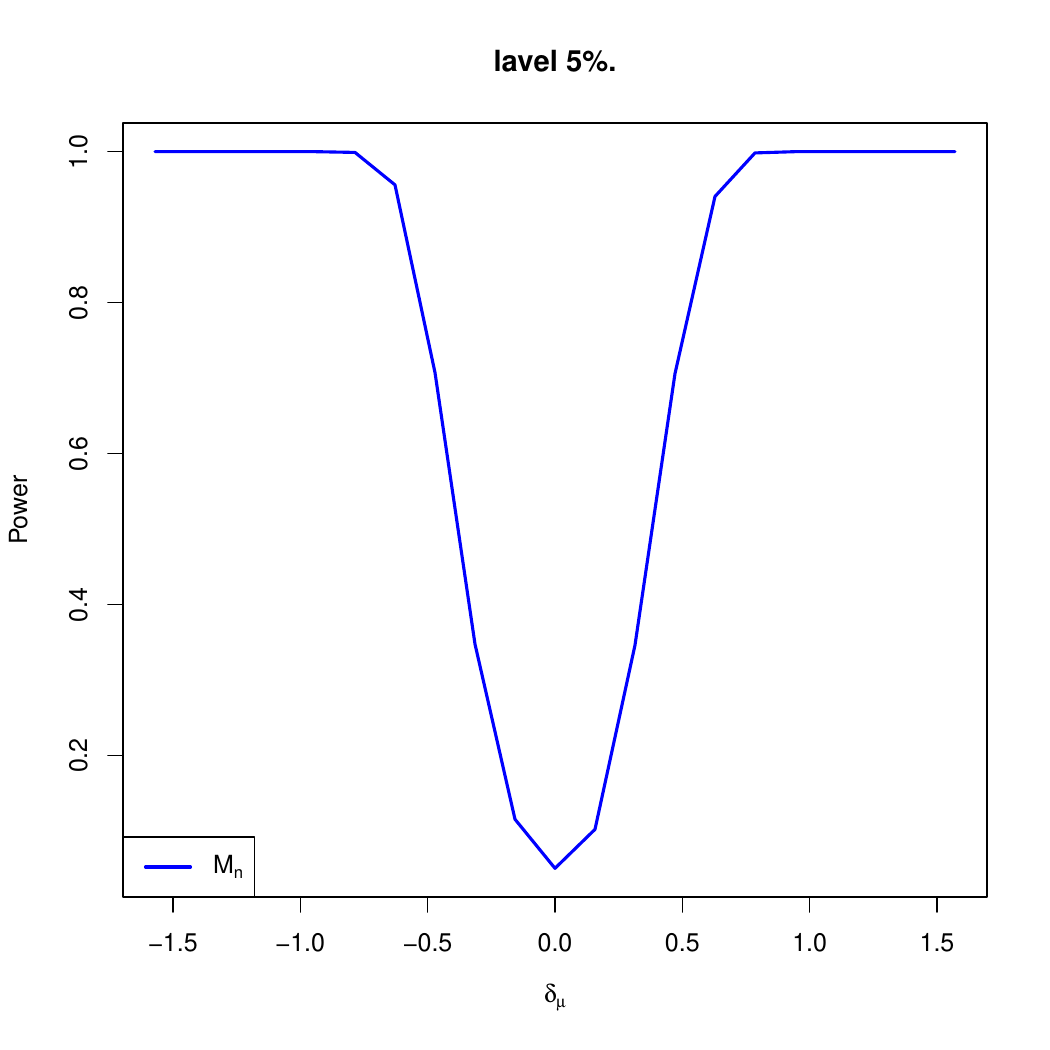}}}
	\vspace{5pt} 
	\caption{Power curves of the SAMC test statistic, $\mathbb{M}_n$  with respect to $\delta_\mu$ for the sample size of $n=500$ drawn from wrapped Cauchy distribution  with fixed concentration parameter, $\rho=0.7$. The true changepoint at $k^*=\dfrac{n}{2}$ under $H_{1},$ and the pre-changed mean direction is $\mu=0$.}
	\label{power_500ss_wc}
\end{figure}

\textcolor{black}{To investigate sensitivity to subtle directional changes, we conducted an
additional simulation study under the contiguous alternatives
$\mu_{2,n}
=
\left(
\mu_1+\frac{h}{\sqrt n}
\right)\bmod 2\pi,
\quad h\in\mathbb{R}.$
Here, \(h\) is a fixed local-alternative parameter rather than a
circular angle; negative and positive values represent changes in opposite
directions around the circle.
We considered
$n\in\{200,500,1000\},\quad
\mu_1\in
\left\{
0,\frac{\pi}{6},\frac{\pi}{3},\pi
\right\},
\quad
\kappa\in\{1,3\},$
with the changepoint located at
$k^*=\lfloor n/2\rfloor$
and
$h\in\{-6,-5.5,\ldots,5.5,6\}.$
The proposed SAMC test was compared with the known-concentration von Mises generalized likelihood ratio test (GLRT) due to \cite{ghosh1999change}. The latter is based on the segment resultant lengths and hence on the
von Mises sufficient statistics
$\left(
\sum_i\cos\theta_i,\sum_i\sin\theta_i
\right).$}

\textcolor{black}{%
Separate empirical null critical values were computed for each procedure
and each simulation configuration using \(10000\) null replications.
The empirical size at \(h=0\) was subsequently evaluated using the
\(2500\) Monte Carlo replications.
Across the \(24\) combinations of sample size, baseline direction, and
concentration, the mean empirical rejection probabilities were \(0.0503\)
for SAMC and \(0.0502\) for the GLRT. As summarized in
Table-\ref{tab:null_calibration}, both procedures were therefore
satisfactorily calibrated at the nominal \(5\%\) significance level.
For each simulation configuration and each value of \(h\), power was
approximated using \(2500\) Monte Carlo replications. The local-power curves
remained broadly stable as \(n\) increased because the magnitude of the
angular change decreased at the contiguous-alternative rate \(n^{-1/2}\).
}

\begin{table}[htbp]
\centering

\label{tab:null_calibration}
\begin{tabular}{lcc}
\hline
Procedure
& Mean empirical rejection probability
& Nominal level \\
\hline
SAMC & \(0.0503\) & \(0.05\) \\
GLRT & \(0.0502\) & \(0.05\) \\
\hline
\end{tabular}
\caption{Empirical null calibration of SAMC and the GLRT. Separate
empirical critical values were computed for each procedure and simulation
configuration using \(10000\) null replications. The reported rejection
probabilities are averages over the \(24\) simulation configurations and
were evaluated using \(2500\) replications at \(h=0\).}
\end{table}

\textcolor{black}{%
Table-\ref{tab:local_power_n1000} summarizes the local-power results for
\(n=1000\) and \(\lvert h\rvert=3\). Each reported value is the average
of the estimated powers obtained under \(h=-3\) and \(h=3\). In the
diffuse setting, \(\kappa=1\), SAMC and the GLRT exhibited broadly
comparable power. At \(\mu=0\), SAMC attained a power of \(0.1020\),
compared with \(0.0832\) for the GLRT, corresponding to a gain
of \(0.0188\). At \(\mu=\pi/6\), the two procedures produced almost
identical powers, namely \(0.0954\) for SAMC and \(0.0960\) for the GLRT.
The GLRT attained moderately higher power at \(\mu=\pi/3\), with an
absolute difference of \(0.0176\), while its advantage at \(\mu=\pi\)
was only \(0.0050\).
}

\begin{table}[htbp]
\centering

\label{tab:local_power_n1000}
\begin{tabular}{ccccc}
\hline
\(\kappa\)
& Baseline direction \(\mu\)
& SAMC
& GLRT
& Difference \\
\hline
1 & \(0\)       & \(0.1020\) & \(0.0832\) & \( 0.0188\) \\
1 & \(\pi/6\)   & \(0.0954\) & \(0.0960\) & \(-0.0006\) \\
1 & \(\pi/3\)   & \(0.0768\) & \(0.0944\) & \(-0.0176\) \\
1 & \(\pi\)     & \(0.0898\) & \(0.0948\) & \(-0.0050\) \\
\hline
3 & \(0\)       & \(0.3918\) & \(0.3538\) & \( 0.0380\) \\
3 & \(\pi/6\)   & \(0.4294\) & \(0.3620\) & \( 0.0674\) \\
3 & \(\pi/3\)   & \(0.4764\) & \(0.3616\) & \( 0.1148\) \\
3 & \(\pi\)     & \(0.3504\) & \(0.3464\) & \( 0.0040\) \\
\hline
\end{tabular}
\caption{Mean empirical power of SAMC and the GLRT for \(n=1000\) and
\(\lvert h\rvert=3\). The reported powers are averaged over \(h=-3\)
and \(h=3\). The final column is defined as SAMC minus GLRT.}
\end{table}

\textcolor{black}{%
The advantage of SAMC was more pronounced for concentrated angular data
with \(\kappa=3\). At \(\mu=0\), SAMC achieved an average power of
\(0.3918\), compared with \(0.3538\) for the GLRT, corresponding to an
absolute gain of \(0.0380\). More substantial improvements were observed
at \(\mu=\pi/6\) and \(\mu=\pi/3\), where SAMC exceeded the GLRT by
\(0.0674\) and \(0.1148\), respectively. The largest gain occurred at
\(\mu=\pi/3\), for which the average powers of SAMC and the GLRT were
\(0.4764\) and \(0.3616\), respectively. At \(\mu=\pi\), the corresponding
powers were \(0.3504\) and \(0.3464\), indicating that the two procedures
performed almost identically, with a small absolute advantage of \(0.0040\)
for SAMC.
}

\textcolor{black}{%
Overall, these results indicate that SAMC is highly competitive with the
sufficient-statistic likelihood benchmark and can provide appreciable power
gains for concentrated angular distributions. Nevertheless, its relative
performance varies with the baseline direction and sample size, and hence
no claim of uniform dominance is made. A further practical advantage of
SAMC is that its construction does not require the specification of a
parametric circular distribution, whereas the GLRT considered here relies
on the von Mises likelihood. For clarity,  Figure-\ref{local_power_1000ss_vm_comp} presents
the local-power comparison for \(n=1000\). The complete collection of
local-power curves for
\(n\in\{200,500,1000\}\), all retained baseline directions
$\mu\in\left\{0,\frac{\pi}{6},\frac{\pi}{3},\pi\right\},$
and both concentration levels
$\kappa\in\{1,3\},$
are reported in the appendix. These additional
Figures-\ref{fig:local_power_kappa1_all_n} \& \ref{fig:local_power_kappa3_all_n} in the appendix confirm that the relative behaviour of SAMC and the GLRT remains
broadly stable across the considered sample sizes.
}

 \begin{figure}[h!]
	\centering
{\includegraphics[trim= 0 0 0 40, clip, width=0.99\textwidth, height=0.5\textwidth]{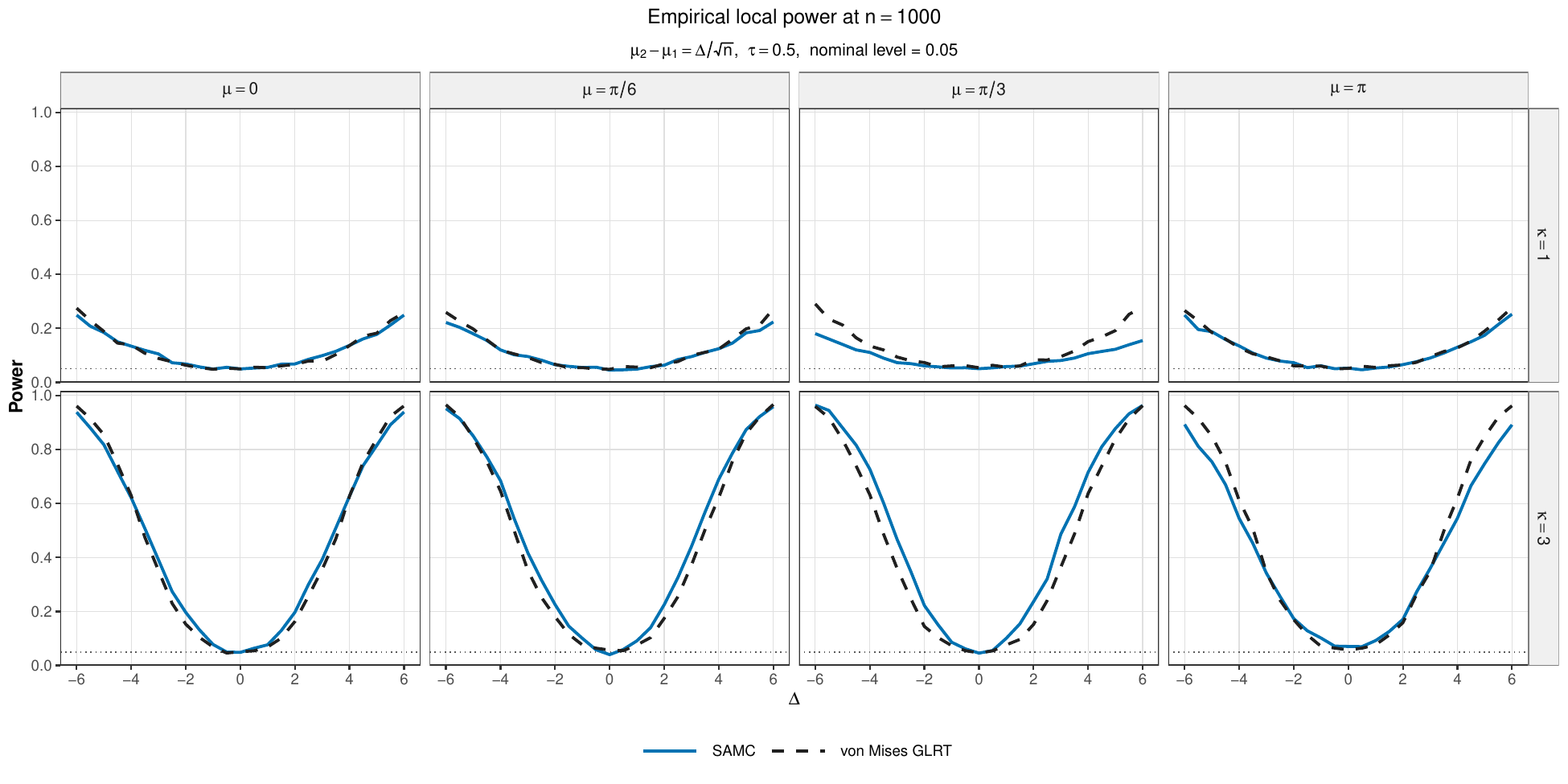}}
	 \textcolor{black}{\caption{
Empirical local-power curves at the \(5\%\) significance level under
the contiguous alternatives, with the changepoint located at
\(k^*=\lfloor n/2\rfloor\). The proposed SAMC test is compared with the
known-concentration von Mises GLRT based on the segment resultant lengths.
The columns correspond to
\(\mu=0,\pi/6,\pi/3,\pi\), and the rows correspond to
\(\kappa=1\) and \(\kappa=3\). The displayed results use \(n=1000\);
results for \(n=200\) and \(n=500\) are provided in the appendix.
}
\label{local_power_1000ss_vm_comp}}
\end{figure}

\subsection {Power comparison} 
\label{comarison_sec}
We compare the proposed changepoint detection method for mean direction shifts in circular data with two established approaches. The first one is a rank-based test (say, LRBT) for circular data proposed by \cite{lombard1986change}, and the other one is a test for non-Euclidean data using the graph-based binary segmentation method (gSeg) introduced by \cite{chen2015graph}.
 To illustrate that the proposed test, which utilizes the square of an angle, surpasses an alternative method based on circular arc length, we compare them with respect to their power curves.  The circular arc length distance between two angles, \(\theta_1\) and \(\theta_2\), is defined as follows:
 \cite[see][pp. 18-19]{Mardia_2000}:  
\[
\min[\theta_1-\theta_2, 2\pi - (\theta_1-\theta_2)] = \pi - \lvert \pi - \lvert \theta_1-\theta_2 \rvert \rvert.
\]
In this study, we measure the distance from \(0\) to each angle \(\theta_1, \dots, \theta_n\). Using the above equation, the distance can be expressed as:  
\begin{equation}
	c_i = \pi - \lvert \pi - \lvert \theta_i \rvert \rvert, \quad \text{for } i = 1, \dots, n.
	\label{arc_length_distance}
\end{equation}
 By substituting \(a_i\) with \(c_i\) in Equation-\ref{cusum process} and applying the relevant sample mean and variance, we derive a new test statistic, \(C_n\).  Given that \(c_i\) are independent and identically distributed sequences, the distribution under \(H_0\) remains pivotal. 
 Figures \ref{power_100ss_vm_comp}(a) \& (b) illustrate the comparison of power curves among the proposed SAMC test, the LRBT test, the gSeg test, and the arc length-based test at significance levels of \(1\%\) and \(5\%\).  For each level, \(5 \times 10^3\).  Monte Carlo iterations were conducted using a sample size of \(n = 500\), with observations drawn from a von Mises distribution with a concentration parameter \(\kappa = 3\).  The equispaced mean differences were constrained within the interval \(-\frac{\pi}{2} \leq \delta_\mu = (\mu_2 - \mu_1) \leq \frac{\pi}{2}\), while the true changepoint was considered at \(k^* = \frac{n}{2}\).  
 We also analysed the performance of these four tests: SAMC,  LRBT, gSeg, and the arc length-based, across various true changepoint locations \(k^* \in \{50, 100, 150, 200, 250, 300, 350, 400, 450\}\) in the alternative setting \(H_1\), maintaining a constant mean shift of \(\delta_\mu = \frac{\pi}{10}\).  For each \(k^*\), a total of \(5 \times 10^3\) iterations were performed using \(n = 500\) samples drawn from the von Mises distribution with \(\kappa = 3\).  Figures \ref{power_location_change}(a) and \ref{power_location_change}(b) demonstrate that, at both \(1\%\) and \(5\%\) significance levels, the proposed statistic \(\mathbb{M}_{n}\) consistently exhibits superior power compared to all competing methods. 
 The results indicate that the proposed test outperforms both existing tests and a baseline test based on the circular arc length metric. \textit{From the extensive simulation studies, it is observed that the proposed test is more powerful than the existing ones when the mean direction shift is between $[-\pi/2,~ \pi/2]$. Beyond that range, its performance may fall, and hence we consider the proposed test as a more powerful test.}

\begin{figure}[h!]
	\centering
	\subfloat[ level $1\%$  .]{%
		{\includegraphics[trim= 0 0 0 40, clip, width=0.32\textwidth, height=0.32\textwidth]{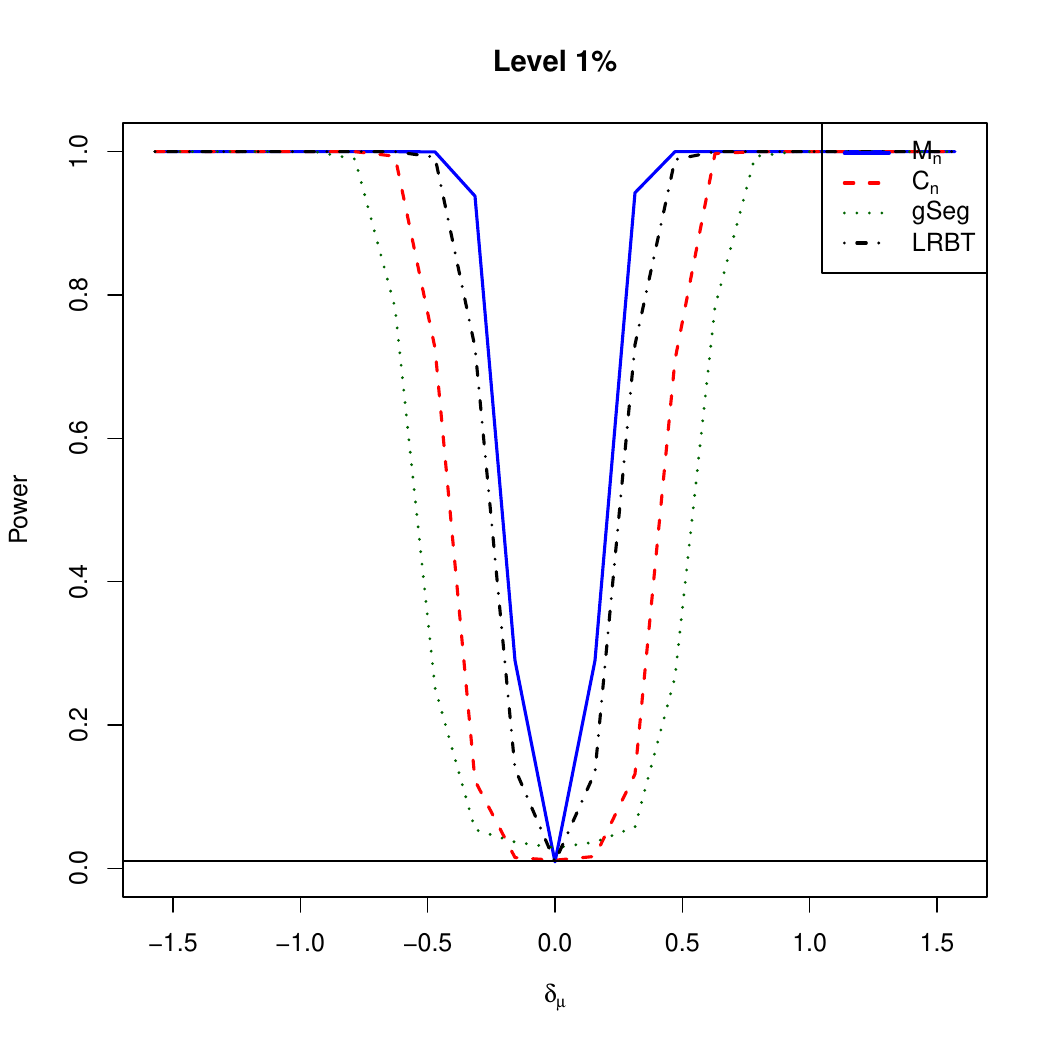}}}\hspace{5pt}
	\subfloat[ level $5\%$.]{%
		{\includegraphics[trim= 0 0 0 40, clip, width=0.32\textwidth, height=0.32\textwidth]{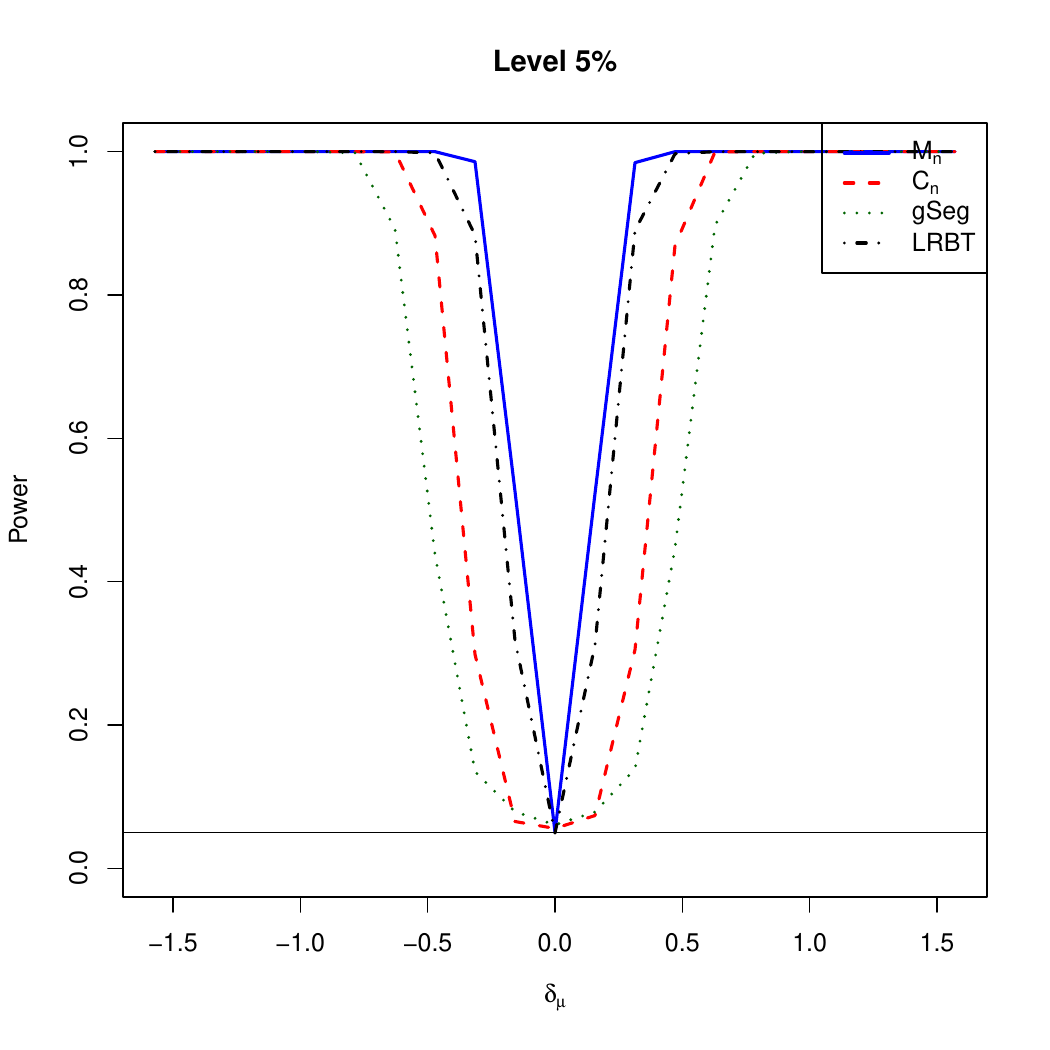}}}
	\vspace{5pt} 
	\caption{Plot of power functions of the proposed SAMC test, the LRBT test, the gSeg test, and the circular arc based test for sample of size $n=500$ from von Mises distribution with the concentration parameter  $\kappa=3$, and equispaced mean direction differences in  $-\frac{\pi}{2}\leq \delta_{\mu}=(\mu_2-\mu_1)\leq \frac{\pi}{2},$  and the true changepoint at $k^*=\dfrac{n}{2}$ under $H_{1}$.}
	\label{power_100ss_vm_comp}
\end{figure}


\begin{figure}[h!]
	\centering
	\subfloat[ level $1\%$  .]{%
		{\includegraphics[trim= 0 0 0 40, clip, width=0.322\textwidth, height=0.32\textwidth]{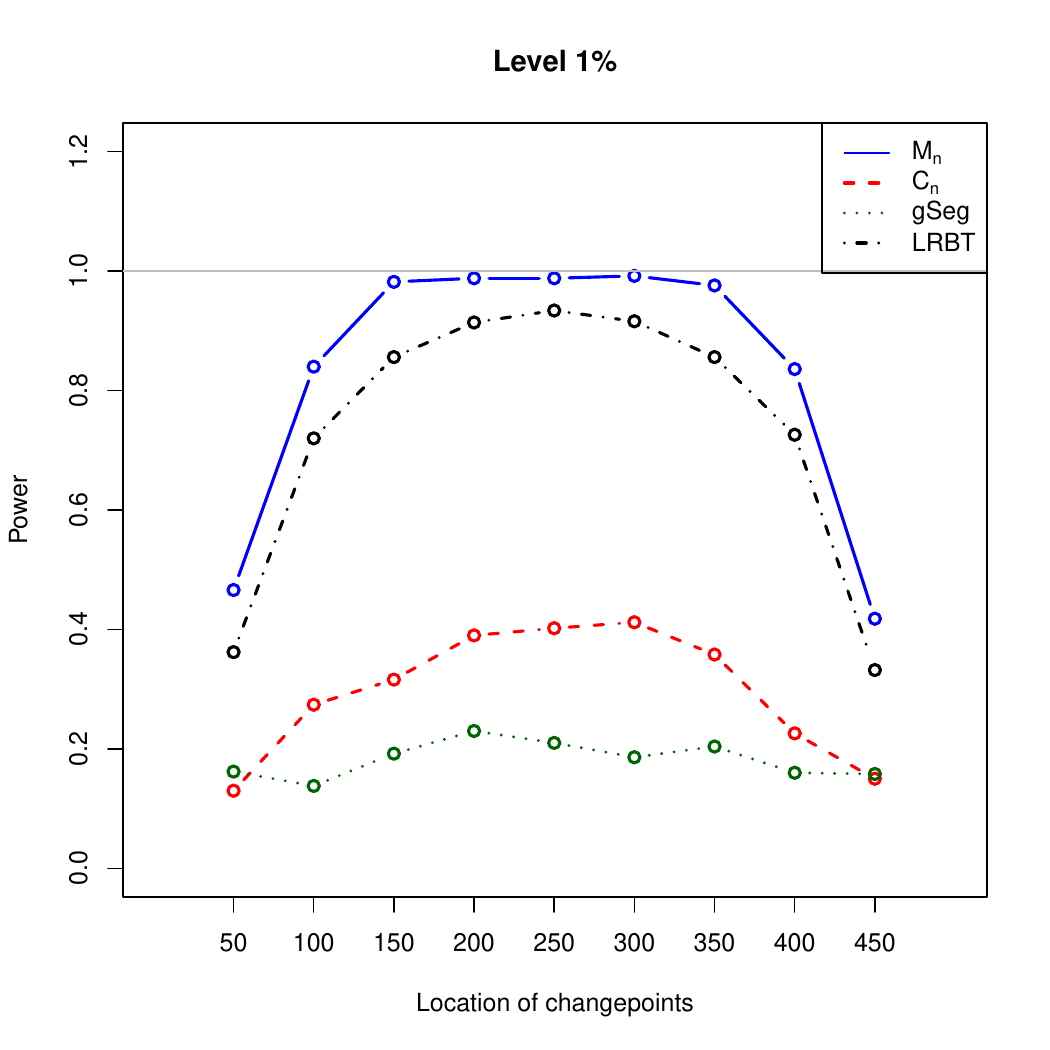}}}\hspace{5pt}
	\subfloat[ level $5\%$.]{%
		{\includegraphics[trim= 0 0 0 40, clip, width=0.322\textwidth, height=0.32\textwidth]{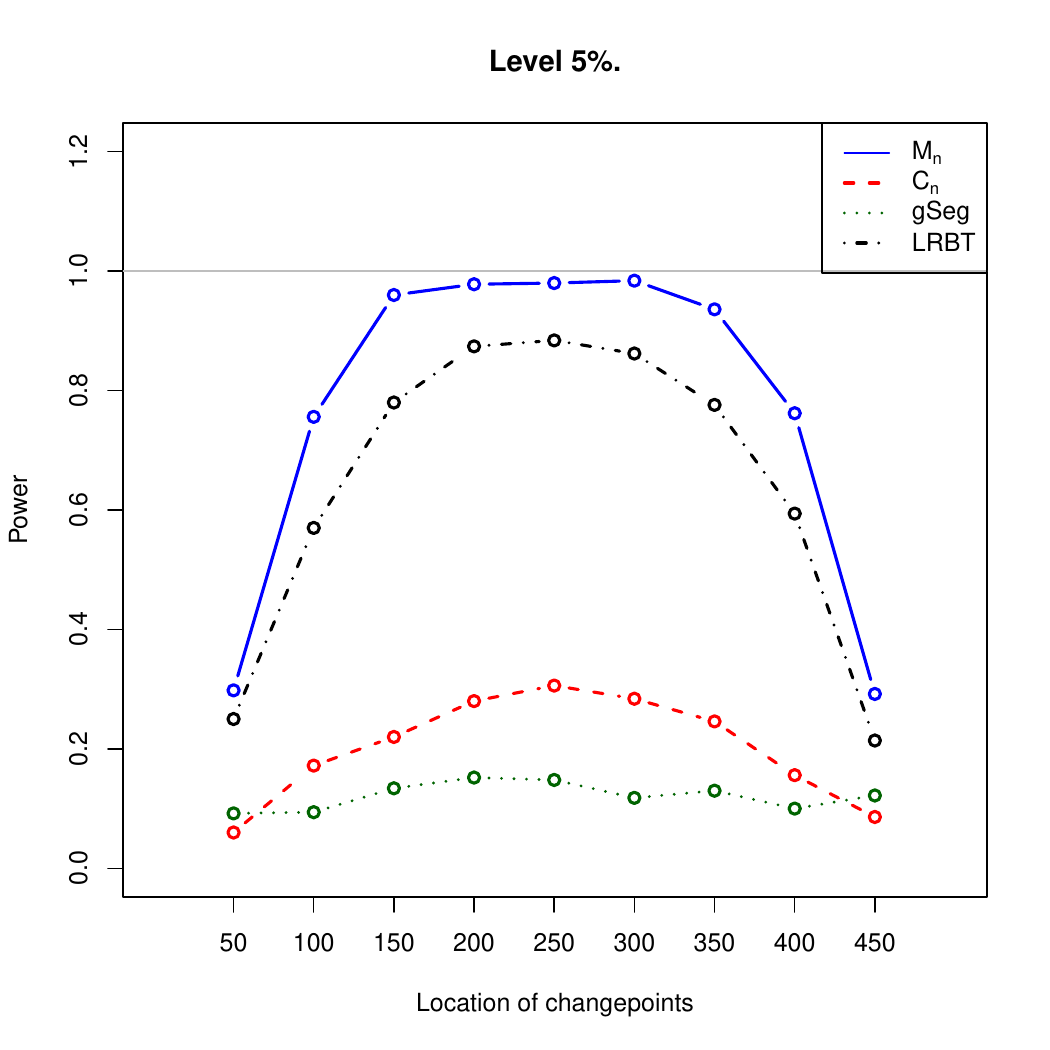}}}
	\vspace{5pt} 
	\caption{Power comparison between the proposed SAMC test, the LRBT test, the gSeg test, and the circular arc based test for sample of size $n=500$ with the different locations of the true changepoints at  $k^*=50,100,150,200,250,300,350,400$, or $450$ under $H_{1}$ with shift in mean direction $\delta_\mu=\frac{\pi}{10}$ for the data following von Mises distribution with concentration parameter $\kappa=3$.  }
	\label{power_location_change}
\end{figure}

\section{Data analysis}
\label{data_analysis}

Now, we apply our newly proposed distribution-free method to two popular cryptocurrencies, Bitcoin and Ethereum, and gold. To test our proposed changepoint detection technique, we collected a Bitcoin per-minute historical dataset from Kaggle (source: \url{https://www.kaggle.com/datasets/prasoonkottarathil/btcinusd}). 
This dataset includes one-minute historical data from January 1, 2017, to March 1, 2022, resulting in 1860 daily observations after processing. We collected a per-minute historical dataset for Ethereum from Kaggle: \url{https://www.kaggle.com/datasets/prasoonkottarathil/ethereum-historical-dataset?select=ETH_1min.csv}
which covers the period from May 9, 2016, to April 16, 2020, yielding 1409 daily observations after processing.  Similarly, the dataset used for gold in this study is available at the following link: \url{https://www.kaggle.com/datasets/novandraanugrah/xauusd-gold-price-historical-data-2004-2024?select=XAU_1m_data.csv}, which consists of multi-resolution historical price records (2004–2025). We have used per-minute data from January 1, 2019, to December 31, 2024. Here we get a total $1863$ observation after processing.

\textcolor{black}{
For each dataset, the proposed SAMC test was first applied to the complete
sequence of daily angular timestamps. The statistic was evaluated over all
candidate locations \(k=1,\ldots,n-1\), and a changepoint was reported only
when the full-sequence statistic exceeded the corresponding finite-grid
critical value at the chosen significance level. Whenever a significant
changepoint was detected, the same SAMC test was applied once to each of the
two resulting subsegments as a descriptive post-detection stability check.
In all reported analyses, neither subsegment showed evidence of an
additional significant changepoint, and no further recursive subdivision
was performed. We emphasize that this one-step diagnostic is not proposed
as a general multiple-changepoint procedure. The theoretical and
methodological contributions of the paper concern the single-changepoint
problem, while a complete multiple-changepoint framework, including
selection of the number of changepoints, minimum-segment-length conditions,
penalized criteria such as BIC or MDL, and sequential error control, is left
for future research.
}

\textcolor{black}{Before applying the proposed methodology, we examined each angular
timestamp sequence using the runs test of
Wald and Wolfowitz~\cite{wald1940test}. For the daily lowest-price
timestamps, the resulting \(p\)-values were \(0.0949\) for Bitcoin,
\(0.3241\) for Ethereum, and \(0.1714\) for Gold. Thus, the null
hypothesis of randomness was not rejected at the \(5\%\) significance
level for any of the three sequences.
We emphasize that failure to reject the runs test does not establish
independence, stationarity, or all probabilistic conditions required by
the asymptotic theory. The test provides only a limited diagnostic for
departures from randomness. Consequently, the i.i.d. assumption used in
the theoretical development should be regarded as a working approximation
for the present financial application. These tests were conducted using the \texttt{R} package \texttt{tseries} ~(\url{https://doi.org/10.32614/CRAN.package.tseries}).}
\\

\noindent
\textbf{Empirical Findings and Market Context:}
For the  Bitcoin dataset,  the findings associated with the timestamp of occurrence of the lowest price are reported in Table-\ref{table:data_bitcoin_cp_table}, where the last column indicates the mean direction of the significant segments. The p-value has been computed with respect to the limiting distributions $K_{\infty}^{(1860)}$. From the table we see that the significant changepoint at index 970 falls end of August 2019. This timing coincides with a major market-structure event: the launch of \emph{Binance Futures} (September 13, 2019), which rapidly gained market share in perpetual swaps 
(see: \url{https://www.binance.com/en/blog/all/390692819288195072}). 
The increased role of derivative trading plausibly altered the timing of intraday liquidity and funding cycles, thereby shifting the circular mean of the occurrence of the time of daily minimum. 
For the Ethereum dataset, the findings associated with the timestamp of occurrence of the lowest price are reported in Table-\ref{table:data_ethereum_cp_table}, with the last column indicating the mean direction of the significant segments. The p-values are computed with respect to the limiting distribution $K_{\infty}^{(1409)}$.
The significant changepoint at index 896 maps to the last week of October 2018. This period overlaps with the sharp cryptocurrency market drawdown of mid-November 2018, when major assets, including ETH reached multi-year lows 
(see: \url{https://www.coindesk.com/markets/2018/11/19/bitcoin-price-hits-13-month-low-as-crypto-market-slumps}).
For the Gold price dataset, the results corresponding to the timestamps of the lowest prices are summarized in Table-\ref{table:data_gold_cp_table_low}, where the last column reports the mean direction of each significant segment. The p-values are evaluated with respect to the limiting distribution $K_{\infty}^{(1863)}$. A significant changepoint is detected at index 1310 (p-value = 0.0305), which corresponds to the last week of March 2023. This period aligns with a phase of heightened volatility in global commodity markets, during which gold prices experienced substantial fluctuations amid inflation concerns. The finding closely matches the Federal Reserve's FOMC decision to raise rates (see: \url{https://www.federalreserve.gov/newsevents/pressreleases/monetary20230322a1.htm}). Also, this period aligns with the collapse of Silicon Valley Bank  (\url{ https://www.cnbc.com/2023/03/22/feds-powell-says-svb-collapse-may-slow-the-economy-through-tighter-credit.html}).

\begin{figure}[h!]
	\centering
	\subfloat[ ]{%
		{\includegraphics[width=0.4 \textwidth, height=0.3\textwidth]{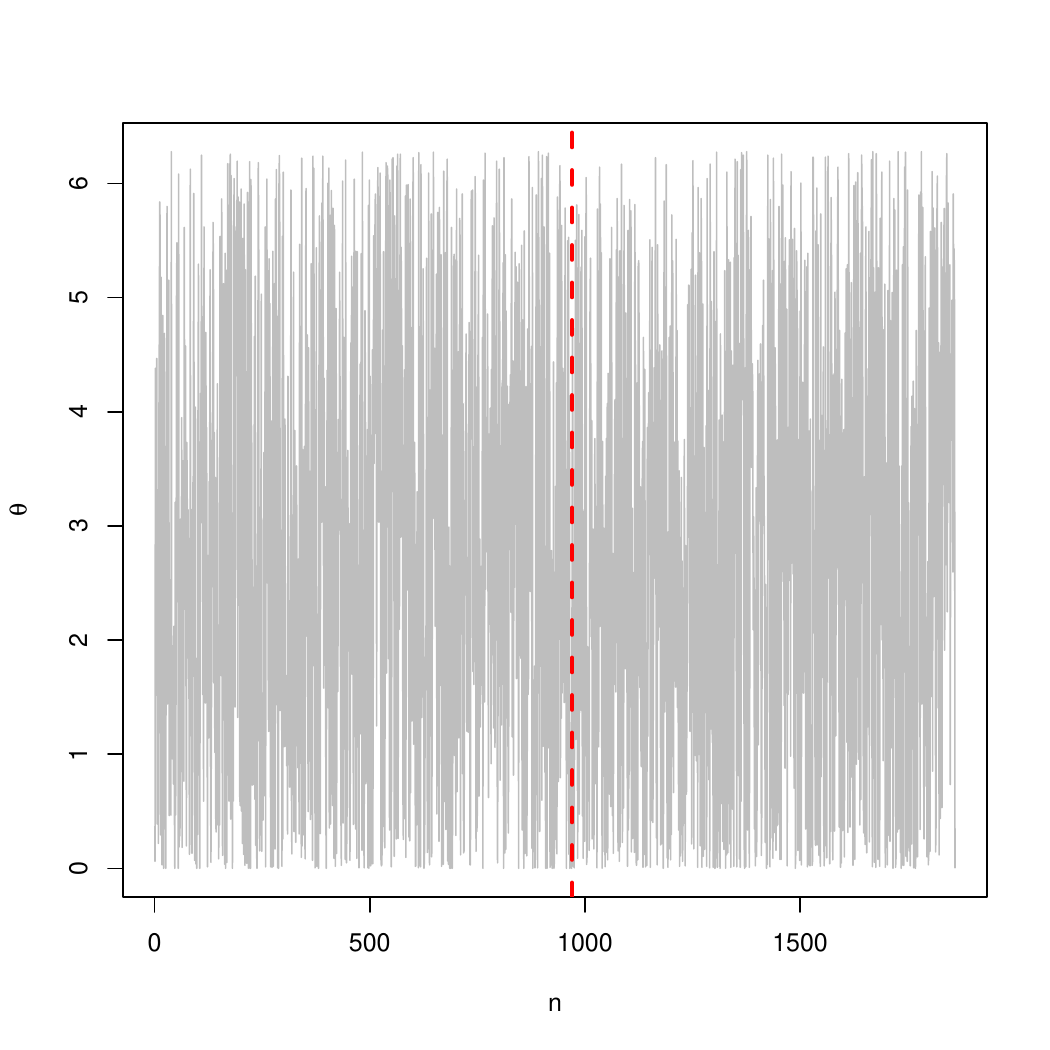}}}
	\subfloat[]{%
		{\includegraphics[width=0.4 \textwidth, height=0.3\textwidth]{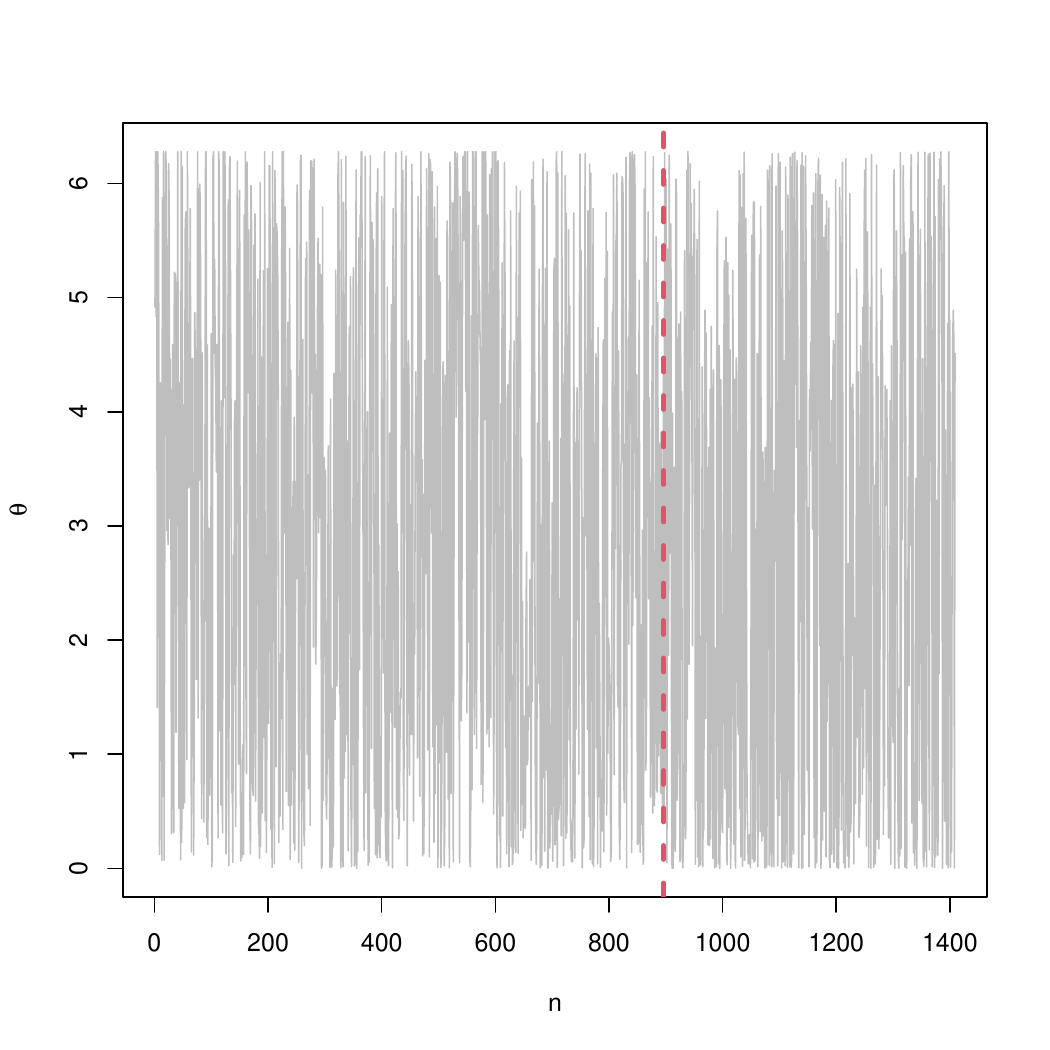}}}

	\subfloat[ ]{%
		{\includegraphics[width=0.4 \textwidth, height=0.3\textwidth]{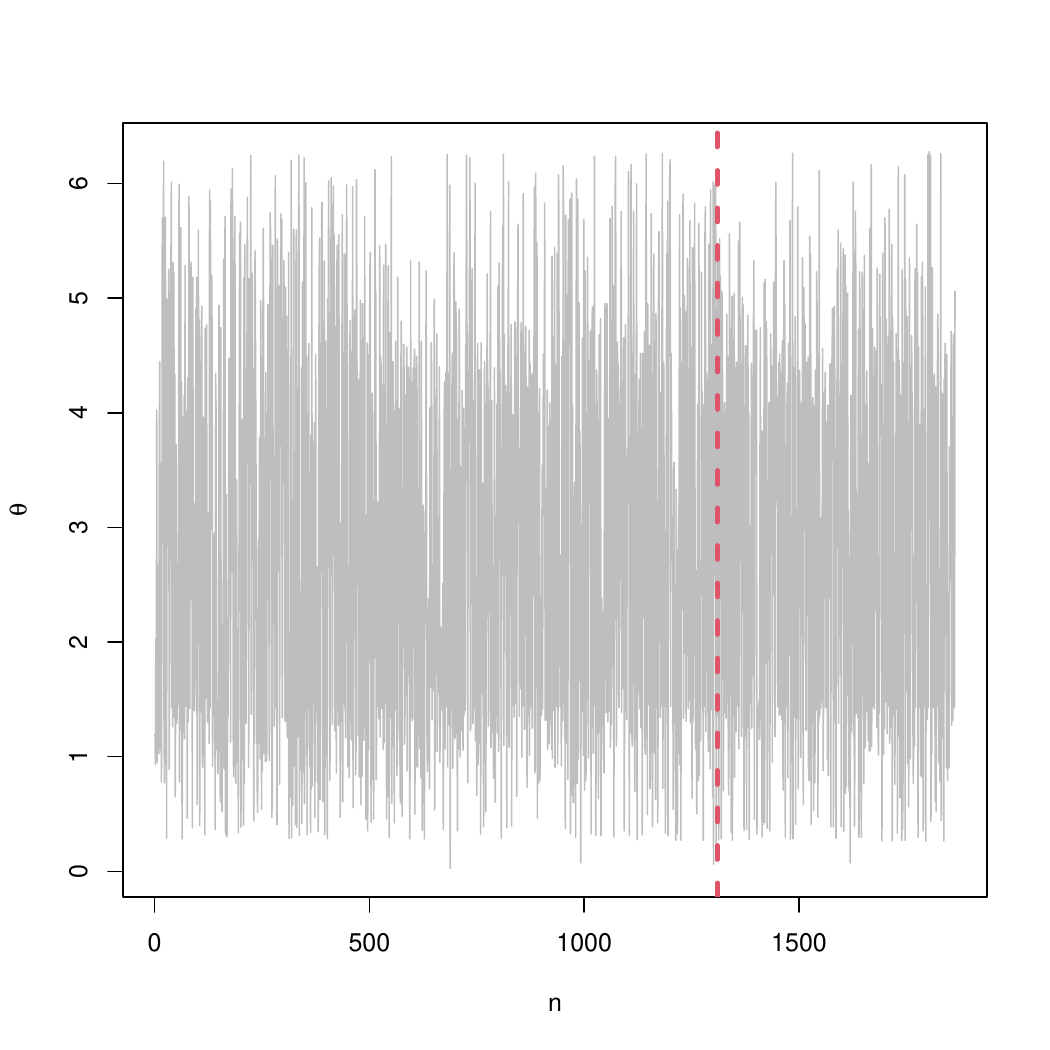}}}
	\subfloat[]{%
		{\includegraphics[width=0.4 \textwidth, height=0.3\textwidth]{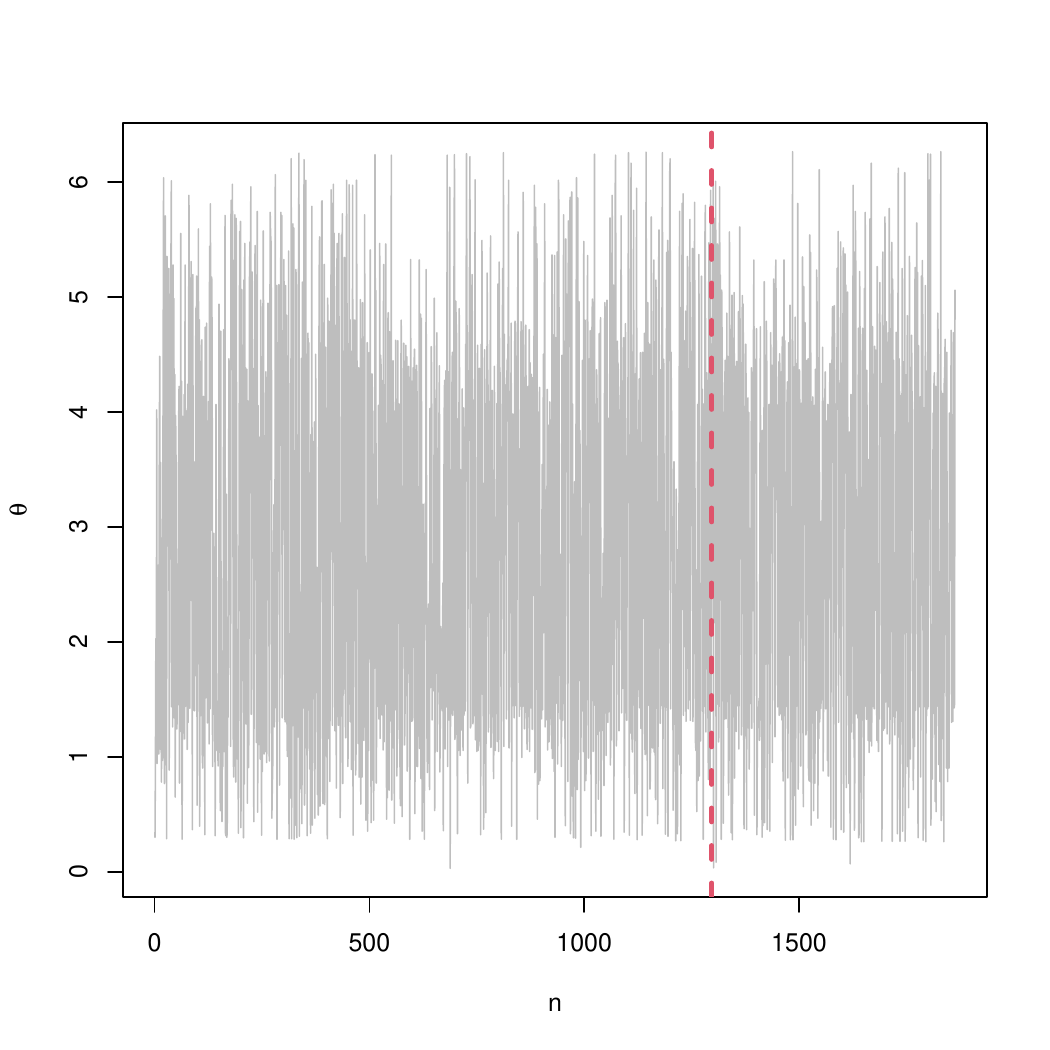}}}
\caption{
Temporal plots of the angular timestamps used in the empirical analysis:
(A) daily lowest-price timestamps for Bitcoin;
(B) daily lowest-price timestamps for Ethereum;
(C) daily lowest-price timestamps for Gold; and
(D) daily highest-price timestamps for Gold.
The red dashed vertical line in each panel denotes the changepoint estimated by the SAMC procedure.}

\label{data_analysis_bit_eth}
\end{figure}

\begin{figure}[h!]
	\centering
    	\subfloat[ ]{%
		{\includegraphics[width=6.15cm, height=6cm]{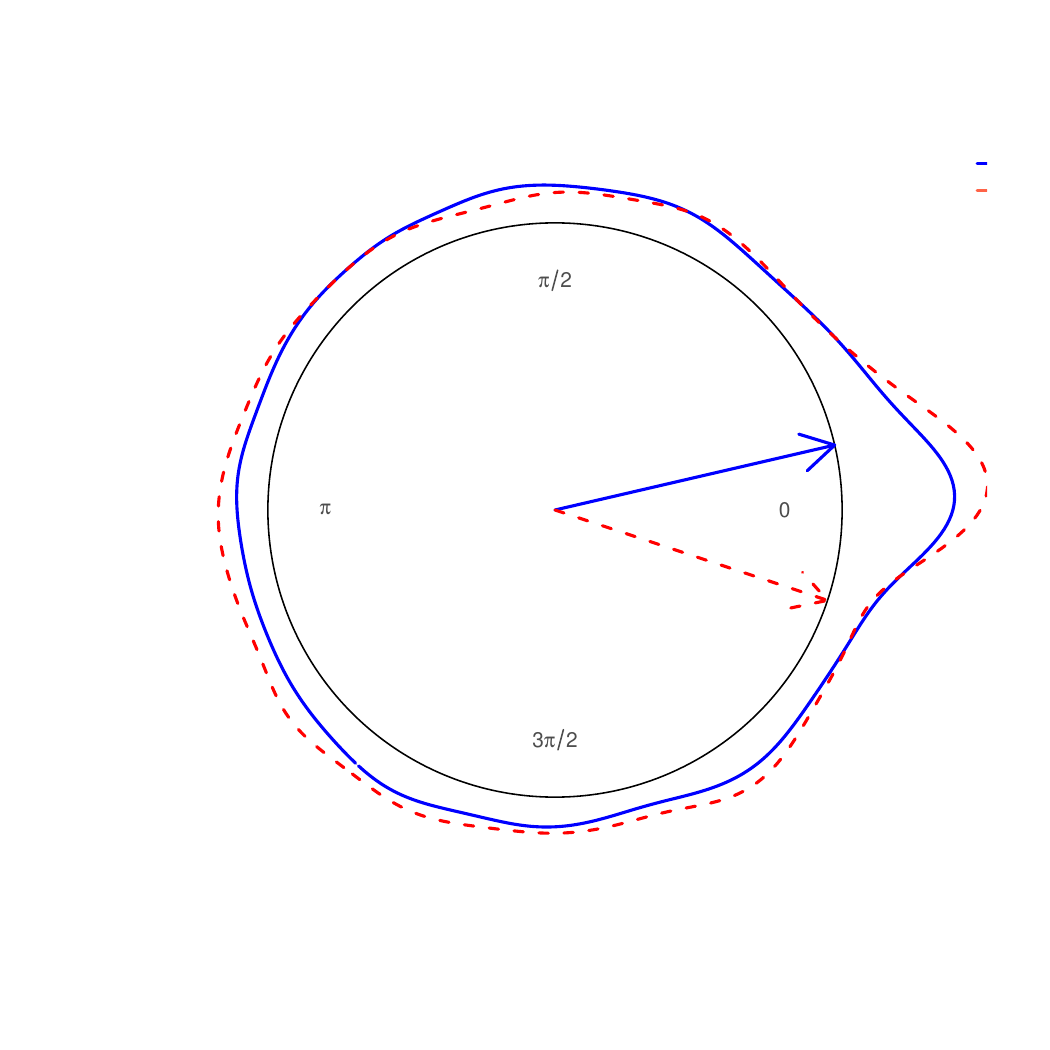}}}
	\subfloat[]{%
		{\includegraphics[width=6.15cm, height=6cm]{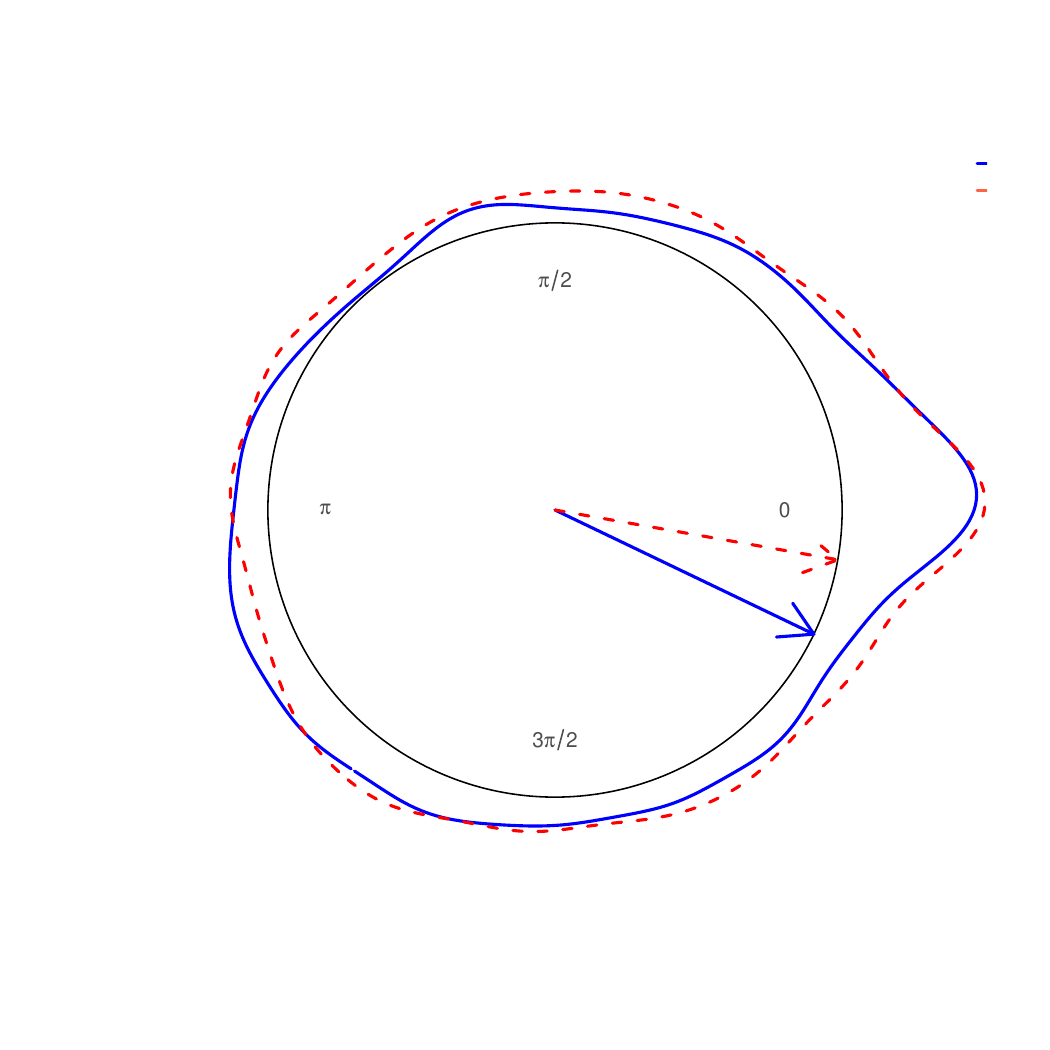}}}

	\subfloat[ ]{%
		{\includegraphics[width=6.15cm, height=6cm]{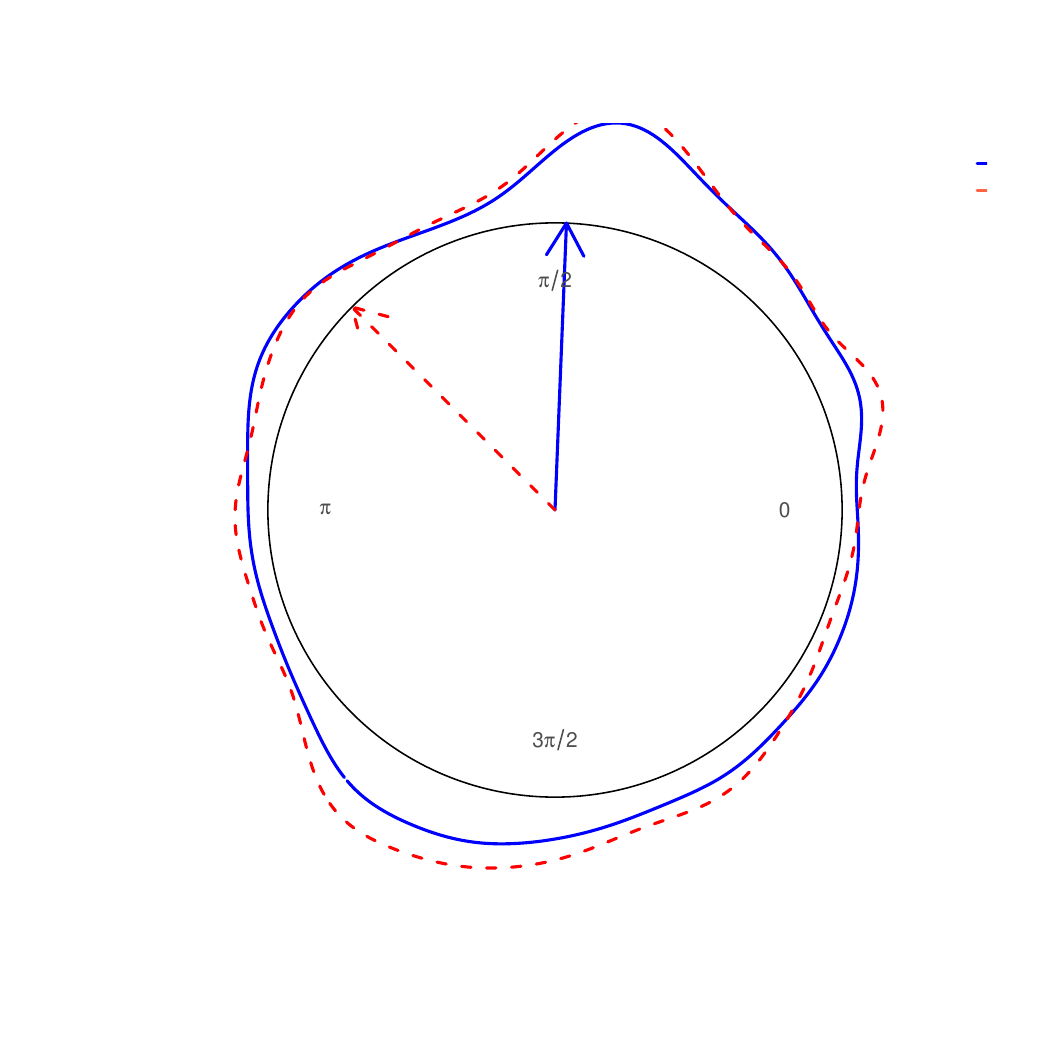}}}
	\subfloat[]{%
        {\includegraphics[width=6.15cm, height=6cm]{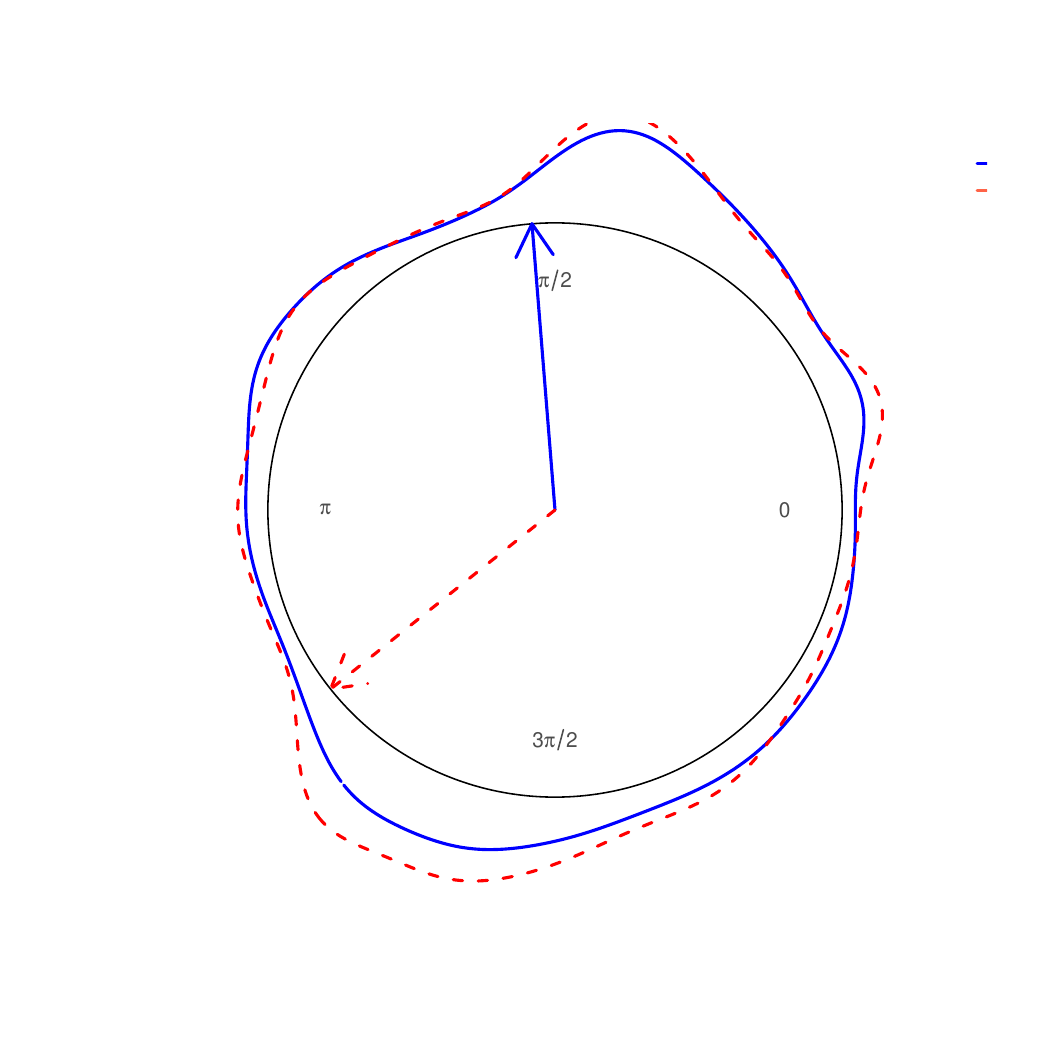}}}\hspace{5pt}
\textcolor{black}{\caption{
Circular density estimates before and after the changepoint detected by
SAMC: (A) Bitcoin daily lowest-price timestamps;
(B) Ethereum daily lowest-price timestamps;
(C) Gold daily lowest-price timestamps; and
(D) Gold daily highest-price timestamps.
The solid blue and dashed red curves represent the pre- and
post-changepoint  estimated densities, respectively. The corresponding mean
directions are shown by radial lines using the same line convention.
}
\label{fig:circular_density_cp}}
\end{figure}


\textcolor{black}{
Figure-\ref{data_analysis_bit_eth} presents the observed angular timestamps and
the estimated changepoint locations. Panels (A)--(C) correspond to the
timestamps of the daily lowest prices of Bitcoin, Ethereum, and Gold,
respectively, while panel (D) corresponds to the timestamps of the daily
highest Gold prices. In each panel, the estimated changepoint is indicated
by a red dashed vertical line.
Figure-\ref{fig:circular_density_cp} presents the corresponding circular
density estimates before and after the detected changepoint. Panels
(A)--(C) correspond to the daily lowest-price timestamps for Bitcoin,
Ethereum, and Gold, respectively, and panel (D) corresponds to the daily
highest-price timestamps for Gold. The solid blue curve represents the
pre-changepoint density, whereas the dashed red curve represents the
post-changepoint density. The corresponding mean directions are indicated
by radial lines using the same line convention.
}

When applying the propose method to the timestamps of the highest price occurrences in Bitcoin and Ethereum, there is no significant changepoint found in either of the datasets. While applying the proposed method to the timestamps of the highest price occurrences in the Gold price dataset, a significant changepoint is detected at index 1297 (p-value = 0.0181). This corresponds to the first week of March 2023. Figure-\ref{data_analysis_bit_eth}(D) and Table-\ref{table:data_gold_cp_table_high} show the result associated with the timestamps of the highest price of the Gold price dataset, while panel (H) shows the circular density plot for the daily highest prices in the Gold price dataset along with the mean direction before and after the changepoint.

We emphasize correlation rather than causation: the changepoints are identified purely from circular timestamp data, but their alignment with documented market events highlights the economic interpretability of the proposed methodology.

\section{Limitations and Directions for Future Work}
\label{future work}
While the present work introduces a distribution-free and geometry-driven framework for detecting changes in angular mean direction, several important avenues remain open for future research.  
\subsection*{A. \textit{Dependence structures in circular time series}}  

The asymptotic results developed in this paper assume that the angular
observations are independent and identically distributed. Under this
assumption, the studentized CUSUM process converges to a standard Brownian
bridge, yielding the Kolmogorov limiting distribution.

In financial applications, however, angular timestamps obtained from
time-indexed price processes may exhibit serial dependence. Although the
runs tests reported in Section-\ref{data_analysis} did not reject
randomness for the analyzed sequences, such tests cannot establish
independence, stationarity, or the complete dependence conditions required
by the functional central limit theorem. Accordingly, the i.i.d.
assumption is treated as a working approximation in the present
application.

Extending SAMC to dependent circular observations is therefore an
important direction for future research. Such an extension would require
an appropriate functional limit theorem and a dependence-adjusted
long-run variance estimator. Possible developments may consider
strong-mixing, weakly dependent, or circular time-series processes.



\subsection*{B. \textit{Changepoints in the concentration parameter and general distributional shifts}}  
The present formulation particularly addresses changes in the mean direction, predicated on the assumption of a constant concentration parameter.  In numerous real-world situations, both the dispersion of angular measurements and other characteristics of the underlying distribution may fluctuate, indicating changes in volatility, uncertainty, or the producing mechanism itself.  In financial time series, one may observe sudden shifts from a unimodal distribution to a multimodal distribution, or from a highly concentrated regime to a more dispersed structure.  Expanding the geometric framework to develop companion tests for concentration shifts and, more broadly, for distributional changes beyond the mean direction is a crucial further step.  A promising strategy involves developing joint testing algorithms capable of differentiating changes in mean, concentration, and higher-order distributional characteristics, thereby offering a more thorough instrument for detecting structural fractures in angular data.


\subsection*{C. \textit{\textcolor{black}{Extension to multiple changepoints}}}

The theoretical and methodological framework developed in this paper
concerns the detection and estimation of a single changepoint in the mean
direction of angular observations. In the empirical analysis, whenever the
full-sequence SAMC test detected a significant changepoint, the same test was
applied once to each of the two resulting subsegments as a descriptive
post-detection stability check. Neither subsegment showed evidence of an
additional significant changepoint in the reported applications, and no
further recursive subdivision was performed. These one-step checks should
not be interpreted as a general multiple-changepoint procedure or as a
method for estimating the optimal number of changepoints.

A rigorous extension of SAMC to multiple changepoints remains an important
direction for future research. Such an extension would require a
statistically principled mechanism for determining the number and locations
of changepoints, rather than an ad hoc stopping rule. Possible approaches
include penalized segmentation based on information criteria such as BIC or
MDL, or sequential testing procedures with explicit familywise error-rate
or false-discovery-rate control. A complete formulation would also require
minimum-segment-length conditions, appropriate calibration of the stopping
criterion, and theoretical investigation of consistency and localization
accuracy.

From a computational perspective, the transformed sequence
$a_i=\operatorname{sgn}(\theta_i)A_C^{(0)}(\theta_i)$
may be incorporated into scalable segmentation procedures such as wild
binary segmentation or pruned dynamic-programming methods. However, these
extensions require a separate methodological and theoretical development
and are therefore beyond the scope of the present single-changepoint study.
\section{Conclusion}
\label{conclusion}
This study has emphasized the transformative characteristics of cryptocurrencies as a digital asset category that operates via computer networks. The inherent unpredictability and speculative characteristics of cryptocurrency markets present unique difficulties and possibilities for financial modeling, risk evaluation, and regulatory supervision.  Admitting the possibilities of latent changes in the extreme values of cryptocurrencies, we need precise modeling of the data, which is streaming $24\times 7$. 
We represented the timestamps as circular data and evaluated their volatility using the intrinsic geometry of a torus. In addition, we have proposed a distribution-free test to detect the plausible existence of changepoint in the dataset. It is observed that the distribution of the test statistic follows the Kolmogorov distribution when the data is free from any change. Under the alternative hypothesis, it is proved that the proposed test is consistent.
This method was effectively utilized to analyze the timestamps of extreme prices in Bitcoin and Ethereum, showcasing its practical usability and efficacy. This study not only provides a new distribution-free approach to deal with circular data but also enhances the modeling of the timestamps of events that are important in the financial market. 

Although we illustrate the method using cryptocurrency and Gold data, the scope of the proposed framework extends well beyond financial applications. 
Any setting where observations are recorded on a circular domain can benefit from this approach. 
Examples include environmental sciences (e.g., shifts in prevailing wind or wave directions during climate events), medicine (e.g., changes in circadian rhythm patterns), and transportation studies (e.g., daily traffic flow cycles). 
In each case, detecting structural changes in angular patterns provides insights that conventional linear methods cannot capture. 
Thus, while the cryptocurrency example demonstrates applied relevance in one emerging domain, the broader contribution of this work lies in offering a principled, geometry-based changepoint methodology for circular data.

\section*{Author Contributions} \textbf{Surojit Biswas}:  Conceptualization, formulation, structural and theoretical derivations, implementation, simulations, and manuscript preparation. \textbf{Buddhananda Banerjee}: Conceptualization, formulation, structural and theoretical derivations, and manuscript preparation.

\section*{Acknowledgments}

S. Biswas gratefully acknowledges the support received through the Junior and Senior Research Fellowships provided by the Ministry of Human Resource Development (MHRD), Government of India, and IIT Kharagpur. Both authors sincerely thank the anonymous referees for their valuable comments and constructive suggestions, which substantially improved the manuscript.

\section*{Funding}

 Author B. Banerjee would like to thank the Science and Engineering Research Board (SERB), Department of Science \& Technology, Government of India, for the MATRICS grant (File number MTR/2021/000397)  for the funding of the project.

\section*{Conflict of Interest} The authors declare no conflicts of interest.

\section*{Data and Code Availability Statement.}
The Bitcoin, Ethereum, and Gold datasets analysed in this study are
publicly available from the sources cited in the empirical-analysis
section. The R code used to implement the transformation,
calculate the SAMC statistic, generate finite-grid critical values, conduct
the simulation studies, reproduce the local-power comparisons, and perform
the empirical analyses is available at
\texttt{}{\url{INSERT-PERMANENT-REPOSITORY-LINK}}. The processed data can be obtained from the corresponding author.





\bibliographystyle{apalike}
\bibliography{buddha_bib}

\section{Appendix}

\subsection{Proof of Lemma-\ref{lemma1}} \label{lemma proof}
From \cite{biswas2025semi}, we can see that 
\begin{eqnarray}
    	A_1&= rR~\phi\left[\theta +\frac{r}{R}  \sin{\theta} \right]. \label{area first segment}  \\
        A_2&=rR~[2\pi-\phi]\left[\theta +\frac{r}{R} \sin{\theta} \right]. \label{area second segment} \\
A_3&= rR~\phi\left[(2\pi-\theta)-\frac{r}{R} \sin{\theta} \right].\label{area third segment} \\
A_4&=rR \left[2\pi-\phi\right]\left[(2\pi-\theta)-\frac{r}{R} \sin{\theta} \right].
			\label{area fourth segment}
\end{eqnarray}

Using the Equation-\ref{area first segment},\ref{area second segment},\ref{area third segment} \& \ref{area fourth segment}, for $\phi=\theta$ and $\frac{r}{R}=1$ we get 
$$A_1= rR~\theta\left[\theta +  \sin{\theta} \right], A_2=rR~[2\pi-\theta]\left[\theta + \sin{\theta} \right],$$
$$A_3= rR~\theta\left[(2\pi-\theta)- \sin{\theta} \right], \mbox{~~and~~} A_4 =rR \left[2\pi-\theta\right]\left[(2\pi-\theta)- \sin{\theta} \right], \mbox{~~respectively.} $$

Now we prove that $A_1<A_2, A_1<A_3, A_1<A_4$.
Let us consider two separate cases.\\

\noindent\textbf{Claim:}
For $0 < \theta \le \pi$, $A_1$ is minimal; for $\pi < \theta \le 2\pi$, $A_4$ is minimal.

\medskip
\noindent\textbf{Case 1: $0 < \theta \le \pi$}

\begin{enumerate}
\item \emph{Compare $A_1$ and $A_2$:}
\[
\frac{1}{rR}(A_1 - A_2) = \theta(\theta + \sin\theta) - (2\pi - \theta)(\theta + \sin\theta)
= (2\theta - 2\pi)(\theta + \sin\theta).
\]
For $0 < \theta < \pi$, we have $2\theta - 2\pi < 0$ and $\theta + \sin\theta > 0$, hence $A_1 - A_2 < 0$.

\item \emph{Compare $A_1$ and $A_3$:}
\[
\begin{aligned}
\frac{1}{rR}(A_1 - A_3) &= \theta(\theta + \sin\theta) 
- \theta\left[(2\pi - \theta) - \sin\theta\right] \\
&= \theta\left[ (\theta + \sin\theta) - (2\pi - \theta - \sin\theta) \right] \\
&= \theta\left[ 2\theta + 2\sin\theta - 2\pi \right] 
= 2\theta \left[ \theta + \sin\theta - \pi \right].
\end{aligned}
\]
Since $\theta + \sin\theta < \pi$ in this range, it follows that $A_1 - A_3 < 0$.

\item \emph{Compare $A_1$ and $A_4$:}
\[
\begin{aligned}
\frac{1}{rR}(A_1 - A_4 )
&= \theta(\theta + \sin\theta) - (2\pi - \theta)\left[(2\pi - \theta) - \sin\theta\right] \\
&= \theta^2 + \theta\sin\theta - (2\pi - \theta)^2 + (2\pi - \theta)\sin\theta \\
&= \theta^2 - (2\pi - \theta)^2 + 2\pi\sin\theta \\
&= \left[ \theta^2 - \left(4\pi^2 - 4\pi\theta + \theta^2\right) \right] + 2\pi\sin\theta \\
&= 4\pi(\theta - \pi) + 2\pi\sin\theta \\
&= 2\pi\left[ 2(\theta - \pi) + \sin\theta \right].
\end{aligned}
\]
For $0 < \theta \le \pi$, the bracketed term is non-positive, hence $A_1 - A_4 \le 0$.
\end{enumerate}
Thus, $A_1$ is the minimum for $0 < \theta \le \pi$.

\noindent
\textbf{Case-2:} When $\pi<\theta\leq 2\pi.$
Similar to  Case-1, we can prove that $A_4<A_1, A_4<A_2, A_4<A_3.$ Hence, it is omitted for the sake of brevity.\\

So, we see that when $0<\theta\leq \pi,$ $A_1=\theta\left[\theta +  \sin{\theta} \right]$ is minimum, and when $\pi<\theta\leq 2\pi,$ $A_4=(2\pi-\theta)\left[2\pi-\theta- \sin{\theta}\right]=(2\pi-\theta)\left[(2\pi-\theta)+\sin{(2\pi-\theta)} \right]$ is minimum.
This completes the proof.

\subsection{\textcolor{black}{Mathematical Derivation of Remark-\ref{rmk: info_preserv}}}\label{info_preserv}
Let
$A=g(\Theta)
=
\operatorname{sgn}(\Theta)A_C^{(0)}(\Theta).$ We quantify the information-loss ratio
$1-\frac{I_A(\mu)}{I_{\mathrm{full}}(\mu)}.$ Here we show that this ratio is zero when \(I_A(\mu)\) denotes the Fisher
information contained in the complete transformed observation \(A=g(\Theta)\).
Now the transformation can be written as
\[
g(\theta)
=
\begin{cases}
\dfrac{\theta(\theta+\sin\theta)}{(2\pi)^2},
&0\leq\theta\leq\pi,\\[3mm]
-\dfrac{(2\pi-\theta)\{2\pi-(\theta+\sin\theta)\}}
{(2\pi)^2},
&\pi<\theta<2\pi.
\end{cases}
\]
On \([0,\pi]\), \(g(\theta)\) is strictly increasing from \(0\) to
\(1/4\). On \((\pi,2\pi)\), it is strictly increasing from values
arbitrarily close to \(-1/4\) to values arbitrarily close to \(0\).
The two branch ranges are disjoint. Consequently, \(g\) is one-to-one
on \([0,2\pi)\), apart from immaterial endpoint issues of probability
zero, and possesses a branch-wise inverse on $(-1/4,0)\cup[0,1/4].$
Since \(g\) does not depend on the unknown parameter \(\mu\), the density
of \(A=g(\Theta)\) is, on each branch,
\[
f_A(a;\mu)
=
f_\Theta(g^{-1}(a);\mu)
\left|
\frac{d}{da}g^{-1}(a)
\right|.
\]
The Jacobian factor is independent of \(\mu\). Therefore,
\[
\frac{\partial}{\partial\mu}
\log f_A(a;\mu)
=
\frac{\partial}{\partial\mu}
\log f_\Theta(g^{-1}(a);\mu).
\]
It follows that
$I_A(\mu)
=
E\left[
\left\{
\frac{\partial}{\partial\mu}
\log f_A(A;\mu)
\right\}^{2}
\right]
=
E\left[
\left\{
\frac{\partial}{\partial\mu}
\log f_\Theta(\Theta;\mu)
\right\}^{2}
\right]
=
I_{\mathrm{full}}(\mu).$ For the von Mises model with known concentration parameter \(\kappa\),
\[
f_\Theta(\theta;\mu,\kappa)
=
\frac{\exp\{\kappa\cos(\theta-\mu)\}}
{2\pi I_0(\kappa)},~~~~ \mbox{and}~~\frac{\partial}{\partial\mu}
\log f_\Theta(\theta;\mu,\kappa)
=
\kappa\sin(\theta-\mu).
\]

Hence,
$I_{\mathrm{full}}(\mu)
=
\kappa\frac{I_1(\kappa)}{I_0(\kappa)}.$ For a sample of size \(n\), the total Fisher information is
\[
I_{\mathrm{full},n}(\mu)
=
n\kappa\frac{I_1(\kappa)}{I_0(\kappa)}.
\]
The von Mises sufficient statistic
$\left(
\sum_{i=1}^n\cos\Theta_i,\,
\sum_{i=1}^n\sin\Theta_i
\right)$
contains this full information. Since \(g\) is invertible and independent
of \(\mu\), the complete transformed sample
$\bigl(g(\Theta_1),\ldots,g(\Theta_n)\bigr)$
contains the same information
$I_{g(\Theta_1),\ldots,g(\Theta_n)}(\mu)
=
I_{\mathrm{full},n}(\mu).$
Thus, the Fisher-information loss caused solely by the transformation is
$1-
\frac{
I_{g(\Theta_1),\ldots,g(\Theta_n)}(\mu)
}{
I_{\mathrm{full},n}(\mu)
}
=0.$

This information-preservation result concerns the complete transformed
sample and should not be interpreted as showing that the SAMC statistic is
fully efficient. SAMC does not use the complete transformed likelihood;
instead, it uses standardized partial sums of \(g(\Theta_i)\). Therefore,
any efficiency loss arises from the moment-based CUSUM aggregation rather
than from the geometric transformation.



\subsection{Proof of Lemma-\ref{bbridge_lemma}}
\label{null_proof}
\begin{proof}
Let $b_i=a_i-E(a_i)$, then clearly, $E(b_i)=0$, and $\sigma_{b}^2=\text{Var~}(b_i)<\infty,$ for $i=1,\cdots,n.$ The estimated variance for $b_i$'s is 
$$\sigma_{b}^2=Var(b_1)\widehat{=}\frac{1}{n-1} \sum_{i=1}^{n}\left(b_i-\bar{b}\right)^2=\widehat\sigma_b^2.$$ Now construct the CUSUM process as

\begin{eqnarray}
       T_{b}(k)&=&\frac{1}{\sqrt{n}~\widehat\sigma_b }\left[ \sum_{i=1}^{k} b_i-k\bar{b}   \right] \mbox{~~for all~~} k=1,\ldots,n \nonumber\\ 
       &=&\frac{\sigma_b}{\widehat\sigma_b} \frac{1}{\sqrt{n}~\sigma_b }\left[ \sum_{i=1}^{k} b_i-k\bar{b}   \right]\\ \nonumber 
       \label{cusum process zero}
\end{eqnarray}
Let us consider $u \in (0,1)$, and denote $k= \lfloor{nu}\rfloor$. Hence, from Equation-\ref{cusum process zero} we can write

\begin{equation}
  T_{b}(k)=  T_{b}(\lfloor{nu}\rfloor)= \frac{1}{\sqrt{n~} \sigma_{b}} \left[ \sum_{i=1}^{\lfloor{nu}\rfloor} b_i-u \sum_{i=1}^n b_i   \right].
    \label{concentration_asym_zero}
\end{equation}

Now using Donsker's theorem \cite[see][Ch. 16]{billingsley2013convergence}  and Slutsky's theorem \cite[see][Ch. 9]{athreya2006measure} we can write 
\begin{eqnarray}
    \frac{1}{\sqrt{n~} \sigma_{b}} \sum_{i=1}^{\lfloor{nu}\rfloor} b_i &\implies& W(u)   \text{~~~~~for all~~~~} 0<u\leq1,
\end{eqnarray}
where `$\implies$' represents weak convergence, and $W(u)$ is the Wiener processes on $[0,1]$. Hence, Equation-\ref{concentration_asym_zero} becomes
\begin{equation}
  T_{b}(k)\implies  W(u)-u~W(1)=B_0(u),
    \label{bbridge_zero}
\end{equation}
where, $B_0(u)$ is standard Brownian bridge. Noting that $\widehat\sigma_{b}^2=\widehat\sigma_{a}^2$ and $T_{b}(k)=T(k).$
It is immediate that
\begin{equation}
  T(k)\implies B_0(u) \text{~~~~~~under~~} H_0.
\end{equation}
Hence, the lemma follows.
\end{proof}

\subsection{Proof of Theorem-\ref{loc_consistancy}}
\label{loc_consistancy_proof}

\begin{proof}
Let us consider $k\leq k^{*},$ where $k^*$ is the true but unknown location of the changepoint. Also, denote $m_1 =E(a_i)$ when $i\leq k^*$, $m_2 =E(a_i)$ when $i> k^*$, and $\Delta=(m_1-m_2)$, using Equation-\ref{square_area}. Denote $b_i=a_i-E(a_i)$ for $i= 1, \cdots,n,$ such that $E(b_i) = 0$ and $\text{Var}(b_i) = \sigma_b^2 < \infty$. Then we can write the partial sum processes as
\begin{eqnarray}
\frac{1}{\sqrt{n}}\left[ \sum_{i=1}^{k} a_i - k\bar{a} \right] 
&=& \frac{1}{\sqrt{n} }\left[ \sum_{i=1}^{k} b_i-k\bar{b} \right]+ \frac{k(n-k^*)}{n^{3/2}} \Delta
\label{before_cp}
\end{eqnarray}

Similarly, for $k >k^*$ one can write the following:
\begin{eqnarray}
 \frac{1}{\sqrt{n} }\left[ \sum_{i=1}^{k} a_i-k\bar{a} \right]
&=& \frac{1}{\sqrt{n}}\left[ \sum_{i=1}^{k} b_i - k\bar{b} \right]+ \frac{k^*(n-k)}{n^{3/2}} \Delta
\label{after_cp}
\end{eqnarray}

Let $k = \lfloor n u \rfloor$ with $u \in (0,1)$ and using Equation-\ref{before_cp} and \eqref{after_cp}, consider:
\begin{align}
T_n(u) 
&= \frac{\sigma_b}{\widehat \sigma_a} \left\{ \frac{1}{\sqrt{n} \sigma_b} \left[ \sum_{i=1}^{\lfloor n u \rfloor} b_i - u \sum_{i=1}^n b_i \right] 
 + \frac{\sqrt{n} \, \min\{u,u^*\} (1 - \max\{u,u^*\}) \Delta}{\sigma_b} \right\}.
\label{eq:Un_decomp}
\end{align}

\noindent
It can be easily shown that under $H_1$, 
$$\widehat \sigma_a^2 \xrightarrow{p} \sigma_b^2 + [u^*(1-u^*)]\Delta^2 \implies \frac{\sigma_b}{\widehat \sigma_a} = O_p(1) .$$

\noindent
By the functional CLT, the process $\displaystyle\frac{1}{\sqrt{n} \sigma_b} \left[ \sum_{i=1}^{\lfloor n u \rfloor} b_i - u \sum_{i=1}^n b_i \right]$ converges weakly to a standard Brownian bridge $B_0(u)$ on $[0,1]$, which is of $O_p(1)$ uniformly in $u$. From \eqref{eq:Un_decomp} and the above orders, uniformly in $u$:
\begin{eqnarray}
    T_n(u) = \sqrt{n} \, c_*(u) + O_p(1),
    \label{consistency_order}
\end{eqnarray}
\textcolor{black}{where $c_*(u) = \frac{\min\{u,u^*\} (1 - \max\{u,u^*\}) \Delta}{\sigma_b}$. Since $\Delta \neq 0$, $c_*(u)$ is uniquely maximized at $u = u^*$. The first term of Equation-\ref{consistency_order}, that is $\sqrt{n}c_*(u)$, dominates the location of the maxima.  
By the argmax continuous mapping theorem \cite[see][]{ferger2004continuous},
\[
\hat{u}^* := \frac{\hat{k}^*}{n} \xrightarrow{p} u^*,
\]
assuring consistency in proportion. }
\end{proof}

\subsection{Proof of Corollary-\ref{consistency_corr}}
\label{alt_consistency_proof}

\begin{proof}
Let $k^*$ be the true location of the changepoint. So, we evaluate  $T_n$ from Equation-\ref{consistency_order} at $u=u^*$ as $k^* = \lfloor n u^* \rfloor$, we have
$$T_n(u^*) \;=\; \sqrt{n}\,c_*(u^*) \;+\; O_p(1) \;\xrightarrow{p}\; +\infty.$$
because
as $\sqrt{n}\to\infty$, $\sqrt{n}\,c_*(u^*)\to\infty$ (since $c_*(u^*)>0$ and ).
Consequently, for any fixed finite threshold $k_\alpha$,
\[
\mathbb{P}_{H_1}\big(U_n(u^*) \le k_\alpha\big) \longrightarrow 0.
\]
Since $\mathbb{M}_n=\max_{u}|U_n(u)| \ge |U_n(u^*)|$, it follows that
\[
\mathbb{P}_{H_1}\big(\mathbb{M}_n \le k_\alpha\big)\le \mathbb{P}_{H_1}\big(|U_n(u^*)|\le k_\alpha\big)\longrightarrow 0.
\]
Thus, the Type-II error tends to $0$ and the power tends to $1$. 
\end{proof}

\RestyleAlgo{ruled}
\SetKwComment{Comment}{/* }{ */}
\begin{algorithm}[H]
\caption{Computation of finite-sample cut-off from Kolmogorov distribution, $K^{(n)}_\infty$.}
\label{alg:algo_cutoff}

\KwIn{
$n, N \in \mathbb{N}$  \Comment*[r]{Segment size, Number of simulation } \\
\hspace{0.8cm}$\alpha \in (0,1)$  \Comment*[r]{ Significance level} 
Initialize $\mathcal{L} \gets $ an empty list 

}

\For{$i = 1$ \textbf{to} $N$}{
     Simulate null data: $z_1, \dots, z_n \sim \mathcal{N}(\mu, \sigma^2)$ \\
     Center the data: $z_t' \gets z_t - \bar{z}$\\
     Compute cumulative sum statistic:
          \[
          B_k \gets \frac{\left| \sum_{t=1}^k z_t' \right|}{\sqrt{n} \cdot \hat{\sigma}}, \quad k = 1, \dots, n
          \]
     Record test value: $\mathcal{L}_i \gets \displaystyle \max_k B_k$
}
Compute threshold: $k_\alpha$ $\gets$ empirical $(1-\alpha)$-quantile of $\mathcal{L}$

\KwResult{$k_\alpha:$ The cutoff value at $(1-\alpha)$ quantile.}
\end{algorithm}

\RestyleAlgo{ruled}
\SetKwComment{Comment}{/* }{ */}
\begin{algorithm}[H]
\caption{Power calculation algorithm for detecting a changepoint in the mean direction}
\label{alg:algo_power_kappa}

\KwIn{
$n \in \mathbb{N}$ \Comment*[r]{Segment size} 
$I \in \mathbb{N}$ \Comment*[r]{Number of iterations (large for accuracy)} 
$R = r = 1$  \Comment*[r]{Radii of the unit circle} 
$cp \in \{1,2,\ldots, n-1\}$ \Comment*[r]{Location of true changepoint} 
$\kappa \in (0, \infty)$ \Comment*[r]{Concentration parameter for von Mises dist.} 
$\mu_p, \mu_0 \in [0, 2\pi)$ \Comment*[r]{Mean shift and initial mean}
}

\For{$j=1$ \KwTo $I$}{
    Generate $\theta_{1},\ldots,\theta_{cp} \stackrel{i.i.d.}{\sim} f(\theta;\mu_0,\kappa)$\; 
    Generate $\theta_{cp+1},\ldots,\theta_{n} \stackrel{i.i.d.}{\sim} f(\theta;(\mu_0+\mu_p) \bmod 2\pi,\kappa)$\;

    Compute $sgn(\theta) \gets \begin{cases}
			1& \text{~if~~} 0\leq \theta \leq \pi\\
			-1 & \text{~if~} \theta>\pi,
		\end{cases}$\\
        
    Compute $a_i \gets sgn(\theta) \cdot A_C^{(0)}(\theta_i)$ for $i=1,\ldots,n$\;

    Estimate variance: $\widehat{\mathrm{Var}}(a) \gets \frac{1}{n-1} \sum_{i=1}^n (a_i - \bar{a})^2$, where $n\bar{a} = \sum_{i=1}^n a_i$\;

    \For{$k=1$ \KwTo $n$}{
        $T(k) \gets \left( n \, \widehat{\mathrm{Var}}(a) \right)^{-1/2} \left( \sum_{i=1}^k a_i - k \bar{a} \right)$\;
    }

    Compute $\mathbb{M}_n \gets \max_{1 \le k < n} |T(k)|$\;
    Record $Re[j] \gets \begin{cases}
			1& \text{~if~} \mathbb{M}_n > k_{\alpha}\\
			0 & \text{~if~} \mathbb{M}_n \leq k_{\alpha},
		\end{cases}$   \Comment*[r]{$k_{\alpha}$ from Algorithm-\ref{alg:algo_cutoff}} 
}
Compute empirical power: $power \gets \frac{1}{I} \sum_{j=1}^I Re[j]$\;

\KwResult{Vary $\mu_p$ while keeping $\mu_0$ fixed to obtain the power curve.}
\end{algorithm}
\begin{rmk}[Computational complexity]
\textcolor{black}{In the empirical analysis, SAMC is first applied to the complete sequence.
If a significant changepoint is detected, the two resulting subsegments are
examined once as a post-detection stability check. In all reported cases,
these second-stage tests were nonsignificant, so no further recursion was
performed.}

\textcolor{black}{For a segment of length \(m\), computation of the transformed observations,
sample mean, variance, and all CUSUM values requires \(\mathcal{O}(m)\)
operations. Hence, a single SAMC scan has complexity \(\mathcal{O}(m)\).
For the reported empirical procedure, the total cost is
$n+k+(n-k)=2n,$
and therefore remains
$\mathcal{O}(n).$}

\textcolor{black}{If the method were extended to a fully recursive binary-segmentation
procedure, the computational cost would typically be
\(\mathcal{O}(n\log n)\) under approximately balanced splits, while a
highly unbalanced worst case could reach \(\mathcal{O}(n^2)\).}

\textcolor{black}{We also clarify that the changepoint analysis is not performed on the raw
one-minute financial records. These data are first reduced to one angular
timestamp per day, resulting in sequence lengths of \(1860\) for Bitcoin,
\(1409\) for Ethereum, and \(1863\) for Gold.}

\textcolor{black}{For a general multiple-changepoint extension, scalable procedures such as
PELT, FPOP, or wild binary segmentation could be considered. However, such
extensions require additional penalization and theoretical development and
are beyond the single-changepoint scope of the present paper. }
\end{rmk}

\RestyleAlgo{ruled}
\SetKwComment{Comment}{/* }{ */}

\begin{algorithm}[H]
\caption{\textcolor{black}{SAMC test for a single changepoint in an observed angular sequence}}
\label{alg:samc_observed}

\KwIn{
$\theta_1,\ldots,\theta_n\in[0,2\pi)$
\Comment*[r]{Observed angular sequence}

$\alpha\in(0,1)$
\Comment*[r]{Significance level}

$K_{\infty}^{(n)}$
\Comment*[r]{Finite-grid critical value from Algorithm-1}
}

\If{$n<2$}{
    Stop and report that at least two observations are required;
}

Compute
\[
\operatorname{sgn}(\theta_i)
\gets
\begin{cases}
 1,  & 0\leq\theta_i\leq\pi,\\
-1,  & \pi<\theta_i<2\pi,
\end{cases}
\quad i=1,\ldots,n;
\]

Compute
\[
a_i
\gets
\operatorname{sgn}(\theta_i)
A_C^{(0)}(\theta_i),
\quad i=1,\ldots,n;
\]

Compute
\[
\bar a
\gets
\frac{1}{n}\sum_{i=1}^{n}a_i;
\]

Compute
\[
\widehat{\sigma}_a^2
\gets
\frac{1}{n-1}
\sum_{i=1}^{n}(a_i-\bar a)^2;
\]

\If{$\widehat{\sigma}_a^2=0$}{
    Stop and report that the statistic is undefined for a degenerate
    transformed sequence;
}

\For{$k=1$ \KwTo $n-1$}{
    Compute
    \[
    T(k)
    \gets
    \frac{1}{\sqrt{n}\widehat{\sigma}_a}
    \left(
    \sum_{i=1}^{k}a_i-k\bar a
    \right);
    \]
}

Compute
\[
\mathbb{M}_n
\gets
\max_{1\leq k<n}|T(k)|;
\]

Compute
\[
\widehat{k}
\gets
\min\left\{
k\in\{1,\ldots,n-1\}:
|T(k)|=\mathbb{M}_n
\right\};
\]
\Comment*[r]{The smallest maximizer is used in the event of a tie}

\eIf{$\mathbb{M}_n>K_{\infty}^{(n)}$}{
    Reject $H_0$ and report $\widehat{k}$ as the estimated changepoint;
}{
    Do not reject $H_0$ and report no significant changepoint;
}

\KwResult{
The test decision and, upon rejection of \(H_0\), the estimated
single-changepoint location \(\widehat{k}\). The algorithm then terminates.
}

\end{algorithm}
\textcolor{black}{Algorithm-\ref{alg:samc_observed} describes the operational application of
SAMC to an observed angular sequence. The procedure is formulated for one
unknown changepoint and terminates after reporting either one estimated
changepoint or no significant changepoint. Therefore, no recursive stopping
criterion or minimum-segment-length tuning parameter is involved. Algorithms-\ref{alg:algo_cutoff} and \ref{alg:algo_power_kappa}, respectively, concern the calculation of the finite-grid critical
value and the Monte Carlo evaluation of power, whereas
Algorithm-\ref{alg:samc_observed} provides the observed-data implementation.}

\begin{figure}[htbp]
    \centering
    \includegraphics [trim= 0 0 0 40, clip, width=0.9\textwidth]
    {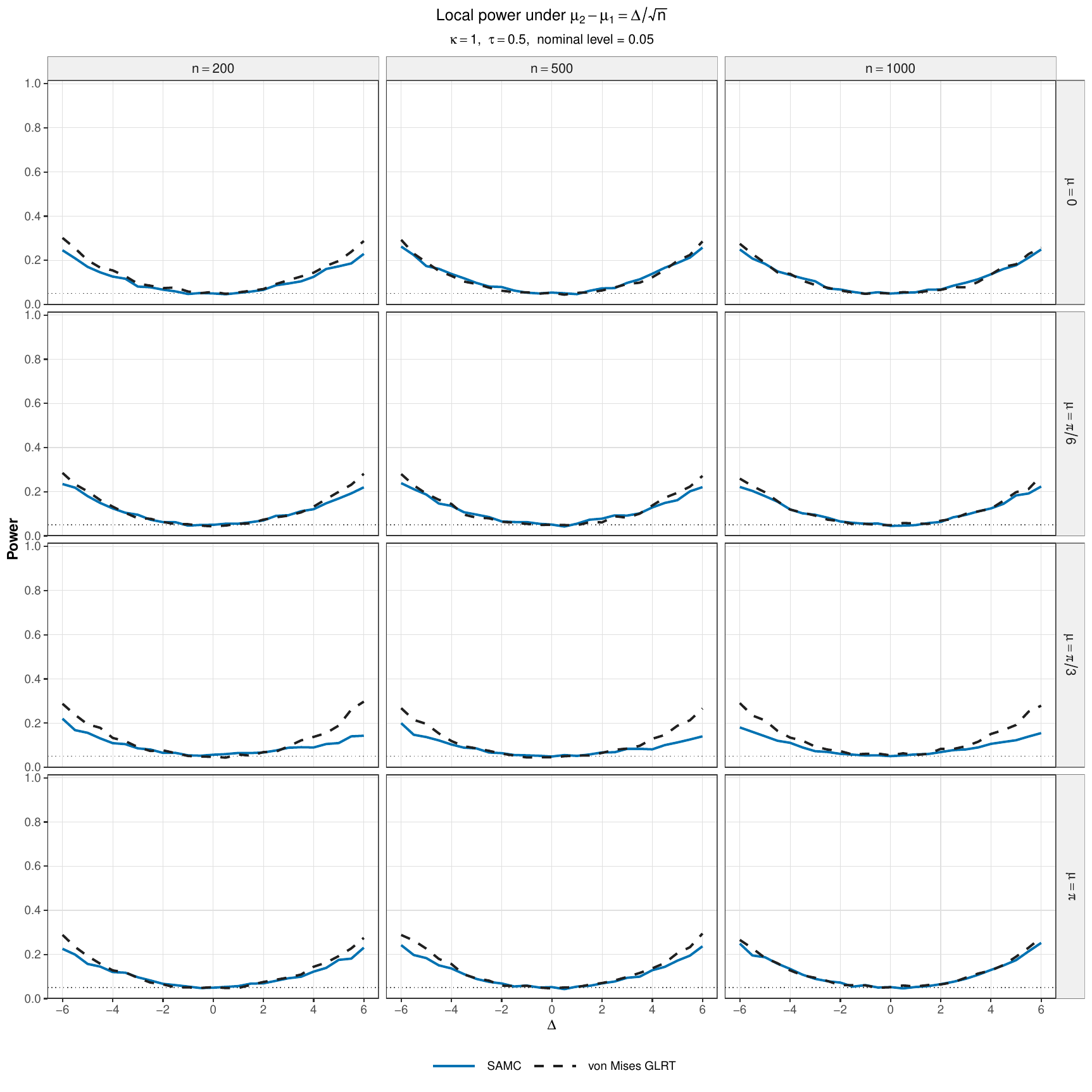}
   \textcolor{black}{ \caption{
    Empirical local-power curves at the \(5\%\) significance level  for \(\kappa=1\).
    The columns correspond to
    \(n=200,500,1000\), and the rows correspond to
    \(\mu=0,\pi/6,\pi/3,\pi\).
    The solid line represents the power curve of SAMC, and the dashed line represents the same for
    known-concentration von Mises GLRT. The horizontal dotted line marks
    the nominal significance level \(0.05\).
    }
    \label{fig:local_power_kappa1_all_n}}
\end{figure}

\begin{figure}[h!]
    \centering
    \includegraphics[trim= 0 0 0 40, clip,width=0.9\textwidth]
    {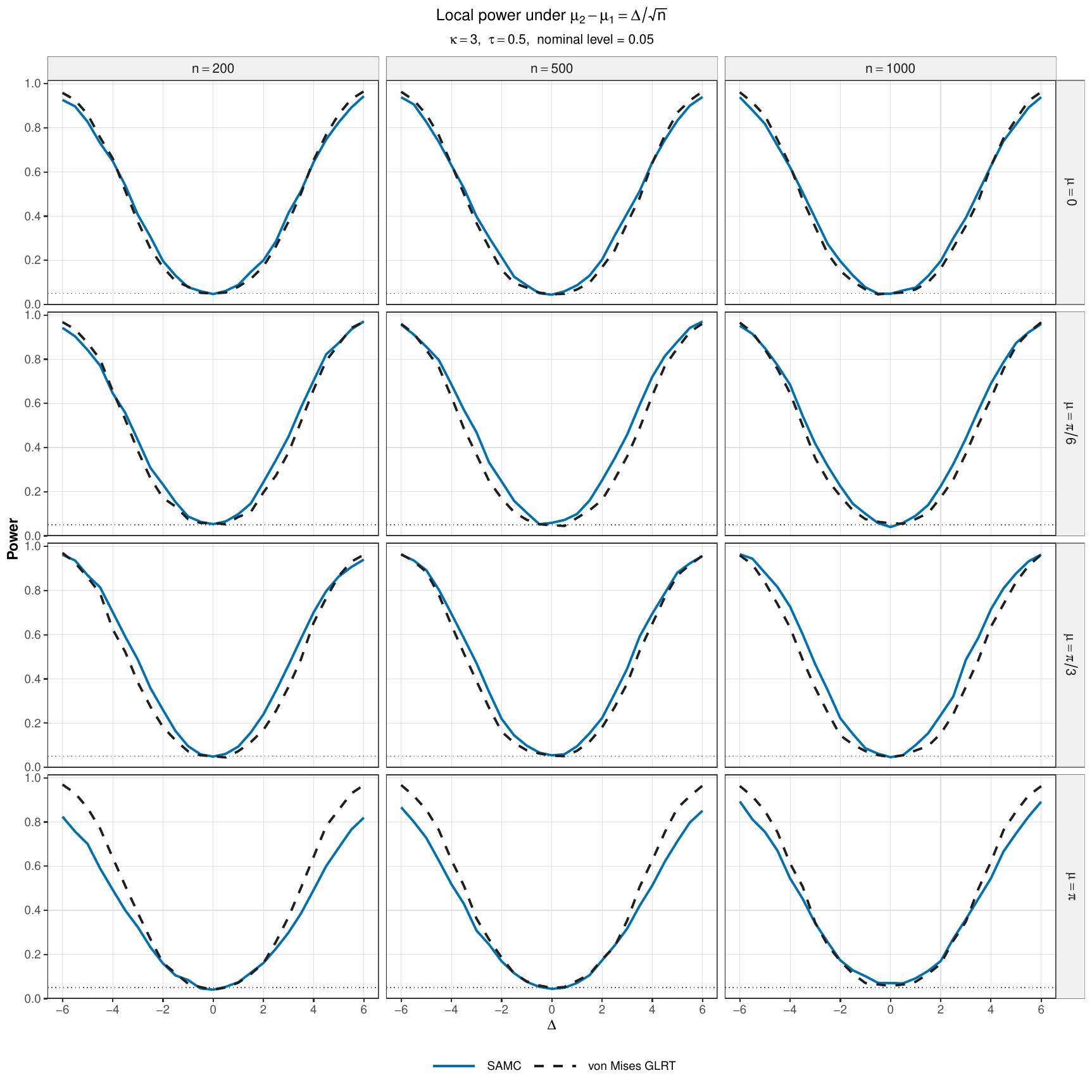}
   \textcolor{black}{ \caption{
  Empirical local-power curves at the \(5\%\) significance level  for \(\kappa=3\).
    The columns correspond to
    \(n=200,500,1000\), and the rows correspond to
    \(\mu=0,\pi/6,\pi/3,\pi\).
    The solid line represents the power curve SAMC, and the dashed line represents the same for
    known-concentration von Mises GLRT. The horizontal dotted line marks
    the nominal significance level \(0.05\).
    }
    \label{fig:local_power_kappa3_all_n}}
\end{figure}

\begin{table}[h!]
\centering
\resizebox{0.55\linewidth}{!}{%
  \begin{tabular}{|cc|c|c|c|c|l|}
\hline
& & \multicolumn{3}{ |c| }{Quantiles} \\ \hline
& & 0.90th & 0.95th & 0.99th  \\ \hline
\multicolumn{1}{ |c  }{\multirow{2}{*}{$n=50$} } &
\multicolumn{1}{ |c| }{$\kappa=0.5$} & 1.1131& 1.2425& 1.4328   \\ 
\multicolumn{1}{ |c  }{}                        &
\multicolumn{1}{ |c| }{$\kappa=1$}  & 1.1224& 1.2465& 1.4772    \\ 
\multicolumn{1}{ |c  }{}                        &
\multicolumn{1}{ |c| }{$\kappa=1.5$} & 1.1103& 1.2294& 1.4557   \\ 
\multicolumn{1}{ |c  }{}                        &
\multicolumn{1}{ |c| }{$\kappa=2$}  & 1.1002& 1.2101& 1.4089    \\ 
\multicolumn{1}{ |c  }{}                        &
\multicolumn{1}{ |c| }{$\kappa=4$} & 1.1289 & 1.2449& 1.4727   \\ 
\multicolumn{1}{ |c  }{}                        &
\multicolumn{1}{ |c| }{$\kappa=10$} & 1.1313& 1.2550& 1.4574    \\ 
\multicolumn{1}{ |c  }{}                        &
\multicolumn{1}{ |c| }{$K^{(50)}_\infty$} & 1.1456& 1.2657& 1.4815     \\ \hline

\multicolumn{1}{ |c  }{\multirow{2}{*}{$n=100$} } &
\multicolumn{1}{ |c| }{$\kappa=0.5$} & 1.1558& 1.2890& 1.5648   \\ 
\multicolumn{1}{ |c  }{}                        &
\multicolumn{1}{ |c| }{$\kappa=1$}  & 1.1625& 1.2845& 1.5710    \\ 
\multicolumn{1}{ |c  }{}                        &
\multicolumn{1}{ |c| }{$\kappa=1.5$} &1.1469& 1.2598& 1.5308    \\ 
\multicolumn{1}{ |c  }{}                        &
\multicolumn{1}{ |c| }{$\kappa=2$}  & 1.1416& 1.2614& 1.5107   \\ 
\multicolumn{1}{ |c  }{}                        &
\multicolumn{1}{ |c| }{$\kappa=4$} & 1.1423& 1.2623& 1.5206   \\ 
\multicolumn{1}{ |c  }{}                        &
\multicolumn{1}{ |c| }{$\kappa=10$} & 1.1674& 1.2928& 1.5510    \\ 
\multicolumn{1}{ |c  }{}                        &
\multicolumn{1}{ |c| }{$K^{(100)}_\infty$} & 1.1711& 1.3161& 1.5662     \\ \hline

\multicolumn{1}{ |c  }{\multirow{2}{*}{$n=200$} } &
\multicolumn{1}{ |c| }{$\kappa=0.5$} & 1.1769& 1.3136& 1.5861     \\ 
\multicolumn{1}{ |c  }{}                        &
\multicolumn{1}{ |c| }{$\kappa=1$}  & 1.1902& 1.3176& 1.5896   \\ 
\multicolumn{1}{ |c  }{}                        &
\multicolumn{1}{ |c| }{$\kappa=1.5$} &1.1557& 1.2958& 1.5344  \\ 
\multicolumn{1}{ |c  }{}                        &
\multicolumn{1}{ |c| }{$\kappa=2$}  & 1.1771& 1.3136& 1.5707   \\ 
\multicolumn{1}{ |c  }{}                        &
\multicolumn{1}{ |c| }{$\kappa=4$} & 1.1663& 1.3037& 1.5961   \\ 
\multicolumn{1}{ |c  }{}                        &
\multicolumn{1}{ |c| }{$\kappa=10$} & 1.1940& 1.3289& 1.5638   \\ 
\multicolumn{1}{ |c  }{}                        &
\multicolumn{1}{ |c| }{$K^{(200)}_\infty$} & 1.1739& 1.2966& 1.5604  \\ \hline

\multicolumn{1}{ |c  }{\multirow{2}{*}{$n=500$} } &
\multicolumn{1}{ |c| }{$\kappa=0.5$} & 1.2015& 1.3381& 1.6110     \\ 
\multicolumn{1}{ |c  }{}                        &
\multicolumn{1}{ |c| }{$\kappa=1$}  & 1.2036& 1.3292& 1.5783      \\ 
\multicolumn{1}{ |c  }{}                        &
\multicolumn{1}{ |c| }{$\kappa=1.5$} & 1.1901& 1.3031& 1.5503  \\ 
\multicolumn{1}{ |c  }{}                        &
\multicolumn{1}{ |c| }{$\kappa=2$}  & 1.1907& 1.3244& 1.5603   \\ 
\multicolumn{1}{ |c  }{}                        &
\multicolumn{1}{ |c| }{$\kappa=4$} & 1.1785& 1.3095& 1.5480   \\ 
\multicolumn{1}{ |c  }{}                        &
\multicolumn{1}{ |c| }{$\kappa=10$} & 1.1834& 1.3233& 1.5843   \\ 
\multicolumn{1}{ |c  }{}                        &
\multicolumn{1}{ |c| }{$K^{(500)}_\infty$} &1.1999& 1.3323& 1.5575  \\ \hline

\multicolumn{1}{ |c  }{\multirow{2}{*}{$n=1000$} } &
\multicolumn{1}{ |c| }{$\kappa=0.5$} & 1.2115& 1.3449& 1.6170   \\ 
\multicolumn{1}{ |c  }{}                        &
\multicolumn{1}{ |c| }{$\kappa=1$}  & 1.2135& 1.3494& 1.5827   \\ 
\multicolumn{1}{ |c  }{}                        &
\multicolumn{1}{ |c| }{$\kappa=1.5$} & 1.1992& 1.3191& 1.5551  \\ 
\multicolumn{1}{ |c  }{}                        &
\multicolumn{1}{ |c| }{$\kappa=2$}  & 1.2011& 1.3227& 1.5630   \\ 
\multicolumn{1}{ |c  }{}                        &
\multicolumn{1}{ |c| }{$\kappa=4$} & 1.1832& 1.3117& 1.5499   \\ 
\multicolumn{1}{ |c  }{}                        &
\multicolumn{1}{ |c| }{$\kappa=10$} & 1.1990& 1.3365& 1.5710   \\ 
\multicolumn{1}{ |c  }{}                        &
\multicolumn{1}{ |c| }{$K^{(1000)}_\infty$} &1.2029& 1.3471& 1.5750   \\ \hline

\end{tabular}}
\vspace{0.3cm}
\caption{Table of the quantile values   of the SAMC test statistic $\mathbb{M}_n$ under the null hypothesis, $H_{0}$ when the sample of sizes of $50, 100,200, 500,$ and $1000$ are drawn from von Mises distribution with mean direction $\mu=0,$ and different concentration parameters $\kappa=0.5,1, 1.5,2,4$ and $10$. The table also contains the cut-off values from the limiting distribution of $K_\infty^{(n)}$ (Equation-\ref{bbridge}) using Algorithm-\ref{alg:algo_cutoff} with the grid size of $n=50,100,200, 500,$ and $1000$, respectively.}
\label{table: null cut-off table concentration change}
\end{table}

\begin{table}[h!]
\centering
\resizebox{0.55\linewidth}{!}{%
  \begin{tabular}{|cc|c|c|c|c|l|}
\hline
& & \multicolumn{3}{ c| }{Quantiles} \\ \hline
& & 0.90th & 0.95th & 0.99th  \\ \hline
\multicolumn{1}{ |c  }{\multirow{2}{*}{$n=50$} } &

\multicolumn{1}{ |c| }{$\mu=\frac{\pi}{4}$} & 1.1181& 1.2388& 1.4665    \\ 
\multicolumn{1}{ |c  }{}                        &
\multicolumn{1}{ |c| }{$\mu=\frac{2\pi}{3}$}  & 1.1056 & 1.2148& 1.4161    \\ 
\multicolumn{1}{ |c  }{}                        &
\multicolumn{1}{ |c| }{$\mu=\frac{4\pi}{3}$} & 1.0889& 1.2072& 1.4354   \\ 
\multicolumn{1}{ |c  }{}                        &
\multicolumn{1}{ |c| }{$\mu=\frac{8\pi}{5}$}  & 1.1199& 1.2529 & 1.4631    \\ 
\multicolumn{1}{ |c  }{}                        &
\multicolumn{1}{ |c| }{$K^{(50)}_\infty$} & 1.1229 &1.2384& 1.4658     \\ 
\hline
\multicolumn{1}{ |c  }{\multirow{2}{*}{$n=100$} } &
\multicolumn{1}{ |c| }{$\mu=\frac{\pi}{4}$} & 1.1778& 1.3007& 1.5515    \\ \multicolumn{1}{ |c  }{}                        &
\multicolumn{1}{ |c| }{$\mu=\frac{2\pi}{3}$}  & 1.1344 &1.2493& 1.505    \\ 
\multicolumn{1}{ |c  }{}                        &
\multicolumn{1}{ |c| }{$\mu=\frac{4\pi}{3}$} & 1.1489& 1.2778& 1.5445   \\ 
\multicolumn{1}{ |c  }{}                        &
\multicolumn{1}{ |c| }{$\mu=\frac{8\pi}{5}$}  & 1.1668& 1.2913 &1.5468   \\ 
\multicolumn{1}{ |c  }{}                        &
\multicolumn{1}{ |c| }{$K^{(100)}_\infty$} & 1.1576& 1.2909& 1.5534     \\ 
\hline
\multicolumn{1}{ |c  }{\multirow{2}{*}{$n=200$} } &
\multicolumn{1}{ |c| }{$\mu=\frac{\pi}{4}$} & 1.1913& 1.3075& 1.5421  \\
\multicolumn{1}{ |c  }{}                        &
\multicolumn{1}{ |c| }{$\mu=\frac{2\pi}{3}$}  & 1.1857& 1.3051 &1.5633   \\
\multicolumn{1}{ |c  }{}                        &
\multicolumn{1}{ |c| }{$\mu=\frac{4\pi}{3}$} & 1.1582& 1.2854& 1.5279  \\ 
\multicolumn{1}{ |c  }{}                        &
\multicolumn{1}{ |c| }{$\mu=\frac{8\pi}{5}$}  & 1.1757& 1.2972& 1.5783    \\ 
\multicolumn{1}{ |c  }{}                        &
\multicolumn{1}{ |c| }{$K^{(200)}_\infty$} & 1.1817& 1.3135& 1.5625    \\
\hline
\multicolumn{1}{ |c  }{\multirow{2}{*}{$n=500$} } &
\multicolumn{1}{ |c| }{$\mu=\frac{\pi}{4}$} & 1.2045& 1.3504& 1.5993   \\ 
\multicolumn{1}{ |c  }{}                        &
\multicolumn{1}{ |c| }{$\mu=\frac{2\pi}{3}$}  & 1.1967& 1.3221& 1.5923    \\ 
\multicolumn{1}{ |c  }{}                        &
\multicolumn{1}{ |c| }{$\mu=\frac{4\pi}{3}$} & 1.1837& 1.3341& 1.6169   \\ 
\multicolumn{1}{ |c  }{}                        &
\multicolumn{1}{ |c| }{$\mu=\frac{8\pi}{5}$}  & 1.1967& 1.3242 &1.5762    \\ 
\multicolumn{1}{ |c  }{}                        &
\multicolumn{1}{ |c| }{$K^{(500)}_\infty$} & 1.1945 &1.3378 &1.5819     \\ 
\hline
\multicolumn{1}{ |c  }{\multirow{2}{*}{$n=1000$} } &
\multicolumn{1}{ |c| }{$\mu=\frac{\pi}{4}$} & 1.1999& 1.3437& 1.5953   \\ 
\multicolumn{1}{ |c  }{}                        &
\multicolumn{1}{ |c| }{$\mu=\frac{2\pi}{3}$}  & 1.2100& 1.3330 &1.6292    \\ 
\multicolumn{1}{ |c  }{}                        &
\multicolumn{1}{ |c| }{$\mu=\frac{4\pi}{3}$} & 1.2187& 1.3627& 1.6199  \\ 
\multicolumn{1}{ |c  }{}                        &
\multicolumn{1}{ |c| }{$\mu=\frac{8\pi}{5}$}  & 1.1891& 1.3212& 1.5845    \\ 
\multicolumn{1}{ |c  }{}                        &
\multicolumn{1}{ |c| }{$K^{(1000)}_\infty$} & 1.2042& 1.3507& 1.6105     \\ \hline

\end{tabular}}
\vspace{0.3cm}
\caption{Table of the quantile values  of the SAMC test statistic $\mathbb{M}_n$ under the null hypothesis, $H_{0}$ when the sample of sizes of $50, 100,200, 500,$ and $1000$ are drawn from von Mises distribution with fixed concentration parameter $\kappa=3,$ and different mean direction parameters $\mu=\frac{\pi}{4},\frac{2\pi}{3},\frac{4\pi}{3}$ and $\frac{8\pi}{5} $. The table also contains the cut-off values from the limiting distribution of $K_\infty^{(n)}$ (Equation-\ref{bbridge})  using Algorithm-\ref{alg:algo_cutoff} with the grid size of $n=50,100,200, 500,$ and $1000$, respectively.}
\label{table: null cut-off table mean change}
\end{table}

\begin{table}[h!]
	\centering
	\renewcommand{\arraystretch}{1} 
	\begin{tabular}{ |>{\centering\arraybackslash}p{2.5cm}|>{\centering\arraybackslash}p{2.5cm}|>{\centering\arraybackslash}p{2.5cm}|>{\centering\arraybackslash}p{2.5cm}| }
		\hline
		Data segment & Estimated location of changepoint & P-value &  Mean direction of significant segments in radians (degrees)\\
		\hline\hline
		1-1860 & 970 & 0.0004 &  \\
		1-970 & 840 & 0.2872 & 0.2280 (13.07) \\
		971-1860 &1490 & 0.6885 & 5.9629 (341.65)\\
		\hline
	\end{tabular}
	\vspace{0.3cm}
	\caption{ Detected changepoint in the Bitcoin data set.}
	\label{table:data_bitcoin_cp_table}
\end{table}

\begin{table}[h!]
	\centering
	\renewcommand{\arraystretch}{1} 
	\begin{tabular}{ |>{\centering\arraybackslash}p{2.5cm}|>{\centering\arraybackslash}p{2.5cm}|>{\centering\arraybackslash}p{2.5cm}|>{\centering\arraybackslash}p{2.5cm}| }
		\hline
		Data segment & Estimated location of changepoint & P-value &  Mean direction of significant segments in radians (degrees)\\
		\hline\hline
		1-1409 & 896 & 0.0004 &  \\
		1-896 & 314 & 0.0583 & 6.0168 (344.74) \\
		897-1409 & 976 & 0.4925 & 6.1040 (349.73) \\
		\hline
	\end{tabular}
	\vspace{0.3cm}
	\caption{Detected changepoint in the Ethereum data set.}
	\label{table:data_ethereum_cp_table}
\end{table}

\begin{table}[h!]
	\centering
	\renewcommand{\arraystretch}{1} 
	\begin{tabular}{ |>{\centering\arraybackslash}p{2.5cm}|>{\centering\arraybackslash}p{2.5cm}|>{\centering\arraybackslash}p{2.5cm}|>{\centering\arraybackslash}p{2.5cm}| }
		\hline
		Data segment & Estimated location of changepoint & P-value &  Mean direction of significant segments in radians (degrees)\\
		\hline\hline
		1-1863 & 1310 & 0.0004 &  \\
		1-1310 & 689 & 0.5070 & 1.5308 (87.70) \\
		1311-1863 & 1451 & 0.4642 & 87.7083(135.00) \\
		\hline
	\end{tabular}
	\vspace{0.3cm}
	\caption{Detected changepoint in the Gold price dataset for the timestamps associated with the lowest price.}
	\label{table:data_gold_cp_table_low}
\end{table}

\begin{table}[h!]
	\centering
	\renewcommand{\arraystretch}{1} 
	\begin{tabular}{ |>{\centering\arraybackslash}p{2.5cm}|>{\centering\arraybackslash}p{2.5cm}|>{\centering\arraybackslash}p{2.5cm}|>{\centering\arraybackslash}p{2.5cm}| }
		\hline
		Data segment & Estimated location of changepoint & P-value &  Mean direction of significant segments in radians (degrees)\\
		\hline\hline
		1-1863 & 1297 & 0.0181 &  \\
		1-1297 & 689 & 0.8524 & 1.6510 (94.58) \\
		1298-1867 & 1750 & 0.5044& 3.8513 (218.63) \\
		\hline
	\end{tabular}
	\vspace{0.3cm}
	\caption{Detected changepoint in the Gold price dataset for the timestamps associated with the highest price.}
	\label{table:data_gold_cp_table_high}
\end{table}

\newpage

\end{document}